\newcommand{\splitcelltab}[1]{\begin{tabular}{@{}c@{}}#1\end{tabular}} %use \\ to split
\newcommand{\wrt}{w.\,r.\,t.}
\newcommand{\eg}{e.\,g.}
\newcommand{\st}{s.\,t.} 
\newcommand{\ie}{i.\,e.}
\newcommand{\oeda}{w.\,l.\,o.\,g.}
\newcommand{\etc}{etc.}
\newcommand{\resp}{resp.}
\newcommand{\cf}{cf.}
\newcommand{\formComma}{\,\text{,}}
\newcommand{\formPeriod}{\,\text{.}}
\newcommand{\surf}{\mathcal{S}}
\newcommand{\R}{\mathbb{R}}
\newcommand{\tangent}[2][]{\tensor{\operatorname{T}\!}{#1}#2}
\newcommand{\tangentS}[1][]{\tangent[#1]{\surf}}
\newcommand{\tangentR}[1][]{\tangent[#1]{\R^3\vert_{\surf}}}
\newcommand{\tangentQS}{\tensor{\operatorname{\mathcal{Q}}\!}{^2}\surf}
\newcommand{\tangentQR}{\tensor{\operatorname{\mathcal{Q}}\!}{^2}\R^3\vert_{\surf}}
\newcommand{\tangentCQR}{\tensor{\operatorname{C_{\surf}\mathcal{Q}}\!}{^2}\R^3\vert_{\surf}}
\newcommand{\tangentCQRbot}{\tensor{\operatorname{C_{\surf}^{\bot}\mathcal{Q}}\!}{^2}\R^3\vert_{\surf}}
\newcommand{\tangentAS}[1][^2]{\tensor{\operatorname{\mathcal{A}}\!}{#1}\surf}
\newcommand{\tangentAR}[1][^2]{\tensor{\operatorname{\mathcal{A}}\!}{#1}\R^3\vert_{\surf}}
\newcommand{\tangentSymS}[1][^2]{\tensor{\operatorname{Sym}\!}{#1}\surf}
\newcommand{\tangentSymR}[1][^2]{\tensor{\operatorname{Sym}\!}{#1}\R^3\vert_{\surf}}
\newcommand{\tangentnormal}[1][]{\tangent[#1]{\normal}}
\newcommand{\tangentIdR}{\tangent{\Id}}
\newcommand{\tangentIdS}{\tangent{\IdS}}
\newcommand{\bulktangentR}[2][]{\tangent[#1]{\R^3\vert_{#2}}}
\newcommand{\paraC}{X}
\newcommand{\para}{\boldsymbol{\paraC}}
\newcommand{\normalC}{\nu}
\newcommand{\normal}{\boldsymbol{\normalC}}
\newcommand{\shopC}{I\!I}
\newcommand{\shop}{\boldsymbol{\shopC}}
\newcommand{\meanc}{\mathcal{H}}
\newcommand{\gaussc}{\mathcal{K}}
\newcommand{\dS}{\textup{d}\surf}
\newcommand{\dt}{\textup{d}t}
\newcommand{\eps}{\varepsilon}
\newcommand{\landau}{\mathcal{O}}
\newcommand{\curlyd}{\eth}
\newcommand{\gauge}[2][]{\tensor*{\curlyd}{_{#2}^{#1}}}
\newcommand{\veloder}[2][]{\tensor*{\mathfrak{V}}{_{#2}^{#1}}}
\newcommand{\Comp}{\mathsf{C}}
\newcommand{\nablaC}{\nabla_{\!\Comp}}
\newcommand{\nablahat}{\hat{\nabla}}
\newcommand{\rot}{\operatorname{rot}}
\newcommand{\Tr}{\operatorname{Tr}}
\renewcommand{\div}{\operatorname{div}}
\newcommand{\Div}{\operatorname{Div}}
\newcommand{\divC}{\div_{\!\Comp}}
\newcommand{\DivC}{\Div_{\!\Comp}}
\newcommand{\Grad}{\operatorname{Grad}}
\newcommand{\GradC}{\Grad_{\Comp}}
\newcommand{\DeltaC}{\Delta_{\Comp}}
\newcommand{\proj}{\operatorname{\Pi}}
\newcommand{\projS}[1][]{\proj_{\tangentS[#1]}}
\newcommand{\projQS}{\proj_{\tangentQS}}
\newcommand{\projQR}{\proj_{\tangentQR}}
\newcommand{\jau}{\mathcal{J}}
\newcommand{\timeD}[1][\Phi]{\operatorname{d}_{t}^{#1}\!}
\newcommand{\timeJ}{\mathfrak{J}}
\newcommand{\timeLuu}{\mathfrak{L}^{\sharp\sharp}}
\newcommand{\timeLll}{\mathfrak{L}^{\flat\flat}}
\newcommand{\Dt}[1][]{\operatorname{D}^{#1}_t \!}
\newcommand{\Dmat}{\Dt[\mfrak]}
\newcommand{\Dupp}{\Dt[\sharp]}
\newcommand{\Dlow}{\Dt[\flat]}
\newcommand{\Djau}{\Dt[\jau]}
\newcommand{\Dphi}{\Dt[\Phi]}
\newcommand{\Qdepl}{\mathcal{Q}}
\newcommand{\CQdepl}{\Qdepl_{C_{\surf}}}
\newcommand{\functional}{\mathfrak{F}}
\newcommand{\potenergy}{\mathfrak{U}}
\newcommand{\energy}{\mathfrak{E}}
\newcommand{\fluxpotential}{\mathfrak{R}}
\newcommand{\totalFb}{\overline{\Fb}}
\newcommand{\gaugeFb}{\widetilde{\Fb}}
\newcommand{\energyK}{\energy_{\textup{K}}}
\newcommand{\Cset}{\mathcal{C}} %set of constrain indices
\newcommand{\EL}{\textup{EL}}
\newcommand{\energyEL}{\potenergy_{\EL}} 
\newcommand{\HbEL}{\Hb_{\EL}}
\newcommand{\hbEL}{\hb_{\EL}}
\newcommand{\zetabEL}{\zetab_{\EL}}
\newcommand{\omegaEL}{\omega_{\EL}}
\newcommand{\SigmabEL}{\Sigmab_{\EL}}
\newcommand{\sigmabEL}{\sigmab_{\EL}}
\newcommand{\FbEL}{\Fb_{\EL}}
\newcommand{\THT}{\textup{TH}} %\TH already defined (special character not defined in OT1)
\newcommand{\energyTH}{\potenergy_{\THT}} 
\newcommand{\HbTH}{\Hb_{\THT}}
\newcommand{\hbTH}{\hb_{\THT}}
\newcommand{\omegaTH}{\omega_{\THT}}
\newcommand{\SigmabTH}{\Sigmab_{\THT}}
\newcommand{\sigmabTH}{\sigmab_{\THT}}
\newcommand{\pTH}{p_{\THT}}
\newcommand{\FbTH}{\Fb_{\THT}}
\newcommand{\BE}{\textup{BE}}
\newcommand{\energyBE}{\potenergy_{\BE}} 
\newcommand{\HbBE}{\Hb_{\BE}}
\newcommand{\SigmabBE}{\Sigmab_{\BE}}
\newcommand{\FbBE}{\Fb_{\BE}}
\newcommand{\fnorBE}{\fnor[\BE]}
\newcommand{\coeffIF}{\upsilon}
\newcommand{\IM}{\textup{IM}}
\newcommand{\energyIM}[1][\Phi]{\fluxpotential_{\IM}^{#1}} 
\newcommand{\HbIM}[1][\Phi]{\Hb_{\IM}^{#1}}
\newcommand{\hbIM}[1][\Phi]{\hb_{\IM}^{#1}}
\newcommand{\zetabIM}[1][\Phi]{\zetab_{\IM}^{#1}}
\newcommand{\omegaIM}[1][\Phi]{\omega_{\IM}^{#1}}
\newcommand{\SigmabIM}[1][\Phi]{\Sigmab_{\IM}^{#1}}
\newcommand{\sigmabIM}[1][\Phi]{\sigmab_{\IM}^{#1}}
\newcommand{\FbIM}[1][\Phi]{\Fb_{\IM}^{#1}}
\newcommand{\gaugeFbIM}[2][\Phi]{\gaugeFb_{\IM}^{#1,#2}}
\newcommand{\AC}{\textup{AC}}
\newcommand{\NV}{\textup{NV}}
\newcommand{\energyNV}{\fluxpotential_{\NV}} 
\newcommand{\HbNV}{\Hb_{\NV}}
\newcommand{\hbNV}{\hb_{\NV}}
\newcommand{\omegaNV}{\omega_{\NV}}
\newcommand{\SigmabNV}{\Sigmab_{\NV}}
\newcommand{\sigmabNV}{\sigmab_{\NV}}
\newcommand{\SC}{\textup{SC}}
\newcommand{\energySC}{\potenergy_{\SC}}
\newcommand{\lambdabSC}{\lambdab_{\SC}} 
\newcommand{\HbSC}{\Hb_{\SC}}
\newcommand{\zetabSC}{\zetab_{SC}}
\newcommand{\SigmabSC}[1][]{\Sigmab_{\SC}^{#1}}
\newcommand{\SigmabSCPhi}[1][\Phi]{\SigmabSC[#1]}
\newcommand{\varsigmabSCPhi}[1][\Phi]{\varsigmab_{\SC}^{#1}}
\newcommand{\CB}{\textup{CB}}
\newcommand{\energyCB}{\potenergy_{\CB}}
\newcommand{\lambdaCB}{\lambda_{CB}} 
\newcommand{\HbCB}{\Hb_{\CB}}
\newcommand{\omegaCB}{\omega_{CB}}
\newcommand{\IC}{\textup{IC}}
\newcommand{\energyIC}{\fluxpotential_{\IC}} 
\newcommand{\SigmabIC}{\Sigmab_{\IC}}
\newcommand{\sigmabIC}{\sigmab_{\IC}}
\newcommand{\FbIC}{\Fb_{\IC}}
\newcommand{\UN}{\textup{UN}}
\newcommand{\energyUN}{\potenergy_{\UN}}
\newcommand{\LambdabUN}{\Lambdab_{\UN}}
\newcommand{\lambdabUN}{\lambdab_{\UN}}
\newcommand{\lbUN}{\lb_{\UN}} 
\newcommand{\lambdabotUN}{\lambda^{\bot}_{\UN}}
\newcommand{\HbUN}{\Hb_{\UN}}
\newcommand{\hbUN}{\hb_{\UN}}
\newcommand{\zetabUN}{\zetab_{\UN}}
\newcommand{\omegaUN}{\omega_{\UN}}
\newcommand{\FbUN}{\Fb_{\UN}}
\newcommand{\NN}{\textup{NN}}
\newcommand{\energyNN}{\fluxpotential_{\NN}}
\newcommand{\lambdaNN}{\lambda_{\NN}}
\newcommand{\FbNN}{\Fb_{\NN}}
\newcommand{\NF}{\textup{NF}}
\newcommand{\energyNF}{\fluxpotential_{\NF}}
\newcommand{\LambdabNF}{\Lambdab_{\NF}}
\newcommand{\FbNF}{\Fb_{\NF}}
\newcommand{\HbNF}{\Hb_{\NF}}
\newcommand{\IS}{\textup{IS}}
\newcommand{\energyIS}{\potenergy_{\IS}}
\newcommand{\LambdabIS}{\Lambdab_{\IS}}
\newcommand{\FbIS}{\Fb_{\IS}}
\newcommand{\HbIS}{\Hb_{\IS}}
\newcommand{\TOT}{\textup{Tot}}
\newcommand{\energyTOT}{\energy_{\TOT}}
\newcommand{\hil}{\operatorname{L}^{\!2}}
\newcommand{\hilspace}[1]{\hil(#1)}
\newcommand{\inner}[3][0]{\left\langle #3 \right\rangle_{\hspace{-#1pt}#2}}
\newcommand{\normsq}[3][0]{\left\| #3 \right\|_{\hspace{-#1pt}#2}^2}
\newcommand{\innerH}[3][0]{\inner[#1]{\hilspace{#2}}{#3}}
\newcommand{\normHsq}[3][0]{\normsq[#1]{\hilspace{#2}}{#3}}
\newcommand{\normHQRsq}[2][0]{\normHsq[#1]{\tangentQR}{#2}}
\newcommand{\gb}{\boldsymbol{g}}
\newcommand{\Wb}{\boldsymbol{W}}
\newcommand{\wb}{\boldsymbol{w}}
\newcommand{\wnor}{w_{\bot}}
\newcommand{\qb}{\boldsymbol{q}}
\newcommand{\Qb}{\boldsymbol{Q}}
\newcommand{\rb}{\boldsymbol{r}}
\newcommand{\Rb}{\boldsymbol{R}}
\newcommand{\eb}{\boldsymbol{e}}
\newcommand{\bb}{\boldsymbol{b}}
\newcommand{\Gb}{\boldsymbol{G}}
\newcommand{\Sb}{\boldsymbol{S}}
\newcommand{\Ab}{\boldsymbol{A}}
\newcommand{\Eb}{\boldsymbol{E}}
\newcommand{\Psib}{\boldsymbol{\Psi}}
\newcommand{\Phib}{\boldsymbol{\Phi}}
\newcommand{\Vb}{\boldsymbol{V}}
\newcommand{\vb}{\boldsymbol{v}}
\newcommand{\vnor}{v_{\bot}}
\newcommand{\lambdab}{\boldsymbol{\lambda}}
\newcommand{\Lambdab}{\boldsymbol{\Lambda}}
\newcommand{\ub}{\boldsymbol{u}}
\newcommand{\sigmab}{\boldsymbol{\sigma}}
\newcommand{\etab}{\boldsymbol{\eta}}
\newcommand{\nullb}{\boldsymbol{0}}
\newcommand{\pb}{\boldsymbol{p}}
\newcommand{\Pb}{\boldsymbol{P}}
\newcommand{\Sigmab}{\boldsymbol{\Sigma}}
\newcommand{\Hb}{\boldsymbol{H}}
\newcommand{\varsigmab}{\boldsymbol{\varsigma}}
\newcommand{\Fb}{\boldsymbol{F}}
\newcommand{\fb}{\boldsymbol{f}}
\newcommand{\fnor}[1][]{f^{\bot}_{#1}}
\newcommand{\hb}{\boldsymbol{h}}
\newcommand{\zetab}{\boldsymbol{\zeta}}
\newcommand{\thetab}{\boldsymbol{\theta}}
\newcommand{\xb}{\boldsymbol{x}}
\newcommand{\ab}{\boldsymbol{a}}
\newcommand{\anor}{a_{\bot}}
\newcommand{\Thetab}{\boldsymbol{\Theta}}
\newcommand{\lb}{\boldsymbol{l}}
\newcommand{\Cb}{\boldsymbol{C}}
\newcommand{\Gbcal}{\boldsymbol{\mathcal{G}}}
\newcommand{\Abcal}{\boldsymbol{\mathcal{A}}}
\newcommand{\Acal}{\mathcal{A}} %action
\newcommand{\Tcal}{\mathcal{T}} %time intervall
\newcommand{\Vcal}{\mathcal{V}} %subtensor space
\newcommand{\Lfrak}{\mathfrak{L}} % Lagrangian
\newcommand{\Id}{\boldsymbol{Id}}
\newcommand{\IdS}{\Id_{\surf}}
\newcommand{\IdSC}{Id_{\surf}}
\newcommand{\IbxiQ}{\boldsymbol{I}_{\xi}[\Qb]}
\newcommand{\mfrak}{\mathfrak{m}}
\newcommand{\ofrak}{\mathfrak{o}}
\newcommand{\deltafrac}[2]{\frac{\delta #1}{\delta #2}}
\newcommand{\ddfrac}[2][]{\frac{\textup{d} #1}{\textup{d} #2}}
\newcommand{\ddt}[1][]{\ddfrac[#1]{t}}
\newcommand{\vdotsn}[1][n]{\ \vdots_{#1}}
\newcommand{\dbdot}{\operatorname{:}}
\patchcmd{\math@cr@@@align}{\cr}{\global\let\df@label\@empty\cr}{}{}
\newtheorem{theorem}{Theorem}
\newtheorem{lemma}[theorem]{Lemma}
\newtheorem{corollary}[theorem]{Corollary}
\newtheorem{remark}[theorem]{Remark}
\title{Beris-Edwards Models on Evolving Surfaces:\\A Lagrange-D'Alembert Approach}
\author{Ingo Nitschke$^1$ and Axel Voigt$^{1,2,3}$ \\ 
\hspace*{-0.8cm} \large{$^1$ Institute of Scientific Computing, Technische Universität Dresden, 01062 Dresden, Germany} \\
\hspace*{-0.8cm} \large{$^2$ Dresden Center for Computational Materials Science (DCMS), Technische Universität Dresden,} \\
\large{01062 Dresden, Germany} \\
\hspace*{-0.8cm} \large{$^3$ Center for Systems Biology Dresden (CSBD), Pfotenhauerstr. 108, 01307 Dresden, Germany}}
\begin{document}

\maketitle
%\tableofcontents

\begin{abstract}
    Using the Lagrange-D'Alembert principle we develop thermodynamically consistent surface Beris-Edwards models. These models couple viscous inextensible surface flow with a Landau-de Gennes-Helfrich energy and consider the simultaneous relaxation of the surface Q-tensor field and the surface, by taking hydrodynamics of the surface into account. We consider different formulations, a general model with three-dimensional surface Q-tensor dynamics and possible constraints incorporated by Lagrange multipliers and a surface conforming model with tangential anchoring of the surface Q-tensor field and possible additional constraints. In addition to different treatments of the surface Q-tensor, which introduces different coupling mechanisms with the geometric properties of the surface, we also consider different time derivatives to account for different physical interpretations of surface nematics. We relate the derived models to established models in simplified situations, compare the different formulations with respect to numerical realizations and mention potential applications in biology. 
\end{abstract}

\section{Introduction}

Nematic liquid crystals are fluids whose anisotropic molecules have long range orientational order, but no positional order. Several theories exist which attempt to capture the complexity of these materials. We here consider the Q-tensor description, proposed by de Gennes \cite{de_Gennes_book}. A Q-tensor is a symmetric, traceless $d \times d$ tensor, with $d$ the space dimension, which, if it is uniaxial, can be expressed as $ \mathbf{Q} = s (\mathbf{d} \otimes \mathbf{d} - \frac{1}{d} \Id) $, with the scalar order parameter $ s $, the director $ \mathbf{d} $, describing the direction of orientation and $ \Id $ the identity-matrix. For a comprehensive discussion on such Landau-de Gennes (or Q-tensor) models we refer, for instance to \cite{sonnet2004continuum,majumdar2010landau,ball2017mathematics,BORTHAGARAY2021313}. We are concerned with the Beris-Edwards models \cite{Beris_1994} for nematodynamics. These models couple the Landau-de Gennes models with the incompressible Navier-Stokes equations. Due to these couplings, forces in the fluid not only influence its velocity field, as in ordinary fluids, they also cause rotations and stretching of the Q-tensor field. But also the Q-tensor field influences the velocity field, e.g. by anisotropic nematic viscosity, which results in flow aligning with the Q-tensor field. For theoretical and computational results for the Beris-Edwards models for $ d = 2,3 $ we refer for instance to \cite{paicu2011energy,paicu2012energy,Abels_SIAMJMA_2014,Abels_2016}. 

A specific topic of interest are nematic shells, which are rigid particles coated with a thin film of nematic liquid crystal whose molecular orientation is subjected to a tangential anchoring. This introduces topological constraints and depending on the topology of the surface forces orientational defects to be present. Most theoretical approaches are concerned with the equilibrium configurations emerging from the non-trivial interplay between the geometry and the topology of the fixed surface and the tangential anchoring constraint \cite{Shin_PRL_2008}. Surface Landau-de Gennes models can be derived by a thin-film limit of the above mentioned three-dimensional Landau-de Gennes models \cite{Golovaty_JNS_2015,Golovaty_JNS_2017,Nitschke_2018,Nestler_2020}. In \cite{Nestler_2020} it is shown that the definition of the Q-tensor on the surface has strong implications on the physical behaviour. In the most general setting the above definition of the Q-tensor for $ d = 3 $ is only restricted to the surface. Considering in addition the tangential anchoring leads to surface conforming models. These approaches allow for a decomposition of $\mathbf{Q}$ into a tangential Q-tensor part $ \mathbf{q} $ and an eigenvalue $\beta$ in normal direction of the surface. Considering $ \beta = 0 $ essentially leads to liquid crystal properties known for $ d = 2 $ in flat space, whereas $ \beta \neq 0 $ maintains three-dimensional properties in the surface models. Also the boundary conditions specified in the thin film limit matter. They decide about possible extrinsic curvature contributions in the derived surface models  \cite{Napoli_PRE_2012,Napoli_PRL_2012,Nitschke_2018,Novack_SIAMJMA_2018}. Such contributions lead to alignment of the director $\mathbf{d}$ with principle curvature directions of the surface. Much less is known about the dynamics. In \cite{BouckNochettoYushutin_2022} a surface Beris-Edwards model is derived from a generalized Onsager principle. It combines a tangent inextensible flow, modeled by the surface Navier-Stokes equations, with a three-dimensional Q-tensor field, which is not anchored to the tangent plane of the surface. Corresponding approaches which consider the tangential anchoring of the Q-tensor can be found in \cite{Pearce_PRL_2019,Nestler_2022}. In this setting the already mentioned coupling between the flow and the Q-tensor field also depends on the geometry and the topology of the surface. This coupling is not only due to the Q-tensor field, also the tangential surface Navier-Stokes equations contain additional geometric coupling terms \cite{Reuther_2015,Koba_2017,Reuther_MMS_2018,Koba_2018,Miura_2018,Jankuhn_2018,Reuther_PF_2018}. 

In this paper we relax the assumption of a stationary surface and derive thermodynamically consistent models of a thin film of nematic liquid crystal on a deformable surface under the influence of bending forces. This leads to surface Beris-Edwards models on evolving surfaces. In addition to the various couplings between geometry and topology with the Q-tensor field and the velocity field mentioned for stationary surfaces, new couplings emerge for the evolution of the surface. Within the literature two extreme cases are considered. The first is for isotropic fluids leading to so-called fluid deformable surfaces \cite{Arroyo_2009,Torres-Sanchez_2019,Reuther_2020,Krause_2023}. These models combine the surface (Navier-)Stokes equations in tangential and normal direction \cite{Reuther_2015,Koba_2017,Reuther_MMS_2018,Koba_2018,Miura_2018,Jankuhn_2018} with bending forces resulting from Helfrich/Willmore energies \cite{Helfrich}. As a result of the tight interplay between geometry and flow field, in the presence of curvature, any shape change is accompanied by a tangential flow and, vice-versa, the surface deforms due to tangential flow. The other extreme case is the situation without flow. In \cite{Nitschke_2020} a $L^2$-gradient flow is considered for a surface Landau-de Gennes-Helfrich energy which allows for simultaneous relaxation of the Q-tensor and the surface. Similar to the situation for nematic shells, the definition of the Q-tensor and the presence of extrinsic curvature terms have strong implications on the emerging equilibrium state. While purely intrinsic models lead to a tetrahedral defect arrangement \cite{Park_EPL_1992}, taking extrinsic curvature contributions and $\beta \neq 0$ into account leads to asymmetric shapes with a planar defect arrangement \cite{Nitschke_2020}. This difference can be attributed to the tendency of the director field to align with principle curvature directions. Both, the geometric coupling terms leading to these qualitative changes for the equilibrium shape, as well as the tight interplay between geometry and flow field are also present in the surface Beris-Edwards models. However, considering the dynamics of vector- and tensor-valued quantities on evolving surfaces offers additional degrees of freedom. In contrast with surface scalar quantities, for which essentially only one time derivative is suitable for physically relevant problems, surface vector- and tensor-valued quantities offer the possibility of a versatile selection of appropriate time derivatives \cite{NitschkeVoigt_JoGaP_2022,NitschkeVoigt_2023}. In the context of gradient flows these different time derivatives require a consistent choice of the gauge of surface independence to obtain thermodynamically consistent models \cite{NitschkeSadikVoigt_A_2022}. In our context, specifying a certain time derivative for the immobility mechanism is comparable to a consistent choice in such flows. As demonstrated in \cite{NitschkeSadikVoigt_A_2022}, even if chosen consistently the evolution differs for different choices, eventually leading to different local configurations. Also this has to be accounted for by addressing surface Beris-Edwards models.

While such models might also be of interest in materials science, the most demanding applications are in biology. Elongated cells and filaments can be understood in the framework of active nematic liquid crystals. Their appearance as thin films has already been used to described multiscale biological problems in morphogenesis. A striking example is the freshwater polyp Hydra, where topological defects in the muscle fiber orientation have been shown to localize to key features of the body plan \cite{maroudas2021topological}, which has sparked growing interest in nematic liquid crystals on curved and deforming surfaces \cite{hoffmann2022theory,vafa2022active,wang2023patterning}. However, these models take the form of a minimal model. They use a conforming formulation and typically neglect dynamics and extrinsic curvature contributions. Models which account for fluid flow in this context are typically restricted to axisymmetric settings \cite{mietke2019self} or stationary surfaces \cite{Pearce_PRL_2019,Nestler_2022}. Non of these approaches discusses the choice of the time derivative or explores the consequences of the model assumptions discussed above in a systematic manner. Also theoretical work in these directions \cite{Juelicher_2018,salbreux2022theory,al2023morphodynamics} addresses these issue only partly.

The purpose of this paper is twofold. First, we derive surface Beris-Edwards models by the Lagrange-D'Alembert principle, which naturally leads to a thermodynamically consistent formulation. While only considering a one-constant approximation all other terms are considered in a relatively general way. Second, we provide two alternative formulations for these models. Next to the conforming formulation also an approach which deals with three-dimensional Q-tensor fields $ \mathbf{Q} $, restricted to the surface, and incorporating additional constraints, e.g. on conformity, by Lagrange multipliers, is considered. Based on our experience to numerically solve fluid deformable surfaces \cite{Krause_2023,Krause_PAMM_2023,BachiniKrauseNitschkeVoigt_2023} we expect this formulation to be computationally more manageable. Any active component, required for the applications discussed above are neglected. Their incorporation will be the subject of future research.

The paper is organized as follows. We provide a brief summary of the mathematical notation and calculus in Sec. \ref{sec:notation}. 
Most of it follows \cite{NitschkeSadikVoigt_A_2022,NitschkeVoigt_2023}, or extends upon it.
In Sec. \ref{sec:models} we provide a summary of the various Surface Beris-Edwards models considered and relate them to known results in the literature. The most general models can be found in Sec. \ref{sec:models_general}.
In that section we focus on bendable, nematic viscous, inextensible, hydrodynamic, biaxial Landau-de Gennes surface models. 
These models allow for the incorporation of various constraints through Lagrange parameters and their associated generalized constraint forces. Nematics are represented by Q-tensor fields defined on the surface.
Furthermore, these models can be distinguished by their prescribed nematic immobility, leading to both material and Jaumann models. In Sec. \ref{sec:model_general_cases} we consider specific cases to establish connections with existing models.
In Sec. \ref{sec:models_conforming}, we present Surface Conforming Beris-Edwards models where one eigenvector of the Q-tensor fields is constrained to align with the normal field. 
Specific cases in this context and relations with existing models are provided in Sec. \ref{sec:surface_conforming_special_cases}.
The detailed derivation of the Surface (Conforming) Beris-Edwards models is presented in Sec. \ref{sec:derivation}. 
Our derivation is based on the Lagrange-D'Alembert principle, which we introduce for our specific context in Sec. \ref{sec:lagrangedalembert}. 
In this section, we derive local equations that depend on generalized forces obtained through $ \hil $-variations. 
Additionally, we elucidate the influence of gauges of surface independence, see \cite{NitschkeSadikVoigt_A_2022}, on these forces and clarify why the resulting models are independent of the choice of these gauges.
We categorize the derived forces into Lagrangian forces (Sec. \ref{sec:lagrangian_forces}), which represent purely state forces, and flux forces (Sec. \ref{sec:flux_forces}), which are process forces capable of inducing energy exchange out of the considered open system. %CGPT bis hier
In Secs. \ref{sec:elastic} and \ref{sec:thermotropic}, we derive the elastic and thermotropic forces based on a one-constant surface Landau-de Gennes energy. Sec. \ref{sec:bending} elucidates the origin of bending forces from the Helfrich/Willmore energy.
The imposition of constraints on state variables using the Lagrange multiplier technique results in constraint forces, which are also categorized as Lagrangian forces. 
In Sec. \ref{sec:surface_conforming_constrain}, we focus on surface conformity by prohibiting mixed tangential-normal components of the Q-tensor field.
The impact of this constraint on the other forces is addressed separately in the respective subsections.
In Sec. \ref{sec:uniaxiality_constrain}, we introduce a Q-tensorial biaxiality quantity that shares the same kernel as the biaxial measurement, ultimately leading to an uniaxial constraint.
Sec. \ref{sec:isotropic_state} addresses the trivial case of an isotropic state.
In Sec. \ref{sec:immobility}, immobility forces emerge as a consequence of selecting an immobility flux potential that globally measures a nematic rate. We restrict ourselves to determining this rate using the material and Jaumann/corotational time derivative. It's worth noting that these immobilities can be seen as an overdamped consideration of the actual angular momentum of the local nematics. Choosing the material rate introduces inertia effects concerning the Euclidean embedding space, whereas the Jaumann rate affects inertia within the surface, resulting in rigid body motion invariance.
In Sec. \ref{sec:nematic_viscosity}, we contemplate an anisotropic nematic metric. Its temporal distortion gives rise to nematic viscous forces. This extends the isotropic special case that results in the common viscous force, which comprises the divergence of the surface strain rate tensor field. Constraint forces resulting from process variable restrictions are also treated as flux forces.
In Sec. \ref{sec:incompressibility}, we derive the inextensibility force to maintain a temporally constant density under mass conservation, or equivalently, local surface area conservation. Interpreting the associated Lagrange parameter as pressure, this force corresponds to the usual pressure gradient. The inextensibility constraint is the only one in this paper that we require as mandatory in the final models. However, all other forces are derived without this constraint and can also be used in compressible models.
Sec. \ref{sec:no_normal_flow} introduces the no-normal-flow force by stipulating a geometrically stationary surface, allowing only tangential material flows. We do not address its counterpart, i.e., material flows solely in the normal direction. For related models in this context, but without the inextensibility constraint, we refer to \cite{Nitschke_2020,NitschkeSadikVoigt_A_2022}. Sec. \ref{sec:no_flow} deals with the trivial dry case, where fluid motion is prohibited. We believe that the Lagrange-d'Alembert principle, when correctly applied, always yields thermodynamically consistent models. However, due to the lack of a general proof, we address the energy rate specifically for the derived Surface Beris-Edwards models in Sec. \ref{sec:energy_rate}.
In this section, we demonstrate that this rate can be exclusively described by the prescribed energy flux potentials, which govern the energy transfer between the open system and its surroundings. 
Furthermore, it is shown that the dynamics are entirely dissipative, in accordance with the thermodynamic expectations of an open passive system. In Sec. \ref{sec:phys} we consider physical implications, in particular the Parodi-Leslie relations.
We elaborate on our results in Sec. \ref{sec:discussion}, conducting a thorough examination of critical issues, and providing insights into possible future research directions.
App. \ref{app:toolbox} comprises a collection of auxiliary lemmata and identities that are required for our derivations.

\section{Notation and Preliminaries}\label{sec:notation}

We consider a notation similar to that in \cite{NitschkeSadikVoigt_A_2022,NitschkeVoigt_2023} and only rephrase it briefly. For further details, we direct readers to the aforementioned references. 
A moving surface $ \surf $ is sufficiently described by parameterizations
\begin{align}\label{eq:para}
    \para:\quad \mathcal{T}\times\mathcal{U} \rightarrow \R^3:\quad (t,y^1,y^2)\mapsto\para(t,y^1,y^2)\in\surf\vert_{t}\formComma
\end{align}
where $ \mathcal{U}\subset\R^2 $ is the open chart codomain and $ \mathcal{T}=[t_0,t_1]\subset\R $ the time domain.
For simplicity we assume that $ \para(t,\mathcal{U})=\surf\vert_{t} $ can be achieved by a single time-depending parameterization $ \para $ for all $ t\in\mathcal{T} $.
The results can be extended to the more general case considering subsets providing an open covering of $ \surf $.
The parametrization \eqref{eq:para} does not define the moving surface $ \surf $ uniquely.
It yields an observer additionally.
Due to this we call every parameterization providing the same moving surface $ \surf $ an observer parametrization. 
See \cite{NitschkeVoigt_JoGaP_2022} for more details on this topic.
How we treat the observer choice in this paper is stated below after introducing necessary tensor field spaces and spatial differential operators. 
Usually we omit arguments $ t $, $ y^1 $ and $ y^2 $, since the choice of an observer as well as the local chart are fixed but also arbitrary for our calculus.
The integration over a scalar field $ f $ is defined by
$
    \int_{\surf} f \dS := \int_{\mathcal{U}} f \mu
$
where $ \mu $ is the differential area form uniquely given by the choice of the local chart, see \cite{AbrahamMarsdenRatiu_2012}.
Functionals that map to $ \mathcal{T} \rightarrow \R $ are always of the form $ \functional = \int_{\surf} f \dS $, \ie\ their time dependencies are solely encoded in $ f $ and $ \para $.

If we are using (index) proxy notion \wrt\ a frame or local coordinates, Einstein summation is applied, 
where capital Latin indices starting with $ A $ ($ ,B,C,\ldots $) refer to the Cartesian frame $\{\eb_x, \eb_y, \eb_z\}$ and small Latin indices starting with $ i $ ($ ,j,k,\ldots $) refer to a tangential frame $ \{\partial_1\para, \partial_2\para\} $.
We write $ \tangentR[^n] $ as a shorthand for the space of sufficiently smooth $ n $-tensor fields on $ \surf $ with values in $ (\R^3)^n $.
These values could be  given as tensor proxies \wrt\ to a chosen but arbitrary frame.
For instance, it holds $ \Wb = W^{A}\eb_A = W^i\partial_i\para + W_{\bot}\normal \in\tangentR:=\tangentR[^1] $ for vector fields, where $ [W^x,W^y,W^z], [W^1,W^2,W_{\bot}] \in \R^3 $ are given at every event in $ \mathcal{T}\times\surf $, and $ \normal\in\tangentR $ is an arbitrary orientated normals field perpendicular to the surface.
The tensor transposition operator $ T_{\sigma}:\tangentR[^n]\rightarrow\tangentR[^n] $ is determined by an index permutation $ \sigma\in S_n $.
For instance, the transposition of a $ n $-tensor field $ \Rb\in\tangentR[^n] $ is given in Cartesian coordinates by
$ [  \Rb^{T_{\sigma}}]^{A_1\ldots A_n} = R^{A_{\sigma(1)}\ldots A_{\sigma(n)}} $.
We omit the $ \sigma $-notation if $ \sigma $ is a cyclically right shift, 
\ie\ $ T := T_{(1\,2\,\ldots\,n)} $ in cycle notation.
The tensor multiplication $ \tangentR[^n]\times\tangentR[^m]\rightarrow\tangentR[^{n+m-2}] $ for $ n,m \ge 1 $ are written without a dot and \wrt\ the frame, \eg\ for tangential vector fields $ \Wb=W^i\partial_i\para, \Rb=R^i\partial_i\para  $ the multiplication $ \Wb\Rb = W^i R_i = g_{ij}W^i R^j \in\tangentR[^0] $ gives the local inner product, where $ g_{ij} = (\partial_i\para)(\partial_j\para) $ is the covariant proxy of the metric for the used tangential frame.
Especially if a $ 2 $-tensor field $ \Rb\in\tangentR[^2] $ is multiplied by vector fields $ \Wb_1,\Wb_2\in\tangentR $ from left and right we also apply the bilinear notation $ \Rb(\Wb_1,\Wb_2):= \Wb_1\Rb\Wb_2 \in\tangentR[^0]$ for a better readability.
We also use the double multiplication $ \tangentR[^n]\times\tangentR[^m]\rightarrow\tangentR[^{n+m-4}] $ for $ n,m \ge 2 $ with a double dot ($ \dbdot $), where the two rear components of the left argument are contracted with the two leading components of the right argument, \eg\ $ [\Rb \dbdot \hat{\Rb}]^{A_1\ldots A_{n-2} B_1 \ldots B_{m-2}} = \tensor{R}{^{A_1,\ldots A_{n-2} C D}} \tensor{\hat{R}}{_{C D}^{B_1 \ldots B_{m-2}}}  $ \wrt\ the Cartesian frame.
At every event, $ \tangentR[^n]  $ is a vector space.
All subspaces $ \Vcal \subset \tangentR[^n]  $, which yield linear subspaces eventwisely, are called subtensor field spaces and we write $ \Vcal < \tangentR[^n] $ to emphasize this structural relation.
Every subtensor field space has its own uniquely defined orthogonal projection $ \proj_{\Vcal}: \tangentR[^n] \rightarrow \Vcal $ yielding $ \proj_{\Vcal}(\tangentR[^n])= \Vcal $.
If we do not need an explicit calculation we do not give a representation of the projection and only use its orthogonal properties.
Perhaps the most important subtensor field space on surfaces is the space of tangential $ n $-tensor fields 
$ \tangentS[^n]:= \{ \Rb\in\tangentR[^n] \mid \forall\sigma\in S_n : \normal\Rb^{T_{\sigma}} = \nullb \} $ for $ n \ge 1 $.
We use capital letters for tensor fields and small letters for tangential tensor fields usually, \eg\ $ \Rb\in\tangentR[^n] $ but $ \rb\in\tangentS[^n] $.
The space of scalar fields is $ \tangentS[^0]:=\tangentR[^0] $ and we use non-bold letters to identify its elements, \eg\ $ f, W^i, R^{AB} \in\tangentS[^0] $.
As for vector fields, we omit the index for the space of tangential vector fields $ \tangentS := \tangentS[^1] $.
Its orthogonal complement space in $ \tangentR $ is the space of surface orthogonal vector fields $ \tangentnormal\bot\tangentS $ spanned by the normal field, see Figure \ref{cd:vector_spaces}.
\begin{figure}[t]
    \centering
    \begin{tikzcd}[sep=small]
        \text{\fbox{vector fields}}
        &\text{\underline{surface orthogonal}} \arrow[r, dash, dashed, gray]
            & \tangentnormal  \arrow[dr, hook, end anchor={[xshift=-15pt]}]
                &\\
        &\text{\underline{\phantom{[general]}}} \arrow[rr, dash, dashed, gray]
            &
                & \tangentR = \tangentS\oplus\tangentnormal \\
        &\text{\underline{tangential}} \arrow[r, dash, dashed, gray]
            & \tangentS \arrow[ur, hook, end anchor={[xshift=-15pt]}]
                &
    \end{tikzcd}
     \caption{Vector field spaces and their subvector field relations.
        The naming of the spaces is generic by \textit{[row name] vector fields}, \eg\ $\tangentnormal = (\tangentS[^0])\normal$ is the space of surface orthogonal vector fields.
        Hooked arrows represent subvector field relations by:  $ \Vcal_2 \hookrightarrow \Vcal_1\ :\Leftrightarrow \ \Vcal_2 =\proj_{\Vcal_2} \Vcal_1 < \Vcal_1 $, where $ \proj_{\Vcal_2}: \Vcal_1 \rightarrow \Vcal_2 $ is the associated uniquely given orthogonal projection.}
    \label{cd:vector_spaces}
\end{figure}
The space of $ 2 $-tensor fields $ \tangentR[^2] $ possesses more subtensor field spaces that are relevant for us.
The space of symmetric 2-tensor fields is $ \tangentSymR:=\{ \Rb\in\tangentR[^2] \mid \Rb = \Rb^T \} $ 
and its orthogonal complement is the space of skew-symmetric $ 2 $-tensor fields  $ \tangentAR := \{ \Rb\in\tangentR[^2] \mid \Rb = -\Rb^T \} \bot \tangentSymR $.
The most important subtensor field space in this paper is the space of Q-tensor fields $ \tangentQR:= \{ \Rb\in\tangentSymR \mid \Tr\Rb = 0 \} $, which comprises all symmetric and trace-free $ 2 $-tensor fields, \cf\ \cite{MottramNewton_2014}.
Its orthogonal complement in $ \tangentSymR $ is the space of isotropic $ 2 $-tensor fields $ \tangentIdR := \{ f\Id \mid f\in\tangentS[^0]  \} $,
where $ \Id $ is the identity tensor field, which can be formulated in Cartesian proxy notation by $ [\Id]^{AB}= \delta^{AB} $ or with a tangential frame by $ \Id= g^{ij}\partial_i\para\otimes\partial_j\para + \normal\otimes\normal  $ for instance.
Note that we use the identity tensor field to define the trace $ \Tr: \tangentR[^{n+2}] \rightarrow \tangentR[^n] $, $ \Rb\mapsto\Tr\Rb:=\Rb\dbdot\Id $, \ie\ it contracts the two rear tensor components.
All these subtensor field spaces have their own tangential subtensor field spaces in turn, 
\ie\ $ \tangentSymS := \projS[^2](\tangentSymR) $,   $ \tangentAS := \projS[^2](\tangentAR) $, $ \tangentQS := \projS[^2](\tangentQR) $ and  
$ \tangentIdS := \projS[^2](\tangentIdR) = \{ f\IdS \mid f\in\tangentS[^0]  \}$, where $ \IdS = \Id - \normal\otimes\normal = g^{ij}\partial_i\para\otimes\partial_j\para $ is the tangential identity tensor field.
Moreover, we use the space of surface conforming Q-tensor fields 
\begin{align}\label{eq:surface_conforming_space}
    \tangentCQR 
        &:= \left\{ \Qb \in \tangentQR \mid \exists\lambda\in\tangentS[^0] : \Qb\normal = \lambda\normal \right\} \formComma
\end{align}
where $ \tangentQS < \tangentCQR < \tangentQR $ holds.
More details on surface conforming Q-tensor fields can be found in \cite{NitschkeVoigt_2023, BouckNochettoYushutin_2022} and Sec. \ref{sec:surface_conforming_constrain}.
For an overview of all these subtensor field spaces see Figure \ref{cd:ttensor_spaces}.
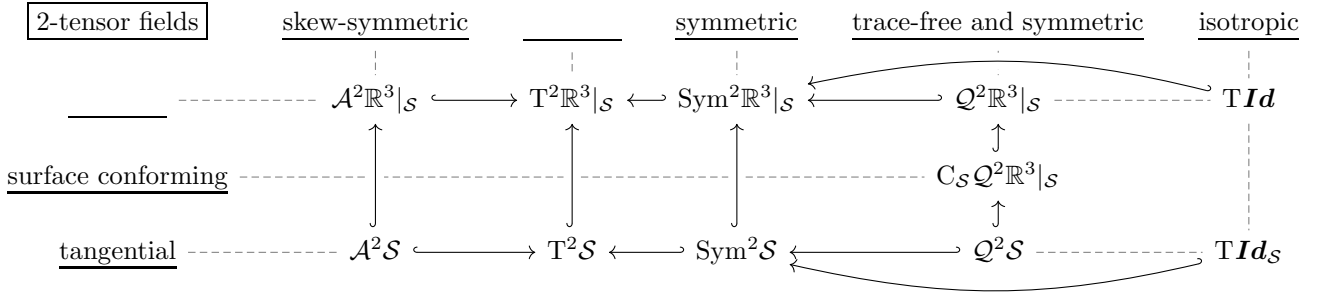
\begin{figure}[t]
    \centering
    \begin{tikzcd}[sep=small]
        \text{\fbox{2-tensor fields}}
            &\text{\underline{skew-symmetric}} \arrow[d, dash, dashed, gray]
                & \text{\underline{\phantom{[general]}}} \arrow[d, dash, dashed, gray]
                    & \text{\underline{symmetric}} \arrow[d, dash, dashed, gray]
                        & \text{\underline{trace-free and symmetric}} \arrow[d, dash, dashed, gray]
                            & \text{\underline{isotropic}} \arrow[d, dash, dashed, gray]\\
        \text{\underline{\phantom{[general]}}}\arrow[r, dash, dashed, gray] 
            &\tangentAR \arrow[r, hook]
                & \tangentR[^2] 
                    & \tangentSymR \arrow[l, hook'] 
                        & \tangentQR \arrow[l, hook'] \arrow[r, dash, dashed, gray]
                            & \tangentIdR \arrow[ll, bend right=13, hook', crossing over] \arrow[dd, dash, dashed, gray]\\
        \text{\underline{surface conforming}}\arrow[rrrr, dash, dashed, gray]
            & 
                & 
                    & 
                        & \tangentCQR \arrow[u, hook] \\
        \text{\underline{tangential}} \arrow[r, dash, dashed, gray] 
            & \tangentAS \arrow[uu, hook, crossing over] \arrow[r, hook] 
                & \tangentS[^2]\arrow[uu, hook, crossing over] 
                    & \tangentSymS \arrow[uu, hook, crossing over] \arrow[l, hook'] 
                        &  \tangentQS \arrow[l, hook'] \arrow[u, hook] \arrow[r, dash, dashed, gray]
                            & \tangentIdS \arrow[ll, bend left=13, hook']
    \end{tikzcd}
    \caption{Frequently used 2-tensor field spaces and their subtensor field relations in a commutative diagram.
        The naming of the spaces is generic by \textit{[row name] [column name] 2-tensor fields}, \eg\ $ \tangentSymS $ is the space of tangential symmetric 2-tensor fields.
        The naming component \textit{trace-free and symmetric 2-tensor} is usually combined to \textit{Q-tensor}, \eg\ $ \tangentQS $ is the space of tangential Q-tensor fields.
        Hooked arrows represent subtensor field relations by:   $ \Vcal_2 \hookrightarrow \Vcal_1\ :\Leftrightarrow \ \Vcal_2 =\proj_{\Vcal_2} \Vcal_1 < \Vcal_1 $, where $ \proj_{\Vcal_2}: \Vcal_1 \rightarrow \Vcal_2 $ is the associated uniquely given orthogonal projection.}
     \label{cd:ttensor_spaces}
\end{figure}
Tensor field spaces can entail useful orthogonal decompositions.
For instance vector fields can be decomposed in a usual way by 
\begin{align*}
\forall\Wb\in\tangentR,\exists!\wb\in\tangentS, w_{\bot}\in\tangentS[^0]
   &:\quad \Wb = \wb + w_{\bot}\normal \formPeriod
\end{align*}
Also Q-tensor fields can be approached in similar fashion by 
\begin{align}
\MoveEqLeft[10]
\forall\Qb\in\tangentQR,\exists!\qb\in\tangentQS,\etab\in\tangentS,\beta\in\tangentS[^0]:\notag\\
     &\Qb = \Qdepl(\qb,\etab,\beta) := \qb + \etab\otimes\normal + \normal\otimes\etab + \beta\left( \normal\otimes\normal -\frac{1}{2}\IdS \right) \formComma
        \label{eq:qtensor_decomposition}
\end{align}
see \cite{NitschkeVoigt_2023}.
Note that the rear orthogonal summand could be chosen differently, without any impacts on the decomposition, \eg\ 
it holds  $ \beta( \normal\otimes\normal -\frac{1}{2}\IdS ) = \tilde{\beta}( \normal\otimes\normal -\frac{1}{3}\Id ) $ for $ 3\beta = 2\tilde{\beta} $.

The local inner product $ \inner{\tangentR[^n]}{\Rb_1,\Rb_2}\in\tangentS[^0] $ on $ n $-tensor fields $ \Rb_1,\Rb_2\in\tangentR[^n] $ is defined by a fully contraction of both tensor fields locally against each other in the usual way, 
\eg\  $ \inner{\tangentR}{\Rb_1,\Rb_2} = \Rb_1 \Rb_2 $ for vector fields or $ \inner{\tangentR[^2]}{\Rb_1,\Rb_2} = \Rb_1 \dbdot \Rb_2 $ for $ 2 $-tensor fields.
Although this product can be used for subtensor field spaces $ \Vcal \le \tangentR[^n] $ without any restrictions, we often label it with $ \Vcal $, \ie\ $ \inner{\Vcal}{\Rb_1,\Rb_2} = \inner{\tangentR[^n]}{\Rb_1,\Rb_2} $ for all $ \Rb_1,\Rb_2\in\Vcal $.
Of course, using a customized base spanning $ \Vcal $ to evaluate the inner product is equivalent to use a frame for $ \tangentR[^n] $.
The rules of linear algebra remain locally valid.
We define the norm by $ \| \Rb \|_{\Vcal} := \sqrt{\inner{\Vcal}{\Rb,\Rb}} $ for all $ \Rb\in\Vcal $.
The spatial global inner product and norm is given by $ \innerH{\Vcal}{\Rb_1,\Rb_2}:=\int_{\surf}  \inner{\Vcal}{\Rb_1,\Rb_2} \dS $ and $ \| \Rb \|_{\hil(\Vcal)} := \sqrt{\innerH{\Vcal}{\Rb,\Rb}} $ for all $ \Rb_1,\Rb_2, \Rb\in\Vcal\le\tangentR[^n] $.

Spatial derivatives on tensor fields are based on two different differential operators in this paper.
One of them is covariant derivative $ \nabla:\tangentS[^n] \rightarrow \tangentS[^{n+1}]  $ for tangential tensor fields given by the Christoffel symbols
$ \Gamma_{ijk} = \frac{1}{2}(\partial_i g_{jk} + \partial_j g_{ik} - \partial_k g_{ij} ) $ in the usual way.
For instance it is $ \nabla\wb = g^{kl}\tensor{w}{^i_{|k}} \partial_i\para \otimes \partial_l \para $, where $ \tensor{w}{^i_{|k}} = \partial_k w^i + \Gamma_{kj}^{i} w^j $
and $ \Gamma_{kj}^{i} = g^{im}\Gamma_{kjm} $, for tangential vector fields $ \wb\in\tangentS $.
The other one is the (Cartesian) componentwise (surface) derivative $ \nablaC:\tangentR[^n] \rightarrow  \tangentR[^n]\otimes\tangentS $ 
defined by $ \nablaC\Rb := (\IdS\nabla_{\R^3})\Rb  $, \ie\ $ \Rb $ is differentiated by the embedding space derivative solely in tangential direction.
Note that $ \IdS\nabla_{\R^3} $ has to be read together and cannot be evaluated separately a priori. 
If one still wants to do it, then the argument has to be extended in normal direction sufficiently smooth.
However $ \nablaC $ is independent \wrt\ such extensions.
For scalar fields $ f\in\tangentS[^0]$ hold $ \nablaC f = \nabla f $.
This justifies the naming ``componentwise'', since we get $ [\nablaC\Rb]^{A_1 \ldots A_n B} \eb_B = \nabla R^{A_1 \ldots A_n}\in\tangentS $ \wrt\ a Cartesian frame for $ n $-tensor fields $ \Rb\in\tangentR[^n] $.
The coordinates independent relation between covariant and componentwise derivative for vector fields $ \Wb=\wb +\wnor\normal\in\tangentR $ is
$ \nablaC\Wb = \nabla\wb - \wnor\shop + \normal\otimes(\nabla\wnor + \shop\wb) $,
where $ \shop := -\nablaC\normal \in \tangentSymS $ is the 
shape operator\footnote{We do not distinguish between ``shape operator'', ``second fundamental form'' or ``(extended) Weingarten map''. They are all equal up to isomorphisms, which are invariant \wrt\ our calculus anyway.}.
We specify such relations for $ n>1 $ only at places we need them, since the syntactical complexity of these formulations grows exponentially with $ n $.
The covariant divergence on tangential $ n $-tensor fields $ \div:=-\nabla^*=\Tr\circ\nabla:\tangentS[^n]\rightarrow\tangentS[^{n-1}] $ is defined by the \mbox{$ \hil $-adjoint}, \resp\ equivalently as the trace, of the covariant derivative for $ n>1 $.
Unfortunately, a similar relation between the \mbox{$ \hil $-adjoint} and the trace does not hold for the componentwise derivative generally.
Therefore we define the \mbox{$ \hil $-adjoint} componentwise divergence $\divC:= -\nablaC^* $ and the trace componentwise divergence $ \DivC:=\Tr\circ\nablaC:\tangentR[^n]\rightarrow\tangentR[^{n-1}]  $ separately.  
However, on right-sided tangential tensor fields $ \tangentR[^{n-1}]\otimes\tangentS $ they are equal, see Lemma \ref{lem:tensor_divC_to_DivC}.
In this paper that is always the case for stress tensor fields $ \Sigmab= \sigmab + \normal\otimes\etab $, with $ \sigmab\in\tangentS[^2] $ and $ \etab\in\tangentS $, where the relation to the covariant divergence is given by
\begin{align}\label{eq:DivCSigma_decomposed}
    \DivC\Sigmab = \divC\Sigmab = \div\sigmab - \shop\etab + ( \sigmab\dbdot\shop + \div\etab )\normal \formPeriod
\end{align}
For  vector fields $ \Wb=\wb +\wnor\normal\in\tangentR $ holds $ \divC\Wb = \div\wb $ and $ \DivC\Wb = \div\wb - \wnor\meanc $, where $ \meanc:=\Tr\shop\in\tangentS[^0] $ is the mean curvature.
Due to the general incompatibility between the \mbox{$ \hil $-adjoint} and the trace componentwise divergence, we define the adjoint componentwise gradient $ \GradC:=-\DivC^*:\tangentR[^n] \rightarrow  \tangentR[^{n+1}] $.
We use this operator only for scalar fields $ f\in\tangentS[^0] $, where 
\begin{align} \label{eq:GradC}
    \GradC f = \DivC (f\IdS) = \nabla f + \meanc f \normal
\end{align}
holds.
The covariant Bochner-Laplace operator is given by $ \Delta:= \div\circ\nabla:\tangentS[^n] \rightarrow \tangentS[^n] $
and the componentwise Laplace operator by  $ \DeltaC:= \divC\circ\nablaC = \DivC\circ\nablaC:\tangentR[^n] \rightarrow \tangentR[^n] $.
The relation between $ \DeltaC $ and covariant differential operators depends on $ n $. 
Only for scalar fields ($ n=0 $) holds simply $ \DeltaC f = \Delta f $.
For $ n=2 $ a relation is given in \cite{NitschkeVoigt_2023}.
We define the surface Levi-Civita tensor $ \Eb\in\tangentAS[^2] $ by its covariant proxy field components $ E_{ij} := \sqrt{\vert \gb[\para] \vert}\epsilon_{ij} $, where $ \epsilon_{ij} $ are the 2-dimensional Levi-Civita symbols.
As a consequence the $ \R^3 $-cross product on tangential vector fields  $ \rb_1, \rb_2 \in\tangentS $ yields $\rb_1\times\rb_2  = \Eb(\rb_1, \rb_2)\normal\in\tangentnormal $.
The covariant curl of a tangential vector field $ \rb\in\tangentS $ is defined by  $\rot\rb := - \Eb\dbdot\nabla\rb \in\tangentS[^0] $.
In addition to the usual methods for calculating the Gaussian curvature $ \gaussc\in\tangentS[^0] $, such as $ \gaussc = \det\{\shopC^i_j\} $, 
we can also apply the Levi-Civita tensor twice to the Riemann curvature tensor $ \boldsymbol{\mathcal{R}}\in\tangentS[^4] $, \ie\ $ \gaussc = \frac{1}{4}\Eb\dbdot\boldsymbol{\mathcal{R}}\dbdot\Eb $.

We use the surface deformation gradient $ \Gbcal[\Wb] := \nablaC\Wb - \normal\nablaC\Wb\otimes\normal\in\tangentR[^2] $
for deformation directions $ \Wb\in\tangentR $ \st\ $ \para + \eps\Wb $ describes the parameterization of a linear perturbed surface with $ \eps>0 $ sufficiently small. 
In terms of covariant derivatives, this yields
\begin{align}\label{eq:Gbcal}
\Gbcal[\Wb] 
    &= \nabla\wb - \wnor\shop + \normal\otimes\left( \nabla\wnor + \shop\wb \right) - \left( \nabla\wnor + \shop\wb \right)\otimes\normal
\end{align}
for decomposition $ \Wb=\wb+\wnor\normal $.
The tangential surface deformation gradient is defined by $ \Gb[\Wb] := \projS[^2]\Gbcal[\Wb] $ and yields
\begin{align}\label{eq:Gb}
\Gb[\Wb] &= \IdS\nablaC\Wb = \nabla\wb - \wnor\shop \formPeriod
\end{align}
Note that the non-tangential part of $ \Gbcal[\Wb]  $ is always skew-symmetric, 
\ie\ it holds $ \Gbcal[\Wb] - \Gb[\Wb]\in\tangentAR $ for all $ \Wb\in\tangentR $. 
As a consequence, the symmetric surface deformation gradient is given by
\begin{align}\label{eq:Sb}
    \Sb[\Wb]
        &:= \frac{1}{2}\left( \Gbcal[\Wb] + \Gbcal^T[\Wb] \right)
        = \frac{1}{2}\left( \Gb[\Wb] + \Gb^T[\Wb] \right)
        = \frac{1}{2}\left( \nabla\wb + (\nabla\wb)^T \right) - \wnor\shop 
         \in \tangentSymS
\end{align}
and the skew-symmetric surface deformation gradient by
\begin{align}
 \Abcal[\Wb]  \label{eq:Abcal}
      &:= \frac{1}{2}\left(\Gbcal[\Wb] - \Gbcal^{T}[\Wb]\right)
      = \Ab[\Wb] + \normal\otimes\left( \nabla\wnor + \shop\wb \right) - \left( \nabla\wnor + \shop\wb \right)\otimes\normal
        \in \tangentAR\formComma \\
 \Ab[\Wb] \label{eq:Ab}
      &:= \frac{1}{2}\left(\Gb[\Wb] - \Gb^{T}[\Wb]\right)
      = \frac{1}{2}\left( \nabla\wb - (\nabla\wb)^T \right)
      = -\frac{\rot\wb}{2}\Eb
         \in\tangentAS \formComma 
\end{align}
where $ \Ab[\Wb] $ is the tangential skew-symmetric surface deformation gradient.
See \cite{NitschkeSadikVoigt_A_2022} for more details and consequences in perturbing surfaces by linear deformations.

Unless we state otherwise, we treat every derivation in Sec. \ref{sec:derivation} \wrt\ the material observer, 
\ie\ we adopt the Lagrange perspective.
Even if this simplification is not necessary, it makes the calculations a lot easier and clearer.
For instance, if we like to investigate the change of a scalar function $ f $ \wrt\ a deformation of the material by $ \para_{\mfrak} + \eps \Wb $,
we only have to evaluate $ f[\para_{\mfrak} + \eps \Wb] $ if $f$ measure a quantity from the Lagrange perspective in first place.
If $ f $ is measuring from an arbitrary perspective then we would have to assess the much more difficult expression $ f[\xb_{\ofrak}[\para_{\mfrak} + \eps \Wb]] $, where $ \xb_{\ofrak} $ has to be specified sufficiently for all possible perturbed surfaces with $ \xb_{\ofrak}[\para_{\mfrak}] = \para_{\ofrak} $.
Note that this decision is without loss of generality, since the calculus we use in this paper is observer-invariant as well as the physics we like to describe with it.
For the sake of readability we omit the subscript $ \mfrak $ and mostly the associated  prefix ``material'' in this context.
As a consequence we set $ \para := \para_{\mfrak} $ and $ \Vb:= \Vb_{\mfrak} = \partial_t\para $. 
Especially in Sec. \ref{sec:derivation} we state $ \para_{\ofrak} = \para $ and $ \Vb_{\ofrak} = \Vb $ additionally if it is necessary.
However, this also means that local equations, as a consequence of a variational principle, are in terms of the Lagrange perspective a priori.
We drop this assumption in Sec. \ref{sec:models} by representing the local equations in terms of observer-invariant formulations.
This is feasible since all local terms we get have either a derived observer-invariant representation or are even instantaneous.
In doing so we establish the convention that $ \Vb = \Vb_{\mfrak}\in\tangentR $ is the material velocity field and $ \Vb_{\ofrak}\in\tangentR $ is an arbitrary but fixed observer velocity field not necessarily equal to $ \Vb $.
Note that an Euler perspective does not exist generally. 
The normal velocity has to be equal for all observer, \ie\ $ \Vb\normal = \Vb_{\ofrak}\normal = \vnor $.
As a consequence the relative velocity
\begin{align*}
    \ub &:= \Vb - \Vb_{\!\ofrak} = \vb - \vb_{\!\ofrak} \in \tangentS
\end{align*}  
is a tangential vector field. 
Observer-invariant formulations affect only time derivatives in this context and are investigated in \cite{NitschkeVoigt_JoGaP_2022} for tangential $ n $-tensor fields and in \cite{NitschkeVoigt_2023} for $ n $-tensor fields with $ n \le 2 $. 
We only briefly summarize this. Scalar fields $ f\in\tangentS[^0] $ yield
\begin{align}\label{eq:dotf}
    \dot{f} 
        &= \partial_t f + (\nablaC f)\ub
        = \partial_t f + \nabla_{\ub} f \formComma
\end{align}
see also \cite{BachiniKrauseNitschkeVoigt_2023}.
The material acceleration $ \Dmat\Vb\in\tangentR $ is given as the material time derivative of the material velocity, \ie
\begin{align}\label{eq:DmatV}
    \Dmat\Vb
        &= \dot{V}^A \eb_A
         = \dot{\vb} -\vnor\left( \nabla\vnor + \shop\vb \right) + \left( \dot{v}_{\bot} + \nabla_{\vb}\vnor + \shop(\vb,\vb) \right)\normal \in\tangentR \formComma\\
    \dot{\vb}
        &= (\partial_t v^i)\partial_i\para_{\ofrak} + \nabla_{\ub}\vb + \nabla_{\vb}\vb_{\ofrak} - \vnor\shop\vb \in \tangentS \formComma \notag
\end{align}
see also \cite{Yavari_2016}.
With respect to decomposition $ \Qb=\Qdepl(\qb,\etab,\beta)\in\tangentQR $ \eqref{eq:qtensor_decomposition}, where $ \qb\in\tangentQS $, $ \etab\in\tangentS $ and $ \beta\in\tangentS[^0] $,
the material time derivative of a Q-tensor fields is given in a Cartesian frame as well as with the tangential differential calculus by
\begin{align}\label{eq:DmatQ}
    \Dmat\Qb
        &= \dot{Q}^{AB}\eb_A\otimes\eb_B
         = \Qdepl\left(\dot{\qb}- 2\projQS\left( \etab\otimes\bb \right),
                    \dot{\etab} + \qb\bb - \frac{3}{2}\beta\bb,
                    \dot{\beta} + 2 \etab\bb \right)
             \in \tangentQR \formComma\\
    \bb &= \normal\nablaC\Vb = \nabla\vnor + \shop\vb \in \tangentS \formComma\notag\\
    \dot{\qb} \label{eq:dotq}
        &= (\partial_t q^{ij})\partial_i\para_{\ofrak}\otimes\partial_j\para_{\ofrak}
            + \nabla_{\ub}\qb + \left( \nabla\vb_{\ofrak} - \vnor\shop \right)\qb + \qb\left( (\nabla\vb_{\ofrak})^T - \vnor\shop \right)
           \in\tangentQS \formComma\\
    \dot{\etab}
        &= (\partial_t \eta^i)\partial_i\para_{\ofrak} + \nabla_{\ub}\etab + \nabla_{\etab}\vb_{\ofrak} - \vnor\shop\etab \in \tangentS \formPeriod\notag
\end{align}
With the skew-symmetric surface deformation gradient \eqref{eq:Abcal} and its tangential part \eqref{eq:Ab}, the Jaumann time derivative yields
\begin{align}\label{eq:DjauQ}
    \Djau\Qb 
        &= \Dmat\Qb - \Abcal[\Vb]\Qb + \Qb\Abcal[\Vb]
         = \Qdepl(\timeJ\qb, \timeJ\etab, \dot{\beta}) \in\tangentQR \formComma\\
    \timeJ\qb \label{eq:jauq}
        &= \dot{\qb} - \Ab[\Vb]\qb + \qb\Ab[\Vb] \in\tangentQS \formComma\\
    \timeJ\etab
        &= \dot{\etab} - \Ab[\Vb]\etab \in \tangentS \formPeriod\notag
\end{align}

The deformation derivative for surface/parametrization dependent scalar fields $ f=f[\para]\in\tangentS[^0] $ in direction of $ \Wb\in\tangentR $ is defined by
\begin{align}\label{eq:gaugef}
    \gauge{\Wb}f &:= \ddfrac{\eps}\Big\vert_{\eps=0} f[\para + \eps\Wb] \in\tangentS[^0]\formPeriod
\end{align}
Note that we mostly omit the extra notion ``$ [\para] $'' for a better readability and contrarily to \cite{NitschkeSadikVoigt_A_2022}.
However, we suggest to keep in mind that scalar fields could have such a dependency.
For pure geometrical quantities it is always determined, \eg\ for $ \meanc = \meanc[\para] $, where the mean curvature $ \meanc[\para + \eps\Wb] $ has a fixed field value at the perturbed surface.
But for other variables this dependency could be undetermined.
A stipulation of $ f[\para + \eps\Wb] $ up to first order, in terms of a Taylor expansion at $ \eps=0 $, is called a scalar gauge of surface independence, \eg\ by $ \gauge{\Wb}f = 0 $ for all $ \Wb\in\tangentR $.
Similar to the material time derivative, we can extend the deformation derivative to $ n $-tensor fields $ \Rb \in \tangentR[^n] $ by their Cartesian proxy fields, \ie\
\begin{align}\label{eq:gaugeR}
    \gauge{\Wb}\Rb &:= (\gauge{\Wb} R^{A_1 \ldots A_n}) \bigotimes_{\alpha=1}^{n} \eb_{A_{\alpha}} \formPeriod
\end{align}
Giving a representation for an arbitrary frame is feasible, \cf\ \cite{NitschkeSadikVoigt_A_2022} for tangential tensor fields, but not necessary for our purpose.
Although undetermined deformation derivatives of degrees of freedom arise from our model derivation, they all cancel out.
In Sec. \ref{sec:lagrangedalembert} we present some more details especially for Q-tensor fields, since assuming a tensorial gauge of surface independence is necessary if  we like to determine some of the resulting forces individually.
A much easier issue to grasp are field quantities depending on the velocity $ \Vb\in\tangentR $.
For velocity depending scalar fields $ f=f[\Vb]\in\tangentS[^0] $ the velocity derivative in direction of $ \Wb\in\tangentR $ is defined by
\begin{align}\label{eq:veloderf}
    \veloder{\Wb}f 
        &:= \ddfrac{\eps}\Big\vert_{\eps=0} f[\Vb + \eps\Wb] \in\tangentS[^0]\formPeriod
\end{align}
This can also be used for tensor fields by applying the velocity derivative on their proxy component fields without any further considerations, since any considered frame in this paper is velocity independent. 
For instance it holds $ \veloder{\Wb}\Rb = (\veloder{\Wb}R^{A})\eb_A = (\veloder{\Wb}R^{i})\partial_i\para + (\veloder{\Wb}R_{\bot})\normal $  for all $ \Rb=\Rb[\Vb]\in\tangentR $.

Spatially global functionals always have the form $ \functional = \functional[\para] = \int_{\surf} f \dS $ with $ f=f[\para]\in\tangentS[^0] $ in this work. 
The (total) variation \wrt\ $ \para $ in direction of $ \Wb\in\tangentR $ is defined and also yields
\begin{align}\label{eq:varXF}
    \innerH{\tangentR}{\deltafrac{\functional}{\para}, \Wb}
        &:= \ddfrac{\eps}\Big\vert_{\eps=0} \functional[\para + \eps\Wb]
          = \int_{\surf} \gauge{\Wb}f + f\Tr\Gb[\Wb] \dS \formComma
\end{align}
as a consequence, where $ \Tr\Gb[\Wb] = \DivC\Wb $ holds.
The rear summand comes from the deformation of the differential area form, see \cite{NitschkeSadikVoigt_A_2022}.
Note that \eqref{eq:varXF} states a total variation, \ie\ it remains valid even if additional variables are partially involved in any representation, 
\eg\ for $ \functional[\para] = \functional[\para,\Rb[\para]] $, where we have to evaluate $ \functional[\para + \eps\Wb] =  \functional[\para + \eps\Wb, \Rb[\para + \eps\Wb]] $ due to this.
Analogously for velocity depending spatially global functional $ \functional = \functional[\Vb] = \int_{\surf} f \dS $ with $ f=f[\Vb]\in\tangentS[^0] $ we define the variation \wrt\ velocity $ \Vb\in\tangentR $ in direction of $ \Wb\in\tangentR $ and conclude
\begin{align}\label{eq:varVF}
    \innerH{\tangentR}{\deltafrac{\functional}{\Vb}, \Wb}
        &:= \ddfrac{\eps}\Big\vert_{\eps=0} \functional[\Vb + \eps\Wb]
          = \int_{\surf} \veloder{\Wb}f \dS \formPeriod
\end{align}
No additional summand occurs here, since the area element as an instantaneous field quantity is velocity independent.
Generally for tensor depending functionals $ \functional = \functional[\Rb] $ the variation \wrt\ $ \Rb\in\Vcal \le \tangentR[^n] $ in direction of $ \Psib\in\Vcal $ is defined by
\begin{align}\label{eq:varRF}
    \innerH{\Vcal}{\deltafrac{\functional}{\Rb}, \Psib}
        &:= \ddfrac{\eps}\Big\vert_{\eps=0} \functional[\Rb + \eps\Psib] \formPeriod
\end{align}
Note that if $ \Rb\in\tangentR[^2] $ is one of the observer-invariant tensorial rates given in \cite{NitschkeVoigt_2023}, then the variation \eqref{eq:varRF} is independent of the actual choice of this rate, see Lemma \ref{lem:ttensor_variational_independence_of_process_variable}.

\section{Surface Beris-Edwards Models}\label{sec:models}

\subsection{General Surface Q-Tensor Models}\label{sec:models_general}

    We present surface Beris-Edwards models as a modular concept based on the most general model presented in this paper.
    Individual building blocks result from an energetic consideration of potential energies and energy fluxes of an open physical system.
    The potential energies consist of an elastic energy ($ \EL $), a thermotropic energy ($ \THT $) and a bending energy ($\BE$).
    The energy fluxes are composed of an immobility potential ($ \IM $), and a nematic viscous potential ($ \NV $).
    Depending on the choice of a certain immobility mechanism identified by the symbol $ \Phi $, we consider the Jaumann model ($ \Phi=\jau $) and the material model ($\Phi=\mfrak$).
    Restrictions on these $ \Phi $-models can be chosen from the set $ \Cset \subset \{ \SC, \CB, \UN, \IS, \NN, \NF \} $ arbitrarily except for combinations that are mutually exclusive or merge into one another. These individual constraints relate to surface conformity ($ \SC $), a constant eigenvalue $ \beta $ in normal direction ($ \CB $), uniaxility ($ \UN $), an isotropic state ($ \IS $), no flow in normal direction ($ \NN $) and no flow at all ($ \NF $). Inextensibility is mandatory and already considered.
    All constraints are implemented by the Lagrange multiplier technique.
    Using the Lagrange-D'Alembert principle in Section \ref{sec:lagrangedalembert},
    the local equations \eqref{eq:LDA} yield the: 
    
    \paragraph{Surface Beris-Edwards models:} \textit{Find the material velocity field $ \Vb\in\tangentR $, Q-tensor field $ \Qb\in\tangentQR $,  surface pressure field $ p\in\tangentS $ and Lagrange parameter fields $\Lambdab_{\gamma}\in\Vcal_{\gamma}\le\tangentR[^{n_{\gamma}}]$ for all $ \gamma\in\Cset$ \st}
    \begin{subequations}\label{eq:model_lagrange_multiplier}
    \begin{gather}
        \rho\Dmat\Vb \label{eq:model_lagrange_multiplier_floweq}
            =  \GradC\left(\pTH - p\right) 
                        + \fnorBE\normal
                        + \DivC\left( \SigmabEL + \SigmabIM  + \SigmabNV^0 + \xi\SigmabNV^1 + \xi^2\SigmabNV^2 \right)
                        + \sum_{\gamma\in\Cset} \Fb_{\gamma} \formComma\\
        \widetilde{M} \Dt[\Phi]\Qb \label{eq:model_lagrange_multiplier_moleculareq}
                     = \HbEL + \HbTH + \xi\HbNV^1 + \xi^2\widetilde{\Hb}^{2,\Phi}_{\NV} + \sum_{\gamma\in\Cset} \Hb_{\gamma} \formComma\\
        0 = \DivC\Vb \formComma\\
        \nullb = \Cb_{\gamma}, \quad \forall\gamma\in\Cset \label{eq:model_lagrange_multiplier_constraineq}
    \end{gather}
    \end{subequations}
    \textit{holds for $ \dot{\rho}=0 $ and given initial conditions for $ \Vb $, $ \Qb $ and mass density $ \rho\in\tangentS[^0] $.}
    
    Mandatory quantities are given in Table \ref{tab:forces} and optional constraint quantities are given in Table \ref{tab:constraints}. The choice of $ \Phi $ for $  \SigmabNV^1  $ and $  \SigmabNV^2  $ is optional. Both given representations state equal tensor fields by \eqref{eq:DjauQ}. The relation to stress and force fields as given in Sec. \ref{sec:nematic_viscosity} is $ \SigmabNV = \SigmabNV^0 + \xi\SigmabNV^1 + \xi^2\SigmabNV^2 $ and $ \HbNV = \xi\HbNV^1 + \xi^2\left(\widetilde{\Hb}^{2,\Phi}_{\NV} -\frac{\coeffIF}{2}\Dphi\Qb\right)  $, which in total do not depend on $ \Phi $.
    Model parameters are the one-constant elastic parameter $ L $ (Sec.~\ref{sec:elastic}), thermotropic coefficients $ a $, $ b $ and $ c $ (Sec.~\ref{sec:thermotropic}), bending stiffness $ \kappa $ and spontaneous curvature $ \meanc_0 $ (Sec.~\ref{sec:bending}), immobility coefficient $ M $ (Sec.~\ref{sec:immobility}), isotropic viscosity $ \coeffIF $ and anisotropy coefficient $ \xi $ (Sec.~\ref{sec:nematic_viscosity}).
    The immobility coefficient adapted to the nematic viscosity is $ \widetilde{M}=  M + \frac{\coeffIF\xi^2}{2}  $.

    \begin{table}
        \centering
        \renewcommand{\arraystretch}{1.3}
        \begin{tabular}{|c|c|c|c|}
        \hline
            \multicolumn{2}{|c|}{Identifier}
                    & Expression
                        & Sec., Eq., Ref. \\
        \hline
            \multicolumn{2}{|c|}{$ \Dmat\Vb $}
                    & $ \partial_t\Vb + (\nablaC\Vb)(\Vb-\Vb_{\!\ofrak}) $
                        & \eqref{eq:DmatV}, \cite{NitschkeVoigt_2023}  \\
        \hline
            \multirow{2}{*}{$ \Dphi\Qb $}
                & $ \Phi=\jau $
                    & $ \partial_t\Qb + (\nablaC\Qb)(\Vb-\Vb_{\!\ofrak}) - \Abcal[\Vb]\Qb + \Qb\Abcal[\Vb] $
                        & \ref{sec:immobility}, \eqref{eq:DjauQ}, \cite{NitschkeVoigt_2023} \\
        \cline{2-4}
                & $ \Phi=\mfrak $
                    & $ \partial_t\Qb + (\nablaC\Qb)(\Vb-\Vb_{\!\ofrak}) $ 
                        & \ref{sec:immobility}, \eqref{eq:DmatQ}, \cite{NitschkeVoigt_2023} \\
        \hline
            \multicolumn{2}{|c|}{$ \SigmabEL $}
                    & $ -L\left( (\nablaC\Qb)^{T} \dbdot\nablaC\Qb - \frac{\| \nablaC\Qb \|^2}{2}\IdS \right) $
                        & \ref{sec:elastic} \\
        \hline
            \multicolumn{2}{|c|}{$ \HbEL $}
                    & $ L \DeltaC\Qb $
                        & \ref{sec:elastic}, \cite{NitschkeVoigt_2023}\\
        \hline
            \multicolumn{2}{|c|}{$ \pTH $}
                    & $  a \Tr\Qb^2 + \frac{2b}{3}\Tr\Qb^3 + c\Tr\Qb^4 $
                        & \ref{sec:thermotropic} \\
        \hline
            \multicolumn{2}{|c|}{$ \HbTH $}
                    & $ -2 \left( a \Qb + b\left( \Qb^2 - \frac{\Tr\Qb^2}{3}\Id \right) + c\Tr(\Qb^2)\Qb \right) $
                        & \ref{sec:thermotropic}, \cite{NitschkeVoigt_2023} \\
        \hline
            \multicolumn{2}{|c|}{$ \fnorBE $}
                    & $ -\kappa\left( \DeltaC\meanc + \left( \meanc - \meanc_{0} \right)\left( \frac{1}{2}\meanc(\meanc + \meanc_{0}) - 2\gaussc  \right) \right) $
                        & \ref{sec:bending}, \cite{BachiniKrauseNitschkeVoigt_2023} \\
        \hline
            \multirow{2}{*}{$ \SigmabIM $}
                & $ \Phi=\jau $
                    & $ M (\Id + \normal\otimes\normal)(\Qb\Djau\Qb - (\Djau\Qb)\Qb)\IdS $
                        & \multirow{2}{*}{\ref{sec:immobility}} \\
        \cline{2-3}
                & $ \Phi=\mfrak $
                    & $ \nullb $ & \\
        \hline
            \multicolumn{2}{|c|}{$ \SigmabNV^0 $}
                    & $ 2\coeffIF\Sb[\Vb] $
                        & \ref{sec:nematic_viscosity}, \cite{BachiniKrauseNitschkeVoigt_2023}\\
        \hline
            \multirow{2}{*}{$ \SigmabNV^1 $}
                &  $ (\Phi=\jau) $
                    & $ -\coeffIF\left( \IdS(\Djau\Qb)\IdS + (3\IdS+2\normal\otimes\normal)\Qb\Sb[\Vb] + \Sb[\Vb]\Qb\IdS \right) $
                        &\multirow{7}{*}{\ref{sec:nematic_viscosity}} \\
        \cline{2-3}
                & $ (\Phi=\mfrak) $
                    & $ -\coeffIF\left( \IdS(\Dmat\Qb)\IdS + \IdS\Qb\nablaC\Vb + 2\Qb\Sb[\Vb] + (\nablaC\Vb)^T\Qb\IdS \right)  $ & \\
        \cline{1-3}
            \multicolumn{2}{|c|}{$ \HbNV^1 $}
                    & $ \coeffIF\Sb[\Vb] $ &\\
        \cline{1-3}
            \multirow{2}{*}{$ \SigmabNV^2 $}
                &  $ (\Phi=\jau) $
                    & $ \coeffIF\left( \Qb\Djau\Qb\IdS - \normal\otimes \IdS \Qb (\Djau\Qb) \normal + \IdS\Qb\Sb[\Vb]\Qb\IdS + \Qb^2\Sb[\Vb] \right) $ &\\
        \cline{2-3}
                & $ (\Phi=\mfrak) $
                    & \splitcelltab{$ \coeffIF\big( \Qb\Dmat\Qb\IdS -\normal\otimes\IdS\Qb\left( (\Dmat\Qb)\normal - 2\Abcal[\Vb]\Qb\normal - \Qb(\nablaC\Vb)^T\normal \right)  $\\
                                                    $ +  \IdS\Qb\Gbcal^{T}[\Vb]\Qb\IdS + \Qb^2\nablaC\Vb \big)$} &\\
        \cline{1-3}
            \multirow{2}{*}{$ \widetilde{\Hb}^{2,\Phi}_{\NV} $}
                & $ \Phi=\jau $
                    & $ -\frac{\coeffIF}{2} \left( \Qb\Sb[\Vb] + \Sb[\Vb]\Qb - \frac{\Qb\dbdot\Sb[\Vb]}{3}\Id \right) $& \\
        \cline{2-3}
                & $ \Phi=\mfrak $
                    & $ -\frac{\coeffIF}{2} \left( \Qb\Gbcal[\Vb] + \Gbcal^T[\Vb]\Qb - \frac{\Qb\dbdot\Sb[\Vb]}{3}\Id \right) $& \\
        \hline
        \end{tabular}
        \caption{Necessary terms for \eqref{eq:model_lagrange_multiplier_floweq} and \eqref{eq:model_lagrange_multiplier_moleculareq} in the Surface Beris-Edwards models for a consistent choice $ \Phi\in\{\jau,\mfrak\} $.
                The deformation gradient field $ \Gbcal[\Vb] $ is given in \eqref{eq:Gbcal}. $ \Sb[\Vb] $ \eqref{eq:Sb} and $\Abcal[\Vb]$ \eqref{eq:Abcal} are its symmetric and skew-symmetric parts.
                Time derivatives are determined \wrt\ an observer velocity $ \Vb_{\!\ofrak}\in\tangentR $. }
        \label{tab:forces}
    \end{table}

    \begin{table}
        \centering
        \renewcommand{\arraystretch}{1.2}
        \begin{tabular}{|c|c|c|c|c|c|}
            \hline
            $ \gamma $
                & $\Fb_{\gamma}$
                    & $ \Hb_{\gamma} $
                        & $  \Cb_{\gamma} $ & $ \Vcal_{\gamma} $
                            & Sec.\\
            \hline
            $ \SC $ 
                & \splitcelltab{$ \DivC\big(\Qb(\normal,\normal) \normal \otimes  \lambdabSC $ \\ $-\normal \otimes \IdS\Qb\lambdabSC \big)$ } %$\DivC\left(\normal \otimes \left( \Qb(\normal,\normal)\lambdabSC - \IdS\Qb\lambdabSC \right)\right)$
                    & $ -\frac{1}{2}\left( \lambdabSC\otimes\normal + \normal\otimes\lambdabSC \right) $ 
                        & $ \IdS\Qb\normal $ & $ \tangentS $
                            & \ref{sec:surface_conforming_constrain}\\
            \hline
            $ \CB $
                & $ \nullb $
                    & $ -\lambdaCB \left(\normal\otimes\normal -\frac{1}{3}\Id \right)$
                        & $\Qb(\normal,\normal) - \beta_0 $ & $ \tangentS[^0] $
                            & \ref{sec:constant_beta}\\
            \hline
            $ \UN $
                & $\nullb$
                    & \splitcelltab{$6\projQR\left( \LambdabUN\Qb^3 + \Qb\LambdabUN\Qb^2 \right)$\\
                              $-5(\Tr\Qb^2) \projQR\left( \LambdabUN\Qb \right)$\\
                              $-5 (\LambdabUN\dbdot\Qb^2)\Qb$}
                        & \splitcelltab{$  \Qb^4 - \frac{5}{6}(\Tr\Qb^2)\Qb^2$\\$\quad+ \frac{1}{9}(\Tr\Qb^2)^2\Id $} & $ \tangentQR $
                            & \ref{sec:uniaxiality_constrain}\\
            \hline
            $ \IS $
                & $\nullb$
                    & $\LambdabIS$
                        & $ \Qb $ & $ \tangentQR $ 
                            & \ref{sec:isotropic_state}\\
            \hline
            $ \NN $
                & $\lambdaNN \normal$
                    & $\nullb$
                        & $ \Vb\normal $ & $ \tangentS[^0] $ 
                            & \ref{sec:no_normal_flow}\\
            \hline
            $ \NF $
                & $ \LambdabNF $
                    & $\nullb$
                        & $ \Vb $ & $\tangentR$ 
                            & \ref{sec:no_flow}\\
            \hline
        \end{tabular}
        \caption{Generalized constraint forces $ \Fb_{\gamma}\in\tangentR $, $  \Hb_{\gamma}\in\tangentQR $ and $  \Cb_{\gamma}\in\Vcal_{\gamma} $ \wrt\ Lagrange parameter in $\Vcal_{\gamma} $. 
            These terms apply in the Surface Beris-Edwards models \eqref{eq:model_lagrange_multiplier} in some circumstances.
            }
        \label{tab:constraints}
    \end{table}
    
    All modular combinations of \eqref{eq:model_lagrange_multiplier} are thermodynamically consistent. This follows by construction from the Lagrange-D'Alembert principle. The rate of the total energy $ \energyTOT := \energyK + \potenergy $, comprising the kinetic energy $ \energyK $ and the potential energy $ \potenergy $, is given by $\ddt\energyTOT = -2 ( \energyIM + \energyNV ) \le 0$, where $ \energyIM \ge 0 $ is the immobility (Sec.~\ref{sec:immobility}) and $ \energyNV  \ge 0 $ is the nematic viscosity (Sec.~\ref{sec:nematic_viscosity}) flux potential. This means that temporal changes in the total energy are solely caused by the considered energy exchange mechanism and the system exhibits dissipation. The detailed computation is provided in Sec. \ref{sec:energy_rate}.

    \subsection{Special Cases of the General Surface Models}
    \label{sec:model_general_cases}
    
    Various related, but mostly severely simplified models of \ref{eq:model_lagrange_multiplier} have been considered in the literature. We demonstrate that these models can be obtained as special cases of \eqref{eq:model_lagrange_multiplier} by considering constraints on the nematic field and/or the flow field. Establishing the connection to existing models not only justifies \eqref{eq:model_lagrange_multiplier} it also helps to understand the various dependencies and geometric coupling mechanisms. We will here consider $ \Cset \subset \{ \IS, \NN, \NF \} $ as well as special choices of material parameters. 
    
    \subsubsection{Isotropic State Constraint}

    For the isotropic nematic state, \ie\ $ \IS\in\Cset $, we obtain
    \begin{align}\label{eq:NSE}
        \rho\Dmat\Vb 
           &=  -\GradC p 
                       + \fnorBE\normal
                       + \DivC\SigmabNV^0 
                       + \sum_{\gamma\in\Cset_{\IS}} \Fb_{\gamma} \formComma
       & 0 &= \DivC\Vb \formComma
       & \nullb &= \Cb_{\gamma}, \quad \forall \gamma\in\Cset_{\IS} \formComma
    \end{align}
    see Sec. \ref{sec:isotropic_state}. The only remaining considerable constraints are related to the flow field. For $ \Cset_{\IS}=\emptyset $, \eqref{eq:NSE} is the model for a fluid deformable surface \cite{Arroyo_2009,Reuther_2020,Krause_2023}. Such models have been established as model systems for lipid bilayers, the cellular cortex or epithelia monolayers. They exhibit a solid-fluid duality, while they store elastic energy when bent, as solid shells, they flow as viscous two-dimensional fluids under in-plane shear. In the presence of curvature, any shape change is accompanied by a tangential flow and, vice-versa, the surface deforms due to tangential flow. The Stokes limit of \eqref{eq:NSE} has been considered in \cite{Torres-Sanchez_2019}.
    
    Forcing a geometrical stationary surface, \ie\ $ \Cset_{\IS}=\{\NN\} $, we could omit the normal part of \ref{eq:NSE}, see Sec. \ref{sec:no_normal_flow}. This yields the inextensible surface Navier-Stokes equation \cite{Reuther_2015,Koba_2017,Reuther_MMS_2018,Koba_2018,Jankuhn_2018,Reuther_PF_2018,Miura_2018} %letztere Quelle für $V_\Gamma^N=0$
    \begin{align}\label{eq:NSE_tangential}
        \rho \dot{\vb} = -\nabla p + \coeffIF\div(\nabla\vb + (\nabla\vb)^T) \formComma \qquad
        0 = \div\vb 
    \end{align}
    in the tangential differential calculus notation.
    Note that in this case it is common to take the Eulerian perspective, which gives $ \dot{\vb} = \partial_t\vb + \nabla_{\vb}\vb $. For flat surfaces \ref{eq:NSE_tangential} leads to the usual incompressible Navier-Stokes equation. The Stokes limit of \eqref{eq:NSE_tangential} has been intensively studied numerically, see \cite{Jankuhn_JNM_2021,Brandner_SIAMJSC_2022,Hardering_2023} and the references therein.

    \subsubsection{No-flow Constraint}

    The other severely simplified case is for the constraint $ \Vb=\nullb $, \ie\ $ \NF\in\Cset $.
    Sec. \ref{sec:no_flow} leads to
    \begin{align}\label{eq:L2flow}
        \widetilde{M}\Dt\Qb
            = \HbEL + \HbTH + \sum_{\gamma\in\Cset_{\NF}}\Hb_{\gamma} \formComma
        \qquad \nullb = \Cb_{\gamma} \formComma \quad \forall\gamma\in\Cset_{\NF} \formComma 
    \end{align}
     where the only remaining considerable constraints are related to restrictions on the nematic field, \ie\ $ \Cset_{\NF} \subset \{\SC,\CB, \UN\} $.
     This corresponds to the $ \hil $-gradient flow of the surface Landau-de Gennes energy for $ \R^3 $-valued Q-tensor fields restricted to the surface for a certain choice of constraints.
     Note that on a materially stationary surface there is no choice of a tensorial time-derivative, \ie\ we define $ \Dt\Qb:= \Dmat\Qb=\Djau\Qb  $.
     Even if an arbitrary observer is conceivable, it is probably most common to use a Lagrangian or Eulerian perspective.
     Both are trivially equal for $ \Vb=\nullb $ and it is $ \Dt\Qb=\partial_t\Qb $ valid in these cases.
     For the unrestricted biaxial case ($ \Cset_{\NF}=\emptyset $), eq. \eqref{eq:L2flow} is essentially the thin-film limit of the bulk Landau-de Gennes model \cite{Virga_2018,Chen_2015,Majumdar_2010,Kirr_2014} under homogeneous Neumann boundary conditions, \cf\ \cite{NitschkeVoigt_2023, Nitschke_2018}.
     The surface conforming case ($ \Cset_{\NF} \subset \{ \SC, \CB \} $) is discussed in Section \ref{sec:surface_conforming_no_flow} below.

    \subsubsection{Nematodynamics on Geometrically Stationary Surfaces}

    A rather general nematic model on a geometrically stationary surface ($ \NN \in \Cset $) has been considered in \cite{BouckNochettoYushutin_2022}. It is a isotropic viscous ($\xi=0$) Jaumann-Model ($ \Phi=\jau $) for an Eulerian observer ($ \Vb_{\!\ofrak} = \vb_{\ofrak} = \nullb $).  
    Considering the remaining Lagrangian Q-tensor force $ \Hb_{\potenergy}=\HbEL+\HbTH $, the Q-tensor eq. \eqref{eq:model_lagrange_multiplier_moleculareq} becomes
    \begin{align*}
        M\Djau\Qb 
            &=  M\left( \partial_t\Qb + (\nablaC\Qb)\vb - \Abcal[\vb]\Qb + \Qb \Abcal[\vb] \right)
             =\Hb_{\potenergy} \formComma
    \end{align*}
    where $ \Abcal[\vb]= \frac{1}{2}\left( \nabla\vb - (\nabla\vb)^T \right) + \nu\otimes\shop\vb - \shop\vb\otimes\nu $ \eqref{eq:Abcal} is the sum of the ``covariant spin tensor'' and the ``star spin tensor'' in compliance with the naming in \cite{BouckNochettoYushutin_2022}. 
    Subsequently, substituting this into $ \SigmabIM[\jau] $ and using the skew-symmetric tensor field $ \Sigmab_{\textup{E}}:=\Qb\Hb_{\potenergy} - \Hb_{\potenergy}\Qb $, the fluid eq. \eqref{eq:model_lagrange_multiplier_floweq} results in
    \begin{align*}
        \rho\left(\partial_t\vb +\nabla_{\vb}\vb  \right) 
            &= -\nabla p + \coeffIF\div\left( \nabla\vb + (\nabla\vb)^T \right) + \fb_{\textup{L}} + \fb_{\textup{E}}\formComma\\
        \fb_{\textup{L}}
            &:=  \IdS\left( \DivC\SigmabEL + \GradC\pTH \right)
             = - \Hb_{\potenergy}\dbdot\nablaC\Qb\formComma\\ %MProved
        \fb_{\textup{E}}
            &:= \IdS\DivC\SigmabIM[\jau]
             = \div\projS[^2]\Sigmab_{\textup{E}} - 2\shop\Sigmab_{\textup{E}}\normal  \formComma
    \end{align*}
    where we omit the normal equation according to Section \ref{sec:no_normal_flow}.
    Following the naming in \cite{BouckNochettoYushutin_2022}, $ \fb_{\textup{L}}\in \tangentS $ is the ``Leslie force'',  $\fb_{\textup{E}}\in \tangentS$ consists of the ``Ericksen force'' and the ``star force'', and  $\Sigmab_{\textup{E}}\in\tangentAR$ is the  ``Ericksen stress''. \footnote{To proof the identity $  \IdS\left( \DivC\SigmabEL + \GradC\pTH \right) = - \Hb_{\potenergy}\dbdot\nablaC\Qb $ requires some longer calculation, which we omit here. However, the identity can easily be checked using computer algebra systems.} The surface conforming case ($ \Cset_{\NN} \subset \{ \SC, \CB \} $) is discussed in Section \ref{sec:Nematodynamics on Geometrically Stationary Surfaces} below.
    
    \subsection{Surface Conforming Q-Tensor Models}\label{sec:models_conforming}
    We consider surface conforming Q-tensor fields $ \Qb\in\tangentCQR<\tangentQR $. In the general model \eqref{eq:model_lagrange_multiplier} this requires $ \SC\in\Cset $. Physically this means that the biaxial directors of the nematic field are forced to point either in normal or in tangential direction of the surface.
    An associated approach for surface conforming Q-tensor fields is the orthogonal decomposition $ \Qb= \CQdepl(\qb,\beta) $ \eqref{eq:surface_conforming_ansatz},
    where the eigenvalue in normal direction $ \beta\in\tangentS[^0] $ and the tangential Q-tensor part $ \qb\in\tangentQS $ are uniquely and mutually independent given for all $ \Qb\in\tangentCQR $.
    In Sec. \ref{sec:surface_conforming_constrain} we derive a reduced model without the Lagrange parameter $ \lambdabSC $ for surface conformity, where the associated constraint force, which appears in the fluid equation, is given explicitly.
    Since we use a tangential related decomposition for the Q-tensor field anyway, the following surface conforming Beris-Edwards models are given in terms of tangential calculus:
    
    \paragraph{Surface Conforming Beris-Edwards Models:} \textit{Find the tangential and normal material velocity fields $ \vb\in\tangentS $ and $ \vnor\in\tangentS[^0] $, tangential Q-tensor field $ \qb\in\tangentQS $, normal eigenvalue field $ \beta\in\tangentS[^0] $, pressure field $ p\in\tangentS[^0] $ and Lagrange parameter fields $ \lambdab_{\gamma}\in\Vcal_{\gamma} $ for all $ \gamma\in\Cset_{\SC} $ \st}
    \begin{subequations}\label{eq:model_conforming}
       \begin{gather}
       \begin{align}
            \rho\ab 
                &= \nabla(\pTH - p)   + \div\left( \sigmabEL + \sigmabIM  + \sigmabNV^0 + \xi\sigmabNV^1 +  \xi^2\sigmabNV^2 \right) \notag\\
                &\quad + \left(2\shop\qb-3\beta\shop\right)\left( \zetabEL + \zetabIM \right) \label{eq:model_conforming_tangentialeq}
                       + \sum_{\gamma\in\Cset_{\SC}}\fb_{\gamma}\formComma\\
            \rho\anor 
                &= (\pTH - p)\meanc + \fnorBE  + \shop\dbdot\left( \sigmabEL + \sigmabIM  + \sigmabNV^0 + \xi\sigmabNV^1 + \xi^2\sigmabNV^2 \right) \notag\\
                &\quad - \div\left( \left(2\qb-3\beta\IdS\right) \left( \zetabEL + \zetabIM \right) \right)
                       + \sum_{\gamma\in\Cset_{\SC}} \fnor[\gamma] \formComma \label{eq:model_conforming_normalseq}
       \end{align}\\
       \begin{align}
            \widetilde{M}\timeD\qb \label{eq:model_conforming_qtensor}
                &= \hbEL + \hbTH + \xi\hbNV^1 + \xi^2\widetilde{\hb}^{2,\Phi}_{\NV} + \sum_{\gamma\in\Cset_{\SC}} \hb_{\gamma} \formComma \\
            \widetilde{M} \dot{\beta} \label{eq:model_conforming_betaeq}
                &= \omegaEL + \omegaTH  + \xi^2\widetilde{\omega}^{2}_{\NV} + \sum_{\gamma\in\Cset_{\SC}} \omega_{\gamma}\formComma
       \end{align}\\
       \div\vb = \vnor\meanc \formComma\\
       \forall\gamma\in\Cset_{\SC}:\quad  \nullb = \Cb_{\gamma} \label{eq:model_conforming_constraineqs}
       \end{gather}
    \end{subequations}
    \textit{holds for $ \dot{\rho}=0 $ and given initial conditions for $ \vb $, $ \vnor $, $ \qb $, $ \beta $ and mass density $ \rho\in\tangentS[^0] $.}
    
    Mandatory quantities are given in Table \ref{tab:forces_conforming} and optional constraint quantities are given in Table \ref{tab:constraints_conforming}. 
    The choice of $ \Phi $ for $  \sigmabNV^1  $ and $  \sigmabNV^2  $ is optional. Both given representations state equal tensor fields by \eqref{eq:jauq}. The relation to stress and force fields as given in Sec. \ref{sec:nematic_viscosity} is $ \sigmabNV = \sigmabNV^0 + \xi\sigmabNV^1 + \xi^2\sigmabNV^2 $, $ \hbNV = \xi\hbNV^1 + \xi^2\left(\widetilde{\hb}^{2,\Phi}_{\NV} -\frac{\coeffIF}{2}\timeD\qb\right)  $ and $ \omegaNV = \xi^2\left( \widetilde{\omega}^{2}_{\NV} - \frac{\coeffIF}{2}\dot{\beta} \right) $, which in total do not depend on $ \Phi $.
    We exclude the isotropic state constraint, \ie\ $ \Cset_{\SC}\subset\{ \CB, \UN, \NN, \NF\} $, since that would only yield the model for fluid deformable surfaces \eqref{eq:NSE} \wrt\ the tangential calculus notation, \resp\ the surface Navier-Stokes equation \eqref{eq:NSE_tangential} if the material flow is forced to be tangential ($ \NN $). Note that all considered stress tensors depending on the Q-tensor field become fully tangential in \eqref{eq:model_conforming}, \ie\  they are in $ \tangentS[^2] $.

     \begin{table}
        \centering
        \renewcommand{\arraystretch}{1.3}
        \begin{tabular}{|c|c|c|c|}
        \hline
            \multicolumn{2}{|c|}{Identifier}
                    & Expression
                        & Sec., Eq., Ref. \\
        \hline
            \multicolumn{2}{|c|}{$ \ab $}
                    & $ \IdS\Dmat\Vb = (\partial_t v^i)\partial_i\para_{\!\ofrak} + \nabla_{\vb-\vb_{\!\ofrak}}\vb + \nabla_{\vb}\vb_{\!\ofrak} - \vnor\left( \nabla\vnor + 2\shop\vb \right) $
                        & \multirow{2}{*}{\eqref{eq:DmatV}, \cite{NitschkeVoigt_JoGaP_2022,NitschkeVoigt_2023}}  \\
        \cline{1-3}
            \multicolumn{2}{|c|}{$ \anor $}
                    & $ \normal\Dmat\Vb = \partial_t\vnor + \nabla_{2\vb-\vb_{\!\ofrak}}\vnor + \shop(\vb,\vb) $ &\\
        \hline
            \multirow{2}{*}{$ \timeD\qb $}
                & $ \Phi=\jau $
                    & $ \timeJ\qb = \dot{\qb} - \Ab[\Vb]\qb + \qb\Ab[\Vb] $
                        & \ref{sec:immobility}, \eqref{eq:jauq}, \cite{NitschkeVoigt_JoGaP_2022, NitschkeVoigt_2023} \\
        \cline{2-4}
                & $ \Phi=\mfrak $
                    & $ \dot{\qb} = (\partial_t q^{ij})\partial_i\para_{\!\ofrak}\otimes\partial_j\para_{\!\ofrak} + \nabla_{\vb-\vb_{\!\ofrak}}\qb + \Gb[\Vb_{\!\ofrak}]\qb + \qb\Gb^T[\Vb_{\!\ofrak}] $ 
                        & \ref{sec:immobility}, \eqref{eq:dotq}, \cite{NitschkeVoigt_JoGaP_2022, NitschkeVoigt_2023}\\
        \hline
            \multicolumn{2}{|c|}{$ \sigmabEL $}
                    & \splitcelltab{$ -L\big( (\nabla\qb)^{T} \dbdot\nabla\qb + \frac{3}{2}\nabla\beta\otimes\nabla\beta 
                                              - \frac{1}{4}\left( 2\normsq{\tangentS[^3]}{\nabla\qb} + 3\normsq{\tangentS[^2]}{\nabla\beta}  \right)\IdS  $ \\
                                    $ - 6\gaussc\beta\qb + \frac{1}{2}\left( 2\meanc\Tr\qb^2 - 12\beta \qb \dbdot \shop  + 9 \meanc\beta^2 \right)\left( \shop - \frac{\meanc}{2}\IdS \right)\big) $}
                        & \multirow{2}{*}{\ref{sec:elastic}} \\
        \cline{1-3}
            \multicolumn{2}{|c|}{$ \zetabEL $}
                    & $ L\left( 2(\nabla\qb)\dbdot\shop + \qb\nabla\meanc - 3\shop\nabla\beta - \frac{3}{2}\beta\nabla\meanc \right) $ &\\
        \hline
            \multicolumn{2}{|c|}{$ \hbEL $}
                    & $ L \left(  \Delta\qb - (\meanc^2-2\gaussc)\qb + 3\beta \meanc \left( \shop - \frac{\meanc}{2}\IdS \right) \right) $
                        & \multirow{2}{*}{\ref{sec:elastic}, \cite{NitschkeVoigt_2023}}\\
        \cline{1-3}
            \multicolumn{2}{|c|}{$ \omegaEL $}
                    & $ L \left( \Delta\beta + 2 \meanc\shop \dbdot \qb  - 3\beta \left( \meanc^2 - 2\gaussc \right)  \right) $ &\\
        \hline
            \multicolumn{2}{|c|}{$ \pTH $}
                    & $ \frac{1}{2}\left( 2a - 2b\beta + c\left( \Tr\qb^2 + 3\beta^2 \right) \right)\Tr\qb^2
                         +\frac{1}{8}\left( 12 a + 4b\beta + 9c\beta^2 \right)\beta^2 $
                        & \ref{sec:thermotropic} \\
        \hline
            \multicolumn{2}{|c|}{$ \hbTH $}
                    & $ -\left(2a - 2b\beta + 3c\beta^2  + 2c\Tr\qb^2\right)\qb $
                        & \multirow{2}{*}{\ref{sec:thermotropic}, \cite{NitschkeVoigt_2023}} \\
        \cline{1-3}
            \multicolumn{2}{|c|}{$ \omegaTH $}
                    & $ -\left(2a + b\beta + 3c\beta^2  + 2c\Tr\qb^2\right)\beta + \frac{2}{3} b \Tr\qb^2 $ &\\
        \hline
            \multicolumn{2}{|c|}{$ \fnorBE $}
                    & $ -\kappa\left( \Delta\meanc + \left( \meanc - \meanc_{0} \right)\left( \frac{1}{2}\meanc(\meanc + \meanc_{0}) - 2\gaussc  \right) \right) $
                        & \ref{sec:bending}, \cite{BachiniKrauseNitschkeVoigt_2023} \\
        \hline
            \multirow{2}{*}{$ \sigmabIM $}
                & $ \Phi=\jau $
                    & $ M \left( \qb\timeJ\qb - (\timeJ\qb)\qb \right) $
                        & \multirow{4}{*}{\ref{sec:immobility}} \\
        \cline{2-3}
                & $ \Phi=\mfrak $
                    & $ \nullb $ & \\
        \cline{1-3}
            \multirow{2}{*}{$ \zetabIM $}
                & $ \Phi=\jau $
                    & $ \nullb $ &\\
        \cline{2-3}
                & $ \Phi=\mfrak $
                    & $  -M \left(\nabla\vnor + \vb\shop\right)\left( \qb - \frac{3}{2}\beta\IdS \right) $ & \\
        \hline
            \multicolumn{2}{|c|}{$ \sigmabNV^0 $}
                    & $ 2\coeffIF\Sb[\Vb] $
                        & \ref{sec:nematic_viscosity}, \cite{BachiniKrauseNitschkeVoigt_2023}\\
        \hline
            \multirow{2}{*}{$ \sigmabNV^1 $}
                &  $ (\Phi=\jau) $
                    & $ -\coeffIF\left(\timeJ\qb - \frac{\dot{\beta}}{2}\IdS + 3\qb\Sb[\Vb] + \Sb[\Vb]\qb - 2\beta\Sb[\Vb] \right) $
                        &\multirow{8}{*}{\ref{sec:nematic_viscosity}} \\
        \cline{2-3}
                & $ (\Phi=\mfrak) $
                    & $ -\coeffIF\left( \dot{\qb} - \frac{\dot{\beta}}{2}\IdS + \qb(2\Gb[\Vb]+\Gb^T[\Vb]) + \Gb^T[\Vb]\qb - 2\beta\Sb[\Vb] \right)  $ & \\
        \cline{1-3}
            \multicolumn{2}{|c|}{$ \hbNV^1 $}
                    & $ \coeffIF\Sb[\Vb] $ &\\
        \cline{1-3}
            \multirow{2}{*}{$ \sigmabNV^2 $}
                &  $ (\Phi=\jau) $
                    & \splitcelltab{$ \coeffIF\big( \qb\timeJ\qb - \frac{1}{2}\left( \beta\timeJ\qb + \dot{\beta}\qb \right)
                            + \frac{1}{4}\beta\dot{\beta}\IdS + \qb\Sb[\Vb]\qb$ \\
                      $-\frac{1}{2}\beta\left( 3\qb\Sb[\Vb] + \Sb[\Vb]\qb \right)  
                        + \frac{1}{2}\left( \Tr\qb^2 + \beta^2 \right)\Sb[\Vb] \big) $} &\\
        \cline{2-3}
                & $ (\Phi=\mfrak) $
                    & \splitcelltab{$ \coeffIF\big(  \qb\dot{\qb} - \frac{1}{2}\left( \beta\dot{\qb} + \dot{\beta}\qb \right) 
                              + \frac{1}{4}\beta\dot{\beta}\IdS + \qb\Gb^T[\Vb]\qb$ \\
                      $ -\frac{1}{2}\beta\left( \qb(2\Gb[\Vb]+\Gb^T[\Vb]) + \Gb^T[\Vb]\qb \right)  
                      + \frac{1}{2}(\Tr\qb^2)\Gb[\Vb] + \frac{1}{2}\beta^2\Sb[\Vb]\big) $} &\\
        \cline{1-3}
            \multirow{2}{*}{$ \widetilde{\hb}^{2,\Phi}_{\NV} $}
                & $ \Phi=\jau $
                    & $ -\frac{\coeffIF}{2} \left( \qb\Sb[\Vb] + \Sb[\Vb]\qb - (\qb\dbdot\Gb[\Vb])\IdS - \beta\Sb[\Vb]\right) $& \\
        \cline{2-3}
                & $ \Phi=\mfrak $
                    & $ -\frac{\coeffIF}{2} \left( \qb\Gb[\Vb] + \Gb^T[\Vb]\qb -(\qb\dbdot\Gb[\Vb])\IdS - \beta\Sb[\Vb] \right) $& \\
        \cline{1-3}
            \multicolumn{2}{|c|}{$ \widetilde{\omega}^{2}_{\NV} $}
                    & $ -\frac{\coeffIF}{6} \left( 3\dot{\beta} - 2 \qb\dbdot\Gb[\Vb] \right) $ &\\
        \hline
        \end{tabular}
        \caption{Necessary terms for the Surface Conforming Beris-Edwards models \eqref{eq:model_conforming} for a consistent choice $ \Phi\in\{\jau,\mfrak\} $.
                These representations comprise the tangential deformation gradient $ \Gb[\Vb] = \nabla\vb - \vnor\shop $ \eqref{eq:Gb} of the material velocity $ \Vb=\vb+\vnor\normal $.
                $ \Sb[\Vb] $ \eqref{eq:Sb} and $\Ab[\Vb]$ \eqref{eq:Ab} are its symmetric and skew-symmetric part.
                Time derivatives are determined \wrt\ an observer velocity $ \Vb_{\!\ofrak}= \vb_{\ofrak} + \vnor\normal $. }
        \label{tab:forces_conforming}
    \end{table}

    \begin{table}
        \centering
        \renewcommand{\arraystretch}{1.2}
        \begin{tabular}{|c|c|c|c|c|c|}
            \hline
            $ \gamma $
                & $\left(\fb_{\gamma}, \fnor[\gamma]\right)$
                    & $ \left(\hb_{\gamma}, \omega_{\gamma}\right) $
                        & $  \Cb_{\gamma} $ & $ \Vcal_{\gamma} $
                            & Sec.\\
            \hline
            $ \CB $
                & $ \left(\nullb, 0\right) $
                    & $ \left(\nullb, -\frac{2}{3}\lambdaCB\right) $
                        & $\beta - \beta_0 $ & $ \tangentS[^0] $
                            & \ref{sec:constant_beta}\\
            \hline
            $ \UN $
                & $\left(\nullb, 0\right)$
                    & \splitcelltab{$\Big(-6\beta\qb\lambdabUN\qb + \frac{4}{3}(\Tr\qb^2)\projQS(\lambdabUN\qb)$\\
                              $ + \left( 5\beta\lambdabUN\dbdot\qb + \lambdabotUN\Tr\qb^2\right)\qb$\\
                              $ - \frac{1}{4}\beta\left( 14\Tr\qb^2 - 9\beta^2 \right)\lambdabUN$,\\
                              $ \frac{1}{3}(\Tr\qb^2)\left( 2\lambdabUN\dbdot\qb - 9\lambdabotUN\beta \right)\Big) $}
                        & \splitcelltab{$  \Big( \left( 2\Tr\qb^2 - 9\beta^2 \right)\beta\qb $,\\
                                    $\left( 2\Tr\qb^2 - 9\beta^2 \right)\Tr\qb^2 \Big)$} & $ \tangentQS\times\tangentS[^0] $
                            & \ref{sec:uniaxiality_constrain}\\
            \hline
            $ \NN $
                & $\left(\nullb, \lambdaNN\right)$
                    & $\left(\nullb, 0\right)$
                        & $ \vnor $ & $ \tangentS[^0] $ 
                            & \ref{sec:no_normal_flow}\\
            \hline
            $ \NF $
                & $ \left( \lambdab_{\NF}, \lambda^{\bot}_{\NF} \right) $
                    & $\left(\nullb, 0\right)$
                        & $ \left( \vb, \vnor \right) $ & $\tangentS\times\tangentS[^0]$ 
                            & \ref{sec:no_flow}\\
            \hline
        \end{tabular}
        \caption{Generalized constraint forces $ (\fb_{\gamma}, \fnor[\gamma])\in\tangentS\times\tangentS[^0]\cong\tangentR $, 
            $  (\hb_{\gamma}, \omega_{\gamma})\in\tangentQS\times\tangentS[^0] \cong \tangentCQR $ and $  \Cb_{\gamma}\in\Vcal_{\gamma} $ 
            \wrt\ Lagrange parameter in $ \Vcal_{\gamma} $. 
            These terms apply in the Surface Conforming Beris-Edwards model \eqref{eq:model_conforming} in some circumstances. For the constraints $ \UN $ and $ \NF $ we represent the Lagrange parameter by a pair of Lagrange parameters, \ie\ $ \LambdabUN\cong\left( \lambdabUN, \lambdabotUN \right) $
            and $ \LambdabNF\cong\left( \lambdab_{\NF}, \lambda^{\bot}_{\NF} \right) $, where $ \lambdabUN\in\tangentQS $, $ \lambdab_{\NF}\in\tangentS $ and $ \lambdabotUN,\lambda^{\bot}_{\NF}\in\tangentS[^0] $.
            }
        \label{tab:constraints_conforming}
    \end{table}

    \subsection{Special Cases of the Surface Conforming Models}\label{sec:surface_conforming_special_cases}
    
    Again, we consider simplified models of in the literature and demonstrate their relation to \eqref{eq:model_conforming} by considering constraints on the nematic field  and/or the flow field $ \Cset_{\SC} \subset \{ \CB, \UN, \NN, \NF \} $ as well as special choices of material parameters. 

    \subsubsection{No-flow Constraint}\label{sec:surface_conforming_no_flow}

    For the setting $ \Cset_{\SC} = \{\NF\} $, Sec. \ref{sec:no_flow} yields the surface conforming Landau-de Gennes model
    \begin{align}
        \widetilde{M}\dot{\qb} 
            &= \hbEL + \hbTH  \formComma
        &\widetilde{M} \dot{\beta} 
            &= \omegaEL + \omegaTH  \formPeriod
    \end{align}
    A Lagrangean/Eulerian observer ($\dot{\qb}=\partial_t\qb$, $\dot{\beta}=\partial_t\beta$) and an isotropically viscous ($ \xi = 0 $) setting results in the ``free $ \beta $-model'' in \cite{Nestler_2020}, but without the penalty force to maintain tangential uniaxiality ($ 2\Tr\qb^2 = 9\beta^{2} $) approximately, \ie\ for ``$ \omega_{\beta} = 0 $'' in \cite{Nestler_2020}. 
    We get the other two models in \cite{Nestler_2020} by $ \Cset_{\SC} = \{\CB, \NF\} $ for $\beta\equiv\beta_{0}\in\{0,  -\frac{1}{3}S^*_{\THT}\}$ with 
    $ S^*_{\THT}:= \frac{1}{4 c} (\sqrt{b^2 - 24 ac}-b) $, see Sec. \ref{sec:constant_beta}.
    This yields
    \begin{align}
        {M}\dot{\qb} 
                    &= \hbEL + \hbTH  \formComma
    \end{align}
    where the solution is either flat-degenerate for $ \beta_0=0 $, as also considered in \cite{Kralj_SM_2011,Jesenek_SM_2015}, or tangential uniaxial at the thermotropic equilibrium $ \hbTH=\nullb $ for $ \beta_{0} = -\frac{1}{3}S^*_{\THT} $, see also \cite{Napoli_PRE_2012,Golovaty_JNS_2015,Golovaty_JNS_2017,Nitschke_2018,Novack_SIAMJMA_2018}. These three models have been derived in \cite{Nestler_2020} as thin film limits of a three-dimensional Landau-de Gennes model. A numerical comparison leads to fundamental differences in responds to curvature. While $\beta_0 = 0$ essentially leads to two-dimensional characteristics and only accounts for intrinsic curvature effects, the models with $\beta_0 \neq 0$ retain characteristics of the three-dimensional Landau-de Gennes model and introduce additional curvature coupling terms. In addition to isotropic geometry coupling, also alignment of the director field with the principle curvature lines is considered. Within the ``free $ \beta $-model'' the geometric coupling is even stronger. Depending on the strength of the curvature also phase transitions can be enforced and without the mentioned penalty force to maintain tangential uniaxiality, curvature can also induce biaxiality, see \cite{Nestler_2020}.

    \subsubsection{Nematodynamics on Flat Surfaces} \label{sec:model_flat_degenerated_flat_surface}
    
    There is a huge literature on nematodynamics in two dimensions. We obtain this situation by considering a geometrically stationary flat surface ($ \vnor=0 $, $ \shop=\nullb $) and compare the resulting model with classical Beris-Edwards models in two dimensions. We stipulate flat degenerate Q-tensor fields ($\beta\equiv0$) and take the Eulerian perspective  $ \Vb_{\!\ofrak}=\vb_{\!\ofrak}=\nullb $.
   With this setting the surface conforming equations \eqref{eq:model_conforming} yield
   \begin{subequations}\label{eq:model_conforming_degenerated_flatsurf}
   \begin{gather}
   \begin{align}
       \rho\left( \partial_t \vb + \nabla_{\vb}\vb \right) \label{eq:model_conforming_degenerated_flatsurf_fluid}
           &= \nabla\left(a\Tr\qb^2 + c\Tr\qb^4 - p\right) \\
           &\quad + \div\left( L\left( (\nabla\qb)^{T} \dbdot\nabla\qb - \frac{\| \nabla\qb \|^2}{2}\Id_{\R^2} \right) + \sigmabIM + \coeffIF(\nabla\vb + \nabla^T\vb) + \xi\sigmabNV^1 + \xi^2\sigmabNV^2 \right) \notag
   \end{align}\\
      \widetilde{M}\timeD\qb \label{eq:model_conforming_degenerated_flatsurf_molecular}
           = L\Delta\qb -2 \left( a + c\Tr\qb^2 \right)\qb + \frac{\coeffIF\xi}{2}(\nabla\vb + \nabla^T\vb) + \xi^2\widetilde{\hb}^{2,\Phi}_{\NV} \\
       \div\vb
           = 0
   \end{gather}
   \end{subequations}
   where $ \Phi\in\{\jau,\mfrak\} $, $\widetilde{M} =  M + \frac{\coeffIF\xi^2}{2}$, $ \nabla^T\vb := (\nabla\vb)^T $, $ [\Id_{\R^2}]_{ij}=\delta_{ij} $ and 
   \begin{gather*}
   \begin{align*}
        \timeD[\jau]\qb &= \timeJ\qb = \dot{\qb} - \frac{1}{2}(\nabla\vb - \nabla^T\vb)\qb + \frac{1}{2}\qb(\nabla\vb - \nabla^T\vb) \formComma
            & \timeD[\mfrak]\qb &= \dot{\qb} = \partial_t\qb + \nabla_{\vb}\qb \formComma
   \end{align*}\\
   \begin{align*}
        \sigmabIM[\jau] &= M \left( \qb\timeJ\qb - (\timeJ\qb)\qb \right) \formComma
            & \sigmabIM[\mfrak] &= \nullb \formComma
   \end{align*}\\
   \begin{align*}
        \sigmabNV^1 &= -\coeffIF\left( \timeJ\qb + \frac{3}{2}\qb(\nabla\vb + \nabla^T\vb) + \frac{1}{2}(\nabla\vb + \nabla^T\vb)\qb \right)
                     = -\coeffIF\left( \dot{\qb} + \qb(3\nabla\vb + \nabla^T\vb) + (\nabla^T\vb)\qb \right)\formComma \\
        \sigmabNV^2 &= \coeffIF\left( \qb\timeJ\qb + \frac{1}{2}\qb( \nabla\vb + \nabla^T\vb )\qb + \frac{\Tr\qb^2}{4}( \nabla\vb + \nabla^T\vb )  \right)
                     = \coeffIF\left( \qb\dot{\qb} + \frac{1}{2}\qb(\nabla^T\vb )\qb + \frac{\Tr\qb^2}{2}\nabla\vb   \right) \formComma
   \end{align*}\\
   \begin{align*}
        \widetilde{\hb}^{2,\jau}_{\NV}	&= -\frac{\coeffIF}{4} \left( \qb(\nabla\vb + \nabla^T\vb) + (\nabla\vb + \nabla^T\vb)\qb - 2(\qb\dbdot\nabla\vb)\Id_{\R^2} \right) \formComma\\
        \widetilde{\hb}^{2,\mfrak}_{\NV} &= -\frac{\coeffIF}{2} \left( \qb\nabla\vb + (\nabla^T\vb)\qb - (\qb\dbdot\nabla\vb)\Id_{\R^2} \right) \formPeriod          
   \end{align*}
   \end{gather*}
   
   Existing flat models, especially if they are derived using the Onsager principle, use rather terms of molecular fields than time derivatives to describe fluid forces.
   In order to get a better comparison, we substitute the Q-tensor eq.  \eqref{eq:model_conforming_degenerated_flatsurf_molecular} into the fluid eq. \eqref{eq:model_conforming_degenerated_flatsurf_fluid}.
   To do that, we have to invert $ \widetilde{M} $ to $ \widetilde{M}^{-1} = \frac{2}{2M + \coeffIF\xi^2}  $ and the equations become non-polynomial in $ \xi $ as a consequence.
   Considering only small nematic anisotropy ($|\xi| \ll 1$) it is justified to linearize the model in $ \xi $.
   If we combine the Lagrangian force fields to $ \hb_\potenergy  = L \Delta\qb - 2(a + c\Tr\qb^2)\qb $ and stress fields to $ \Sigmab_\potenergy = L\left( (\nabla\qb)^{T} \dbdot\nabla\qb - \frac{\| \nabla\qb \|^2}{2}\Id_{\R^2}\right) + (a\Tr\qb^2 + c\Tr\qb^4)\Id_{\R^2}$, the linearized Jaumann model ($ \Phi=\jau $) of \eqref{eq:model_conforming_degenerated_flatsurf} becomes
   \begin{subequations}\label{eq:model_conforming_degenerated_flatsurf_jaumann_linear}
   \begin{gather}
          \rho\left( \partial_t \vb + \nabla_{\vb}\vb \right) 
              = -\nabla p + \div\left( \Sigmab_\potenergy + \coeffIF(\nabla\vb + \nabla^T\vb) 
                + (\qb\hb_\potenergy - \hb_\potenergy \qb)  -  \frac{\coeffIF\xi}{M} \hb_\potenergy -\coeffIF\xi \widetilde{\sigmab} \right) + \landau(\xi^2) \formComma\\
          \timeJ\qb
              = \frac{1}{M}\hb_\potenergy + \frac{\coeffIF\xi}{2M}(\nabla\vb + \nabla^T\vb) + \landau(\xi^2) \formComma\\
          \div\vb
              = 0 \formComma
   \end{gather}
   \end{subequations}
   where $\widetilde{\sigmab} = \qb( \nabla\vb + \nabla^T\vb ) + ( \nabla\vb + \nabla^T\vb )\qb $. By neglecting all $ \landau(\xi^2) $-terms, this model is an anisotropic nematic viscosity model. This is reflected in the presence of the stress tensor $ \widetilde{\sigmab} \in \tangentSymS$, which states the difference between the lower- and upper-convected, \resp\ Jaumann, nematic rate,
   \ie\ it is $\widetilde{\sigmab}=\timeLll\qb - \timeLuu\qb = 2(\timeJ\qb- \timeLuu\qb)$, see \cite{NitschkeVoigt_JoGaP_2022,NitschkeVoigt_2023},
   or alternatively the difference between the Lie derivative of the covariantly and contravariantly represented Q-tensor field in direction of $ \vb $. Nematic viscosity results in a material flow aligning with the Q-tensor field.
   %The reason for this is that when the flow and the nematics are not aligned, additional inner friction occurs, requiring more work on the system. 
   
   For $\xi=0$ the resulting isotropic viscous model of \eqref{eq:model_conforming_degenerated_flatsurf_jaumann_linear} reads
   \begin{subequations}\label{eq:model_conforming_degenerated_flatsurf_jaumann_linear_isotropic}
   \begin{gather}
        \rho\left( \partial_t \vb + \nabla_{\vb}\vb \right) 
              = -\nabla p + \div\left( \Sigmab_\potenergy  + \coeffIF(\nabla\vb + \nabla^T\vb) 
                + (\qb\hb_\potenergy - \hb_\potenergy \qb) \right) \formComma\\
          \timeJ\qb
              = \frac{1}{M}\hb_\potenergy
              = \frac{1}{M} \left( L\Delta\qb -2 \left( a + c\Tr\qb^2 \right)\qb \right) \formComma\\
          \div\vb
              = 0 \formComma
   \end{gather}
   \end{subequations}
   where $ \div\Sigmab_\potenergy = -\hb_\potenergy\dbdot\nabla\qb $. Eq. \eqref{eq:model_conforming_degenerated_flatsurf_jaumann_linear_isotropic} 
    corresponds with a classical Beris-Edwards model in 2D, see e.g. \cite{PhysRevE.63.056702,PhysRevE.67.051705}. This model can be shown to rigorously correspond with the Eriksen-Leslie model \cite{Wang_SIAMJMA_2015}. Analytical results on \eqref{eq:model_conforming_degenerated_flatsurf_jaumann_linear_isotropic} can be found in \cite{paicu2011energy,paicu2012energy,GUILLENGONZALEZ201584,Abels_SIAMJMA_2014,Abels_2016}.
   
   Considering a linearized material-model ($ \Phi=\mfrak $), contrarily to the Jaumann models \eqref{eq:model_conforming_degenerated_flatsurf_jaumann_linear} or \eqref{eq:model_conforming_degenerated_flatsurf_jaumann_linear_isotropic}, would not exhibit the ``broken symmetry''	term $ \qb\hb_\potenergy - \hb_\potenergy \qb $.
    
    \subsubsection{Nematodynamics on Geometrically Stationary Surfaces} \label{sec:Nematodynamics on Geometrically Stationary Surfaces}
   
    We consider a geometrically stationary but curved surface ($ \shop\neq\nullb $) without genuine uniaxiality, \ie\ $ \Cset_{\SC} = \{\CB,\NN\} $. By neglecting anisotropic viscosity $ (\xi=0) $ the associated surface conforming Beris-Edwards model corresponding to \eqref{eq:model_conforming} yields
    \begin{subequations}\label{eq:model_conforming_NN}
    \begin{gather}
        \rho\ab \label{eq:model_conforming_NN_tangentialeq}
            = \nabla(\pTH - p) + \div\left( \sigmabEL + \sigmabIM  + \sigmabNV^0 \right)
                + \left(2\shop\qb-3\beta_0\shop\right)  \left( \zetabEL + \zetabIM \right) \formComma\\
        M\timeD\qb \label{eq:model_conforming_NN_molecular}
            = \hbEL + \hbTH  \formPeriod \\
        \div\vb = 0
    \end{gather}
    \end{subequations}
    We thereby have substituted $ \vnor=0 $, instead of using it as a constraint equation in \eqref{eq:model_conforming_constraineqs}, and omitted \eqref{eq:model_conforming_normalseq} according to Sec. \ref{sec:no_normal_flow}.
    This gives an alternative to similar models, where the surface conformity as well as the constant normal eigenvalue of the Q-tensor field is implemented only approximately by penalizations of these constraints. A corresponding penalized Jaumann model ($\Phi=\jau$) to \eqref{eq:model_conforming_NN} is considered in \cite{BouckNochettoYushutin_2022}.

\section{Derivation} \label{sec:derivation}

\subsection{Lagrange-D'Alembert Principle}\label{sec:lagrangedalembert}

The derivation of the governing equations is based on the integral Lagrange-D'Alembert principle. In addition to simply transferring the classical principle to our setting of nematodynamics of fluid deformable surfaces, or superimposing Hamilton's principle of stationary action and Onsager's variational principle of least energy dissipation (\cf\ \cite{Doi_2011}), a theoretical foundation is provided by \cite{Marsden_2010}. Based on this, 
the Lagrange-D'Alembert principle has been used to derive themodynamically consistent models for two-phase surface deformable surfaces \cite{BachiniKrauseNitschkeVoigt_2023}, which provides the basis for a detailed investigation of the strong interplay of surface phase
composition, surface curvature, and surface hydrodynamics. We here follow a similar approach. 
We use the action functional $ \Acal = \int_{\Tcal}\Lfrak \dt $. 
This is the time integral of the Lagrangian $ \Lfrak=\energyK - \potenergy $, comprising the kinetic energy $ \energyK $ and a (free) potential energy $ \potenergy $.
Additionally, an essential component is the energy flux potential $ \fluxpotential $, which governs the energy exchange of an open system.
These dependencies are influenced by both state and process variables.
The primary state variables include the parameterization  $ \para $ for determining the positions of material particles in $ \R^3 $, which collectively form the surface $ \surf $. 
Additionally, we have the Q-tensor field  $ \Qb\in\tangentQR $ that describes their nematic order.
The main process variables consist of the material velocity field $ \Vb\in\tangentR $ and a Q-tensor field rate $ \Dphi\Qb\in\tangentQR $.
Furthermore, we also consider artificially generated state and process variables in the form of Lagrange multipliers to enforce constraints on both the primary state and process variables.
Formally, if $ \Cset_{\potenergy} $ and $ \Cset_{\fluxpotential} $ represent sets of state and process constraint identifiers,
then $ \lambdab_{\gamma} \in \Vcal_{\gamma} \le \tangentR[^{n_\gamma}] $ are these state or process variables for $ \gamma\in\Cset_{\potenergy} $ or $ \gamma\in\Cset_{\fluxpotential} $. 
We provide further details on this in the respective subsections below.
Eventually, the  integral Lagrange-D'Alembert principle reads
\begin{align}
\MoveEqLeft[6]
\innerH{\Tcal;\tangentR}{\deltafrac{\Acal}{\para}, \Wb}
        +\innerH{\Tcal;\tangentQR}{\deltafrac{\Acal}{\Qb}, \Psib}
        + \sum_{\gamma\in\Cset_{\potenergy}} \innerH{\Tcal;\Vcal_{\gamma}}{ \deltafrac{\Acal}{\lambdab_{\gamma}}, \thetab_{\gamma} } \notag\\
   &= \int_{\Tcal} \innerH{\tangentR}{\deltafrac{\fluxpotential}{\Vb},\Wb}
        + \innerH{\tangentQR}{\deltafrac{\fluxpotential}{\Dphi\Qb}, \Psib} 
        + \sum_{\gamma\in\Cset_{\fluxpotential}} \innerH{\Vcal_{\gamma}}{ \deltafrac{\fluxpotential}{\lambdab_{\gamma}}, \thetab_{\gamma} }\dt
    \label{eq:LDA_time_global}
\end{align}
for all virtual displacements $ \Wb\in\tangentR $, $ \Psib\in\tangentQR $ and $ \thetab_{\gamma}\in \Vcal_{\gamma} $,
where $ \innerH{\Tcal;\Vcal}{\Rb_1, \Rb_2} = \int_{\Tcal} \innerH{\Vcal}{\Rb_1, \Rb_2} \dt$ is the temporally and spatially global inner product for all $ \Rb_1, \Rb_2\in\Vcal\le\tangentR[^n] $, and $ \Tcal=[t_0,t_1] $ is a time interval.
Note that \eqref{eq:LDA_time_global} remains invariant with respect to the actual choice of the rate $ \Dphi\Qb $, as stated in Lemma \ref{lem:ttensor_variational_independence_of_process_variable}, as long as $ \tangentQR $ is closed under the associated time derivative. In specific instances, we interchangeably employ either the material rate $ \Dmat\Qb $ \eqref{eq:DmatQ} or the Jaumann rate $ \Djau\Qb $ \eqref{eq:DjauQ}, depending on which one is more convenient for the given situation.
Even though we only consider $\fluxpotential\ge0$ in this paper, it should be noted that a negative flux potential is permissible, see the discussion about nematic activity in Section \ref{sec:discussion}.
The case $\fluxpotential\ge0$ imply pure dissipation of the total energy $\energyK + \potenergy$, as we can see in Section \ref{sec:energy_rate}.
Therefore it would be reasonable to call $\fluxpotential$ the dissipation potential in this scenario.
It is also noteworthy that the so-called Rayleighian $\fluxpotential+\ddt\potenergy$ can be formulated in terms of the dissipation potential and the rate of the potential energy.

Before we start the detailed discussion of the fundamental components, let's examine the time localization of \eqref{eq:LDA_time_global}. Given that the time integral is outside the variation, and considering that the time interval $ \Tcal $ is arbitrary, we can straightforwardly omit the time integration from the right-hand side of \eqref{eq:LDA_time_global}. 
Furthermore, if we assume that the potential energy $ \potenergy $ is a genuine state potential, meaning it depends solely on the state variables, we can establish that
$ \innerH{\Tcal;\cdot}{\deltafrac{\potenergy}{\cdot}, \cdot} = \int_{\Tcal} \innerH{\cdot}{\deltafrac{\potenergy}{\cdot}, \cdot} \dt $ for all matching terms 
in the left-hand side of \eqref{eq:LDA_time_global}.
Therefore, we can also omit the time integration in this context.
It is important to note that this conclusion is not trivially derived from the definition of the inner product $ \innerH{\Tcal;\cdot}{\cdot, \cdot} $. The handling of variational terms such as "$ \deltafrac{\potenergy}{\Qb} $" is influenced not only by the choice of the inner product but could also include terms arising from time integrations by parts if $ \potenergy $ were not instantaneous.
An example for this issue is the kinetic energy
\begin{align}\label{eq:kinetic_energy}
\energyK 
    &:= \frac{1}{2}\int_{\surf}\rho \Vb\Vb \dS
      =  \frac{1}{2}\normHsq{\tangentR}{\sqrt{\rho}\Vb}
\end{align} 
for point masses, where $ \rho\in\tangentS[^0] $ is the material mass density.
It is non-instantaneous and due to this its action $\Acal_K:=\int_{\Tcal} \energyK \dt$ leads to
\begin{align*}
\innerH{\Tcal;\tangentR}{\deltafrac{\Acal_K}{\para}, \Wb} 
      &= - \int_{\Tcal} \innerH{\tangentR}{\rho\Dmat\Vb, \Wb} \dt
\end{align*}
and $ \innerH{\Tcal;\tangentQR}{\deltafrac{\Acal_K}{\Qb}, \Psib} = 0 $,
if we assume $ \Wb\vert_{\partial\Tcal} = 0 $, variational mass conservation $  \innerH{\tangentR}{\deltafrac{m}{\para}, \Wb} = 0 $ and
temporal mass conservation $ \ddt m\vert_{\mathcal{M}} = 0 $ for the material mass $ m\vert_{\mathcal{M}}=\int_{\mathcal{M}} \rho\ \dS $ for all subsurfaces $ \mathcal{M} \subseteq \surf $. 
See \cite{BachiniKrauseNitschkeVoigt_2023} for more details of this calculation.
Adding all this up yields the temporally local, but still spatially global, formulation of \eqref{eq:LDA_time_global}, which reads
\begin{subequations}\label{eq:LDA_time_local}
\begin{align}
    0 &= \innerH{\tangentR}{\rho\Dmat\Vb, \Wb}
        + \innerH{\tangentR}{\deltafrac{\potenergy}{\para}, \Wb}
        + \innerH{\tangentQR}{\deltafrac{\potenergy}{\Qb}, \Psib} 
        + \sum_{\gamma\in\Cset_{\potenergy}} \innerH{\Vcal_{\gamma}}{ \deltafrac{\potenergy}{\lambdab_{\gamma}}, \thetab_{\gamma} }\notag\\
      &\quad\quad + \innerH{\tangentR}{\deltafrac{\fluxpotential}{\Vb},\Wb}
        +  \innerH{\tangentQR}{\deltafrac{\fluxpotential}{\Dphi\Qb}, \Psib} 
        + \sum_{\gamma\in\Cset_{\fluxpotential}} \innerH{\Vcal_{\gamma}}{ \deltafrac{\fluxpotential}{\lambdab_{\gamma}}, \thetab_{\gamma} }\formComma\\
     0&= \ddt m\vert_{\mathcal{M}} = \int_{\mathcal{M}} \dot{\rho} + \rho\DivC\Vb \dS
            \quad\quad (\forall \mathcal{M} \subseteq \surf)
 \end{align}
\end{subequations}
for all virtual displacements $ \Wb\in\tangentR $, $ \Psib\in\tangentQR $ and $ \thetab_{\gamma}\in \Vcal_{\gamma} \le \tangentR[^{n_\gamma}] $.
Since $ \Wb $, $ \Psib $ and all $ \thetab_{\gamma}$ are arbitrary and mutual independent, the local formulation of \eqref{eq:LDA_time_local} results in
\begin{subequations}\label{eq:LDA}
\begin{align}
    &&\rho\Dmat\Vb &= \totalFb := \totalFb_{\potenergy} + \totalFb_{\fluxpotential}\formComma \label{eq:LDA_fluid}&&\\
    && \nullb &= \Hb := \Hb_{\potenergy} + \Hb_{\fluxpotential}\formComma \label{eq:LDA_molecular}&&\\
    &&\dot{\rho} &= -\rho\DivC\Vb\formComma \label{eq:LDA_conti} &&\\
    &&\nullb &= \Cb_{\gamma} \formComma \qquad \forall \gamma\in\Cset:=\Cset_{\potenergy}\cup\Cset_{\fluxpotential} \formComma\label{eq:LDA_constraints}&&
 \end{align}
\end{subequations}
where $ \totalFb\in\tangentR $, $ \Hb\in\tangentQR $ and $ \Cb_{\gamma}\in \Vcal_{\gamma} $ are generalized applied forces
comprising the Lagrangian forces $ \totalFb_{\potenergy} \in\tangentR $, $ \Hb_{\potenergy}\in\tangentQR $, $ \Cb_{\gamma}\in \Vcal_{\gamma} $ for all $ \gamma\in\Cset_{\potenergy} $,
and flux forces $ \totalFb_{\fluxpotential} \in\tangentR $, $ \Hb_{\fluxpotential}\in\tangentQR $, $ \Cb_{\gamma}\in \Vcal_{\gamma} $ for all $ \gamma\in\Cset_{\fluxpotential} $.
These forces are given in such a way that
\begin{subequations}\label{eq:Lagrange_forces_def} \label{eq:flux_forces_def}
\begin{align}
    \innerH{\tangentR}{\totalFb_{\potenergy}, \Wb} &= -\innerH{\tangentR}{\deltafrac{\potenergy}{\para}, \Wb}\formComma
        &\innerH{\tangentR}{ \totalFb_{\fluxpotential} ,\Wb} &= -\innerH{\tangentR}{\deltafrac{\fluxpotential}{\Vb},\Wb}\formComma \\
    \innerH{\tangentQR}{\Hb_{\potenergy}, \Psib} &= -\innerH{\tangentQR}{\deltafrac{\potenergy}{\Qb}, \Psib}\formComma
        &\innerH{\tangentQR}{\Hb_{\fluxpotential}, \Psib} &= -\innerH{\tangentQR}{\deltafrac{\fluxpotential}{\Dphi\Qb}, \Psib} \formComma \\
%    \forall \gamma\in\Cset_{\potenergy}: && \forall \gamma\in\Cset_{\fluxpotential}: \notag\\
    \innerH{\Vcal_{\alpha}}{\Cb_{\gamma}, \Psib} &= -\innerH{\Vcal_{\alpha}}{\deltafrac{\potenergy}{\lambdab_{\gamma}}, \thetab_{\gamma}}  \!\!\!\formComma \; \forall \gamma\in\Cset_{\potenergy}
        &\innerH{\Vcal_{\alpha}}{\Cb_{\gamma}, \Psib} &= -\innerH{\Vcal_{\alpha}}{\deltafrac{\fluxpotential}{\lambdab_{\gamma}}, \thetab_{\gamma}} \!\!\! \formComma \; \forall \gamma\in\Cset_{\fluxpotential}\label{eq:Cgamma_def}
\end{align} 
\end{subequations} % depricated labels: \label{eq:Lagrange_forces_def} and \label{eq:flux_forces_def}
hold for all $ \Wb\in\tangentR $, $ \Psib\in\tangentQR $ and $ \thetab_{\gamma}\in \Vcal_{\gamma} \le \tangentR[^{n_\gamma}] $.
We derive and discuss the Lagrangian forces in Section \ref{sec:lagrangian_forces} and the flux forces in Section \ref{sec:flux_forces}.
These sections reveal a particularly intriguing issue.
Collectively, the fluid force $ \totalFb $ yields
$ \innerH{\tangentR}{\totalFb, \Wb} = \innerH{\tangentR}{\Fb, \Wb} + \innerH{\tangentQR}{\Hb, \gauge{\Wb}\Qb} $ for all $ \Wb\in\tangentR $ and a certain $ \Fb\in\tangentR $. 
The directional deformation $ \gauge{\Wb}\Qb $ \eqref{eq:gaugeR} remains undetermined a priori. The choice of determining $ \gauge{\Wb}\Qb $ in the form of a gauge of surface independence is arbitrary. This is due to the fact that the Q-tensor equation \eqref{eq:LDA_molecular} reveals $ \Hb = \nullb $. Consequently, it follows that $ \totalFb = \Fb $ as long as $ \gauge{\Wb}\Qb $ is finite. 
Therefore, the local Lagrange D'Alembert equations \eqref{eq:LDA} are independent of the choice of a suitable $ \gauge{\Wb}\Qb $, which is in stark contrast to $\hil$-gradient flows on evolving surfaces \cite{NitschkeSadikVoigt_A_2022}. However, even though the net force $ \totalFb = \sum_{\alpha}\totalFb_{\alpha} $ is independent of $ \gauge{\Wb}\Qb $, it's essential to note that its component forces $ \totalFb_{\alpha}\in\tangentR $, presented separately in all subsections of Sections \ref{sec:lagrangian_forces} and \ref{sec:flux_forces}, are not. Considering the component force decomposition
\begin{align}\label{eq:total_force_splitting_weak}
    \totalFb^{\Psi}_{\alpha} &= \Fb_{\alpha} + \gaugeFb^{\Psi}_{\alpha}
    &&\text{\st}
    &\innerH{\tangentR}{\gaugeFb^{\Psi}_{\alpha},\Wb} &= \innerH{\tangentQR}{\Hb_{\alpha},\gauge{\Wb}\Qb }
\end{align}
for all $ \Wb\in\tangentR $. 
The superscript $ \Psi $ indicates a particular choice of $ \gauge{\Wb}\Qb $, which represents a gauge of surface independence for Q-tensor fields \cite{NitschkeSadikVoigt_A_2022}. We consider the following two gauges of surface independence $ \gauge{}^{\Psi}\Qb=\nullb $ for $ \Psi\in\{\mfrak,\jau\} $:
\begin{align}
    \gauge{}^{\mfrak}\Qb &=\nullb &&:\Longleftrightarrow  \label{eq:material_gauge}
        & \forall\Wb\in\tangentR:\quad\quad &\gauge{\Wb}\Qb = \nullb \formComma \\
    \gauge{}^{\jau}\Qb &=\nullb &&:\Longleftrightarrow    \label{eq:jaumann_gauge}
        & \forall\Wb\in\tangentR:\quad\quad& \gauge{\Wb}\Qb - \Abcal[\Wb]\Qb + \Qb\Abcal[\Wb] = \nullb \formPeriod
\end{align}
Here $ \Abcal[\Wb]\in\tangentAR $ \eqref{eq:Abcal} is the skew-symmetric deformation gradient.
For $ \Psi = \mfrak $, we refer to $ \gauge{}^{\mfrak}\Qb =\nullb $ as the material gauge of surface independence. This implies that $ \Qb $ is independent of any suitable deformations $ \para + \eps\Wb $ from an Euclidean perspective. In other words, its director fields and scalar orders remain constant under deformations. This differs from the tangential material gauge of surface independence given in \cite{NitschkeSadikVoigt_A_2022}, where only the tangential part of $ \gauge{\Wb}\qb $ vanished for $ \qb\in\tangentQS $. This distinction is similar to the difference between $ \Dmat\Qb $  and $ \dot{\qb}=\projS[^2](\Dmat\qb) $, as described in \cite{NitschkeVoigt_2023}. 
For $ \Psi = \jau $, we refer to $ \gauge{}^{\jau}\Qb =\nullb $ as the Jaumann gauge of surface independence. This implies a more intrinsic independence with respect to deformations. While scalar orders remain constant, director fields are effectively "frozen" in the deformation, meaning that they appear constant to an observer on the surface under deformations. This represents a genuine extension of the tangential Jaumann gauge introduced in \cite{NitschkeSadikVoigt_A_2022}.
For the decomposition $ \Qb=\Qdepl(\qb,\etab,\beta) $ \eqref{eq:qtensor_decomposition} the Jaumann gauge \eqref{eq:jaumann_gauge} is equivalent to $ \gauge{}^{\jau}\qb=\nullb $,  $ \gauge{}^{\jau}\etab=\nullb $, and $ \gauge{}\beta = 0 $. 
Eventually, the gauge-dependent force $ \gaugeFb^{\Psi}_{\alpha} $ vanishes for $ \Psi=\mfrak $. 
For $ \Psi = \jau $ we use Corollary \ref{cor:jaumann_force_weak}, which yields
\begin{align}
    \totalFb^{\mfrak}_{\alpha} &= \Fb_{\alpha} \formComma \label{eq:total_material_force}\\
    \totalFb^{\jau}_{\alpha} &= \Fb_{\alpha}
                                + \DivC \left( \left( \Id + \normal\otimes\normal \right) (\Qb\Hb_{\alpha} - \Hb_{\alpha}\Qb) \IdS \right) \formPeriod \label{eq:total_Jaumann_force}
\end{align}
It's important to reiterate that \eqref{eq:total_material_force} and \eqref{eq:total_Jaumann_force} do not have any differential impact on the local equations \eqref{eq:LDA}. This is because the sum of all surface gauge-dependent forces cancels out due to $ \Hb=\sum_{\alpha}\Hb_{\alpha} = \nullb $ \eqref{eq:LDA_molecular}. 
In any case, the distinction between different gauge-dependent forces only applies in situations where $ \Hb_{\alpha} \neq \nullb $, such as when $ \alpha\in\{ \EL, \THT, \SC, \CB, \UN, \IM, \NV \} $, as specified in the associated subsections.

We find it consistent to either employ a fully tangential calculus or a $ \R^3 $-related calculus, and strive to avoid mixing both approaches. This principle extends to decompositions that respect tangential spaces and the choice of differential operators.
For surface-conforming models, particularly those formulated without Lagrange parameters, it is convenient to employ a tangential calculus. When the component fluid forces are described by stress tensor fields $ \Sigmab_{\alpha}\in\tangentR\otimes\tangentS $ such that $ \Fb_{\alpha} = \DivC\Sigmab_{\alpha} $, we use an orthogonal decomposition as follows:
\begin{gather}\label{eq:stress_decomposition}
    \Sigmab_{\alpha} = \sigmab_{\alpha} + \normal\otimes\varsigmab_{\alpha}\in  \tangentR\otimes\tangentS \formComma \\
    \begin{aligned}
        \sigmab_{\alpha} &=\IdS\Sigmab_{\alpha}\in\tangentS[^2]\formComma 
                &&&\varsigmab_{\alpha} &=\normal\Sigmab_{\alpha}\in\tangentS\formPeriod \notag
    \end{aligned}
\end{gather}
The forces themselves can be decomposed as:
\begin{gather}\label{eq:force_decomposition}
     \Fb_{\alpha} = \fb_{\alpha} + \fnor[\alpha]\normal = \DivC\Sigmab_{\alpha}\in\tangentR \formComma\\
     \begin{aligned}
        \fb_{\alpha} 
            &= \IdS\Fb_{\alpha}
            = \div\sigmab_{\alpha} - \shop\varsigmab_{\alpha} \in\tangentS\formComma
         &&&\fnor[\alpha] 
            &= \normal\Fb_{\alpha}
             = \div\varsigmab_{\alpha} + \shop\dbdot\sigmab_{\alpha} \in\tangentS[^0] \formPeriod\notag
     \end{aligned}
\end{gather}
These expressions are valid when the force is represented by an associated stress tensor field, as defined in \eqref{eq:DivCSigma_decomposed}. The momentum force decomposition for $ \rho\Dmat\Vb $ is already provided in \eqref{eq:DmatV}.
An orthogonal decomposition of Q-tensor fields is formulated in \eqref{eq:qtensor_decomposition}, and we express Q-tensor force fields as:
\begin{gather}\label{eq:Qtensorforce_decomposition}
    \Hb_{\alpha}
         =\Qdepl(\hb_{\alpha},\zetab_{\alpha},\omega_{\alpha})
         = \hb_{\alpha} + \zetab_{\alpha}\otimes\normal + \normal\otimes\zetab_{\alpha} + \omega_{\alpha}\left( \normal\otimes\normal -\frac{1}{2}\IdS \right) \in \tangentQR\formComma \\
     \hb_{\alpha}
        = \projQS\Hb_{\alpha}
        = \IdS\Hb_{\alpha}\IdS - \frac{\Hb_{\alpha}\dbdot\IdS}{2}\IdS \in\tangentQS \formComma \notag\\
     \begin{aligned}
        \zetab_{\alpha} &= \IdS\Hb_{\alpha}\normal = \normal\Hb_{\alpha}\IdS \in\tangentS\formComma
           &&&\omega_{\alpha} &= \Hb_{\alpha}(\normal,\normal) \in\tangentS[^0] \formPeriod \notag
     \end{aligned}
\end{gather}

\subsection{Lagrangian Forces}\label{sec:lagrangian_forces}

\subsubsection{Elastic Forces}\label{sec:elastic}

As part of the potential energy, we consider the surface elastic free energy, which is expressed in operator and mixed-proxy formulation as follows:
\begin{align}\label{eq:energy_elastic}
   \energyEL
         &:= \frac{L}{2}\normHsq{\tangentQR\otimes\tangentS}{\nablaC\Qb}
          = \frac{L}{2} \int_{\surf} g^{ij} \partial_i Q^{AB} \partial_j Q_{AB} \dS \formComma
\end{align}
This energy term can be viewed as a surface modification of a one-constant approximation ($L := L_1$ and $L_2 = L_3 = L_4 = L_6 = 0$) of the elastic free energy \cite{Schiele_1983,Berreman_1984,Mori_1999,MottramNewton_2014} in a bulk material and represents the thin film limit of this energy under homogeneous Neumann boundary conditions which follows by applying Lemma \ref{lem:thinfilmgradient} to the gradient and using the rectangle rule for the integral. The thin film limit has been previously explored in \cite{Nestler_2020} for conforming Q-tensor fields and in \cite{Nitschke_2018} for conforming Q-tensor fields with a constant eigenvalue field in the normal direction.

Variation \wrt\ $ \Qb $ in direction of $ \Psib\in\tangentQR $ yields
\begin{align}
    \innerH{\tangentQR}{\deltafrac{\energyEL}{\Qb},\Psib}
        &= L \innerH{\tangentQR\otimes\tangentS}{\nablaC\Qb, \nablaC\Psib}
         = -\innerH{\tangentQR}{\HbEL,\Psib} \notag\\
     \HbEL \label{eq:elastic_molucular_field}
        &= L \DeltaC\Qb = L\divC\nablaC\Qb \in \tangentQR \formPeriod
\end{align}
Note that the space of Q-tensor fields is closed by the surface Laplace operator $ \DeltaC $, as demonstrated in \cite{NitschkeVoigt_2023}.
Consequently, there is no need for additional projections in this context.
With deformation formula \eqref{eq:varXF} and the commutator of $ \gauge{\Wb} $ and $ \nablaC $ given in Lemma \ref{lem:ttensor_commutator_gauge_nablaS},
the energy $ \energyEL $ yields
\begin{align}
    \innerH{\tangentR}{ \deltafrac{\energyEL}{\para} , \Wb } \hspace{-50pt}\notag\\
        &= L \left(\innerH{\tangentQR\otimes\tangentS}{\nablaC\Qb, \gauge{\Wb}\nablaC\Qb}
                        + \frac{1}{2}\innerH{\tangentS[^0]}{ \normsq{\tangentQR\otimes\tangentS}{\nablaC\Qb}  ,\Tr\Gb[\Wb]}\right)\notag\\
        &= L\left( \innerH{\tangentQR\otimes\tangentS}{\nablaC\Qb, \nablaC\gauge{\Wb}\Qb - (\nablaC\Qb)\nablaC\Wb} 
                        + \frac{1}{2}\innerH{\tangentS[^0]}{ \normsq{\tangentQR\otimes\tangentS}{\nablaC\Qb}  ,\Tr\nablaC\Wb}\right)\notag\\
        &=  -\innerH{\tangentQR}{ \HbEL , \gauge{\Wb}\Qb }
               + \innerH{\tangentR\otimes\tangentS}{  \SigmabEL ,\nablaC\Wb}\formComma\notag\\
    \SigmabEL \label{eq:elastic_stress_field}
         &= -L \projQS\left( (\nablaC\Qb)^{T} \dbdot\nablaC\Qb \right) 
          = - L \left( (\nablaC\Qb)^{T} \dbdot\nablaC\Qb - \frac{\normsq{\tangentR[^3]}{\nablaC\Qb}}{2}\IdS \right)
                  \in\tangentQS\formPeriod
\end{align}
Therefore the elastic force holds $ \FbEL = \DivC\SigmabEL \in \tangentR  $.
To prevent misinterpretations of the notation, it is  $ [(\nablaC\Qb)^{T} \dbdot\nablaC\Qb]_{ij} = \partial_i Q^{AB} \partial_j Q_{AB} $ valid  in mixed-proxy notation.

\paragraph{Elastic Forces for Surface Conforming Q-Tensor Fields:}
We substitute the surface conforming decomposition $ \Qb=\CQdepl(\qb,\beta)\in\tangentCQR $ \eqref{eq:surface_conforming_ansatz} into $ \HbEL $ \eqref{eq:elastic_molucular_field}.
The surface conforming part $ \proj_{\tangentCQR}\HbEL $ is already derived in \cite{NitschkeVoigt_2023}. 
This reference also provides a fully orthogonal decomposition of the Laplace operator $ \DeltaC:\tangentR[^2] \rightarrow \tangentR[^2] $, which restricted to $ \tangentCQR $ yields the complementary part $ \proj^{\bot}_{\tangentCQR}\HbEL $. 
Here we use that $ \shop^2=\meanc\shop -\gaussc\IdS $ and $ \nabla(\qb-\frac{\beta}{2}\IdS) = \nabla\qb-\frac{1}{2}\IdS\otimes\nabla\beta $ is valid by metric compatibility, which results in
\begin{subequations}\label{eq:elastic_conforming_molecular_field}
\begin{align}
    \HbEL &= \Qdepl(\hbEL,\zetabEL,\omegaEL) \in \tangentQR\\ % alles MProved
    \hbEL &= L \left(  \Delta\qb - (\meanc^2-2\gaussc)\qb + 3\beta \meanc \left( \shop - \frac{\meanc}{2}\IdS \right) \right) \in \tangentQS\\
    \zetabEL \label{eq:elastic_conforming_molecular_field_zeta}
        &= L\left( 2(\nabla\qb)\dbdot\shop + \qb\nabla\meanc - 3\shop\nabla\beta - \frac{3}{2}\beta\nabla\meanc \right) \in \tangentS \\
    \omegaEL &= L \left(  2 \meanc\shop \dbdot \qb + \Delta\beta  - 3\beta \left( \meanc^2 - 2\gaussc \right)  \right) \in\tangentS[^0]
\end{align}
\end{subequations}
for decomposition \eqref{eq:Qtensorforce_decomposition}.
Please note that the non-conforming component, which is exclusively represented by $ \zetabEL $, does not typically vanish, particularly for curved surfaces. As a result, it plays a role in the surface-conforming constraint stress discussed in Section \ref{sec:surface_conforming_constrain}.
Based on the proof of the fully orthogonal decomposition of the Laplace operator $ \DeltaC $ in \cite{NitschkeVoigt_2023}, we can extract the results for the mixed-proxy formulation of first-order derivatives for both tangential and purely normal tensor fields. 
Consequently, this leads to
\begin{align*}
    \partial_k Q^{AB} 
        &= \partial_k \left( q^{AB}-\frac{\beta}{2}\IdSC^{AB} \right) + \partial_k(\beta\normalC^A\normalC^B)\\
        &= \left(\tensor{q}{^{ij}_{|k}} -\frac{1}{2}\beta_{|k}g^{ij}\right)\partial_i\paraC^A\partial_j\paraC^B
           + \beta_{|k} \normalC^A\normalC^{B} \\
        &\quad + \left( q^{ij}\shopC_{ki} - \frac{3}{2}\beta \shopC^j_k \right) \normalC^A\partial_j\paraC^B
           + \left( q^{ij}\shopC_{kj} - \frac{3}{2}\beta \shopC^i_k \right) \partial_i\paraC^A \normalC^{B}
\end{align*}
for $ \Qb=\CQdepl(\qb,\beta) $.
After renaming some summation indices and exploiting orthogonality $ \normal\bot\partial_i\para $, we obtain the following expression:
\begin{align*}
\MoveEqLeft
    \left[  (\nablaC\Qb)^{T} \dbdot\nablaC\Qb \right]_{kl} = (\partial_k\Qb)\dbdot(\partial_l\Qb)\\
        &= g_{im}g_{jn} \left( \tensor{q}{^{ij}_{|k}} - \frac{1}{2}\beta_{|k}g^{ij} \right)\left( \tensor{q}{^{mn}_{|l}} - \frac{1}{2}\beta_{|l}g^{mn} \right)
          +  \beta_{|k} \beta_{|l}
          + 2 g_{jn} \left( q^{ij}\shopC_{ki} - \frac{3}{2}\beta \shopC^j_k \right) \left( q^{mn}\shopC_{lm} - \frac{3}{2}\beta \shopC^n_l \right) \\
        &= \tensor{q}{^{ij}_{|k}}\tensor{q}{_{ij|k}} 
            +  \frac{3}{2}\beta_{|k} \beta_{|l}
            + 2 q^{ij} q^m_j\shopC_{ki}\shopC_{lm}
            - 6\beta q^{ij} \shopC_{ki}\shopC_{lj}
            + \frac{9}{2} \beta^2 \shopC^{j}_{k}\shopC_{lj}\\
        &= \left[ (\nabla\qb)^{T} \dbdot\nabla\qb 
                    + \frac{3}{2} \nabla\beta \otimes \nabla\beta 
                    - 6 \beta\shop\qb\shop
                    + \left(\Tr\qb^2 + \frac{9}{2}\beta^2 \right)\shop^2 \right]_{kl} \formPeriod % MProved
\end{align*}
Here, we use the fact that $ 2\qb^2=(\Tr\qb^2)\IdS $.
The Gauss equation for surfaces\footnote{We employ the representation of the Riemannian curvature tensor $ \boldsymbol{\mathcal{R}} $ as
                        $ \mathcal{R}_{ijkl} = \gaussc\left( g_{ik}g_{jl} - g_{il}g_{jk} \right) = \shopC_{ik}\shopC_{jl} - \shopC_{il}\shopC_{jk}  $ on surfaces.}
gives $ \shop\qb\shop = \gaussc\qb + (\qb \dbdot \shop)\shop $, and $ \projQS\shop^2 = \meanc\projQS\shop $.
As a result, the elastic stress field  \eqref{eq:elastic_stress_field} becomes:
\begin{align}\label{eq:sigmabEL}
    \SigmabEL = \sigmabEL
        &= -L \Bigg( \projQS((\nabla\qb)^{T} \dbdot\nabla\qb ) 
                    + \frac{3}{2}\projQS( \nabla\beta \otimes \nabla\beta ) 
                    - 6\gaussc\beta\qb \\
        &\quad\quad\quad\quad + \frac{1}{2}\left( 2\meanc\Tr\qb^2 - 12\beta \qb \dbdot \shop  + 9 \meanc\beta^2 \right)\left( \shop - \frac{\meanc}{2}\IdS \right)\Bigg)
          \in \tangentQS \formPeriod \notag%MProved
\end{align}

\subsubsection{Thermotropic Forces}\label{sec:thermotropic}

As part of the potential energy, we consider the surface thermotropic free energy described by
\begin{align}\label{eq:energy_thermotropic}
    \energyTH
        &= \innerH{\tangentSymR}{a\Qb + \frac{2b}{3}\Qb^2 + c\Qb^3, \Qb}
\end{align}
which includes thermotropic coefficients $ a,b,c\in\R $.
This energy is analogous to that of a bulk system, as found in references such as \cite{Schiele_1983,Berreman_1984,Mori_1999,MottramNewton_2014}, but restricted to the surface.
Trivially, \eqref{eq:energy_thermotropic} is the thin film limit of the thermotropic energy within a bulk, regardless of the choice of natural thin-film boundary conditions.
A thin-film limit analysis has previously been performed in \cite{Nestler_2020} for conforming Q-tensor fields and in \cite{Nitschke_2018} for conforming Q-tensor fields with a constant eigenvalue field in normal direction.
Note that we can simplify our calculations using the identity $ \Tr\Qb^4 = \frac{1}{2}\left(\Tr\Qb^2\right)^2 $.

We have already calculated the variation of  \eqref{eq:energy_thermotropic} \wrt\ $ \Qb $ in direction of $ \Psib\in\tangentQR $ in \cite{NitschkeVoigt_2023}, which yields the following expression:
\begin{align}
     \innerH{\tangentQR}{\deltafrac{\energyTH}{\Qb},\Psib}
             &= -\innerH{\tangentQR}{\HbTH,\Psib} \notag\\
     \HbTH \label{eq:thermotropic_molucular_field}
            &= -2 \left( a \Qb + b\projQR\Qb^2 + c\Tr(\Qb^2)\Qb \right)
                \in \tangentQR \formPeriod
\end{align}
Note that $ \Qb^2\in\tangentSymR $ is not a Q-tensor field, which justifies the projection $ \projQR\Qb^2 = \Qb^2 - \frac{\Tr\Qb^2}{3}\Id $. 
The spatial variation is very similar to the variation  \wrt\ $ \Qb $ since the deformation derivative is compatible with the $ \R^3 $-metric, \ie\ $ \gauge{\Wb}\Id = 0 $.
Therefore, $  \gauge{\Wb} $ commutes with the local inner product and the trace operator.
The deformation formula \eqref{eq:varXF} yields
\begin{align}
      \innerH{\tangentR}{ \deltafrac{\energyTH}{\para} , \Wb }
            &= 2\innerH{\tangentSymR}{a \Qb + b\Qb^2 + c\Tr(\Qb^2)\Qb, \gauge{\Wb}\Qb} \notag\\
              &\quad  + \innerH{\tangentS[^0]}{ a \Tr\Qb^2 + \frac{2b}{3}\Tr\Qb^3 + c\Tr\Qb^4 , \Tr\Gb[\Wb]}\notag\\
            &= -\innerH{\tangentQR}{\HbTH,\gauge{\Wb}\Qb} +  \innerH{\tangentR\otimes\tangentS}{  \SigmabTH ,\nablaC\Wb}\formComma \notag\\
     \SigmabTH \label{eq:thermotropic_stress_field}
        &= \left( a \Tr\Qb^2 + \frac{2b}{3}\Tr\Qb^3 + c\Tr\Qb^4 \right) \IdS 
        =: \pTH \IdS \in \tangentIdS\formPeriod
\end{align}
Therefore, the thermotropic force yields $ \FbTH = \DivC\SigmabTH = \GradC\pTH \in \tangentR  $.

\paragraph{Thermotropic Forces for Surface Conforming Q-Tensor Fields:}
 Substituting the surface conforming decomposition $ \Qb=\CQdepl(\qb,\beta)\in\tangentCQR $ \eqref{eq:surface_conforming_ansatz}, as shown in \cite{NitschkeVoigt_2023}, results in the surface-conforming tensor field
\begin{subequations}\label{eq:thermotropic_conforming_molecular_field}
 \begin{align}
    \HbTH &= \CQdepl(\hbTH,\omegaTH)\in\tangentCQR\\
    \hbTH &= -\left(2a - 2b\beta + 3c\beta^2  + 2c\Tr\qb^2\right)\qb  \in \tangentQS\\
    \omegaTH &= -\left(2a + b\beta + 3c\beta^2  + 2c\Tr\qb^2\right)\beta
                   +\frac{2}{3} b \Tr\qb^2 \in\tangentS[^0] \formPeriod
 \end{align}
\end{subequations}
 We also adopt the following identities from \cite{NitschkeVoigt_2023}:
 $ \Tr\Qb^2 = \Tr\qb^2 + \frac{3}{2}\beta^2 $, 
 $ \Tr\Qb^3 = \Qb^2\dbdot\Qb = -\frac{3}{2}\beta(\Tr\qb^2 - \frac{1}{2}\beta^2) $
 and $ \Tr\Qb^4 = \frac{1}{2}(\Tr\Qb^2)^2 = \frac{1}{2}((\Tr\qb^2)^2 + 3\beta^2\Tr\qb^2 + \frac{9}{4}\beta^4) $. %all 3 MProved
 Substituting these into \eqref{eq:thermotropic_stress_field} yields the thermotropic pressure
 \begin{align}\label{eq:pTH}
    \pTH &= \left( 2a - 2b\beta + c\left( \Tr\qb^2 + 3\beta^2 \right) \right) \frac{\Tr\qb^2}{2}
            +\left( 12 a + 4b\beta + 9c\beta^2 \right)\frac{\beta^2}{8} \in\tangentS[^0]\formPeriod %mproved
 \end{align}
 This pressure gives rise to the stress tensor field $ \SigmabTH = \sigmabTH =\pTH \IdS \in \tangentIdS $.
 In principle, this stress tensor represents the energy density of $ \energyTH $ in terms of pressure, constrained by surface-conforming Q-tensor fields.
 These results are consistent with those for bulk energy in  \cite{Nitschke_2018} and \cite{Nestler_2020}, even when $ \beta $ is constant in the first reference.

\subsubsection{Bending Forces}\label{sec:bending}

As part of the potential energy we consider the surface bending energy
\begin{align}\label{eq:energy_bending}
    \energyBE
        &= \frac{\kappa}{2} \normHsq{\tangentS[^0]}{\meanc - \meanc_{0}} \formComma
\end{align}
where $ \kappa>0 $ is the bending stiffness and $ \meanc_{0}\in\R $ the spontaneous curvature.

This energy does not depend on the state variable $ \Qb $.
Therefore variation \wrt\ $ \Qb $ in direction of $ \Psib\in\tangentQR $ gives
\begin{align*}
    \innerH{\tangentQR}{\deltafrac{\energyBE}{\Qb},\Psib}
        &= -\innerH{\tangentQR}{\HbBE,\Psib} \\
     \HbBE 
        &= \nullb \in \tangentQR \formPeriod
\end{align*}
The spatial variation is already calculated in \cite{Zhong_can_1989,Barrett_SIAMJSC_2008,Dougan_ESAIM_2012,Hauser2013,Tu_2014,Guckenberger_2017,Reuther_2020,BachiniKrauseNitschkeVoigt_2023} and yields
\begin{align*}
    \innerH{\tangentR}{ \deltafrac{\energyBE}{\para} , \Wb }
        &=  \innerH{\tangentR\otimes\tangentS}{  \SigmabBE ,\nablaC\Wb}\formComma \\
    \SigmabBE
        &= -\kappa\left( (\meanc - \meanc_{0})\projQS\shop + \normal\otimes\nabla\meanc \right) -\frac{\kappa}{2}\meanc_{0}(\meanc - \meanc_{0})\IdS 
                \in \tangentR\otimes\tangentS  \formPeriod
\end{align*}
Note that the bending force $ \FbBE=\DivC\SigmabBE $ is a pure normals force, 
\ie\ it is $ \FbBE = \fnorBE \normal \in\tangentnormal $ valid, namely with
\begin{align}\label{eq:fnorBE}
    \fnorBE
        &= -\kappa\left( \Delta\meanc + \left( \meanc - \meanc_{0} \right)\left( \normsq{\tangentQS}{\projQS\shop} + \frac{\meanc_{0}}{2}\meanc \right) \right) \in\tangentS[^0]\formComma
\end{align}  
where $ \normsq{\tangentQS}{\projQS\shop} = \shop\dbdot\shop -\frac{1}{2}\meanc^2 = \frac{1}{2}\meanc^2 - 2\gaussc $.

\subsubsection{Surface Conforming Constraint (optional)} \label{sec:surface_conforming_constrain}

In addition to the common quadratic ansatz $ \Qb=\sum_{\alpha=1}^{2} S_\alpha(\pb_\alpha\otimes\pb_\alpha-\frac{1}{3}\Id) $, where $ S_\alpha $ are scalar order parameter fields and $ \pb_\alpha $ normalized directional fields, there is  also the linear ansatz \eqref{eq:qtensor_decomposition}, which is uniquely defined.
This approach may be less vivid than the quadratic one, but it provides a very useful orthogonal decomposition in the context of tangential related subspaces of $ \tangentQR $.
Especially for surface conforming Q-tensor fields \eqref{eq:surface_conforming_space} it holds that $ \Qdepl(\qb,\etab,\beta)\in\tangentCQR $ if and only if $ \etab=0 $.
Therefore, the orthogonal decomposition for surface conforming Q-tensor fields $ \Qb\in\tangentCQR $ yields the ansatz
\begin{align}\label{eq:surface_conforming_ansatz}
    \Qb &= \CQdepl(\qb,\beta) := \Qdepl(\qb,0,\beta) = \qb + \beta\left( \normal\otimes\normal - \frac{1}{2}\IdS \right)\in\tangentCQR
\end{align}  
with mutually independent and uniquely given $ \qb\in\tangentQS $ and $ \beta\in\tangentS[^0] $.
To force a solution to be surface conformal we take advantage of the Lagrange multiplier technique and consider the Lagrange function
\begin{align}\label{eq:surface_conforming_lagrange}
    \energySC 
        &:= \innerH{\tangentS}{\lambdabSC,\IdS\Qb\normal}
        = \innerH{\tangentS}{\lambdabSC,\etab} \formComma
\end{align}
where $ \lambdabSC\in\tangentS $ is the associated Lagrange parameter, yielding an additional degree of freedom, and $ \etab\in\tangentS $ determines the non-conforming part in the decomposition $ \Qb=\Qdepl(\qb,\etab,\beta) $ \eqref{eq:qtensor_decomposition}.
More precisely, $ \lambdabSC $ is an additional state variable and comes with the new equation
$ \innerH{\tangentS}{\deltafrac{\energySC}{\lambdabSC},\thetab} = 0 $ for all virtual displacements $ \thetab\in\tangentS $, since $ \energySC $ is the only energetic term considered to contain $ \lambdabSC $. 
The strong form yields the desired constraint equation
\begin{align}\label{eq:surface_conforming_constrain}
    0 &= \IdS\Qb\normal = \etab  \in\tangentS
\end{align}
for $ \Qb=\Qdepl(\qb,\etab,\beta)\in\tangentQR $.
Note that $ \energySC $ states a potential energy technically.
However, under the constraint \eqref{eq:surface_conforming_constrain} it has not any impact on the net system, \ie\ it holds $ \energySC\vert_{\Qb\in\tangentCQR} = 0 $.
Nevertheless, its variation \wrt\ state variables $ \Qb $ and $ \para $ does not vanish and has to be considered.

The variation \wrt\ $ \Qb\in\tangentQR $ in direction of $ \Psib\in\tangentQR $ yields
\begin{align}
    \innerH{\tangentQR}{\deltafrac{\energySC}{\Qb}, \Psib}
        &= \innerH{\tangentR[^2]}{\lambdabSC\otimes\normal, \Psib}
         = -\innerH{\tangentQR}{\HbSC, \Psib} \formComma\notag\\
    \HbSC &= \Qdepl(\nullb, \zetabSC, 0) \label{eq:surface_conforming_molecular}
           = -\frac{1}{2}\left( \lambdabSC\otimes\normal + \normal\otimes\lambdabSC \right) 
               \in \tangentCQRbot < \tangentQR\formComma\\
    \zetabSC \label{eq:surface_conforming_molecular_zetab}
        &= -\frac{1}{2}\lambdabSC \in \tangentS \formComma
\end{align}
where $ \tangentCQRbot:= \{ \Qb\in\tangentQR \mid \Qb\bot\tangentCQR \} =  \proj_{\tangentSymR}(\normal\otimes\tangentS) $ is the space of Q-tensor fields, which are orthogonal to the space of surface conforming Q-tensor fields.
We can take advantage of this orthogonality if we want to derive a reduced system without the additional state variable $ \lambdabSC $. 
The generalized force balance for the non-conforming part obtained by the Lagrange-D'Alembert principle reads $ \nullb = \sum_{\alpha}\zetab_{\alpha} $, where $ \{\zetab_{\alpha}\} $ determine all non-conforming forces considered in this paper \wrt\ the decomposition $ \Hb_{\alpha} = \Qdepl(\hb_{\alpha},\zetab_{\alpha},\omega_{\alpha}) $.
Most of these non-conforming forces are vanishing under the constraint \eqref{eq:surface_conforming_constrain} and we end up with the identity
\begin{align}\label{eq:lambdaSC_determined}
    \lambdabSC &= 2\left( \zetabEL + \zetabIM \right) \formComma
\end{align}
where $ \zetabEL\in\tangentS $ determines the non-conforming elastic force \eqref{eq:elastic_conforming_molecular_field} and $ \zetabIM[\mfrak]\in\tangentS $ determines the non-conforming immobility force \eqref{eq:immobility_conforming_molucular_field} for the material model ($ \Phi=\mfrak $).
For the Jaumann model ($ \Phi=\jau $) it is $ \zetabIM[\jau]=\nullb $ and in turn $ \lambdabSC $ depends only on the nematic elasticity.
As a consequence, if we are providing ansatz \eqref{eq:surface_conforming_ansatz}, which fulfills the constraint \eqref{eq:surface_conforming_constrain} a priori,
then it only remains to consider the tangential Q-tensor force balance $ \nullb = \sum_{\alpha}\hb_{\alpha} $ \eqref{eq:model_conforming_qtensor} and normal force balance $ 0 =  \sum_{\alpha}\omega_{\alpha} $ \eqref{eq:model_conforming_betaeq}
for $ \hb_{\alpha}\in\tangentQS $ and $ \omega_\alpha\in\tangentS[^0] $ given in their associated surface conforming subsection within this paper.
Note that if we consider uniaxiality in addition, the non-conforming uniaxiality constraint force $ \zetabUN $ has to be added to \eqref{eq:lambdaSC_determined}.
However, $ \zetabUN $ has no effect on the derivation below. 
We discuss this in more details in the surface conforming paragraph in Section \ref{sec:uniaxiality_constrain}.  
The variation \wrt\ $ \para $ in direction of $ \Wb\in\tangentR $ yields
\begin{align*}
    \innerH{\tangentR}{\deltafrac{\energySC}{\para},\Wb}
        &= \innerH{\tangentR}{ \gauge{\Wb}\lambdabSC + (\Tr\Gb[\Wb])\lambdabSC, \IdS\Qb\normal } \\
        &\quad + \innerH{\tangentR}{\lambdabSC, (\gauge{\Wb}\IdS)\Qb\normal + \IdS(\gauge{\Wb}\Qb)\normal + \IdS\Qb\gauge{\Wb}\normal} \formComma
\end{align*}
where the first summand vanishes by constraint \eqref{eq:surface_conforming_constrain}.
From \cite{NitschkeSadikVoigt_A_2022} we take the identities
\begin{align}
    \gauge{\Wb}\normal \label{eq:gauge_normal}
        &= -\normal\nablaC\Wb \in\tangentS \formComma\\ %Mproved
    \gauge{\Wb}\IdS
        &= \gauge{\Wb}\left( \Id - \normal\otimes\normal \right)
         = \normal\nablaC\Wb\otimes\normal + \normal\otimes\normal\nablaC\Wb \in\tangentCQRbot \formPeriod \notag %Mproved
\end{align}
Using that $ (\nablaC\Wb)\Qb\normal=0 $ holds by constraint \eqref{eq:surface_conforming_constrain}, results in
\begin{align}
    \innerH{\tangentR}{\deltafrac{\energySC}{\para},\Wb}
        &= \innerH{\tangentS}{\lambdabSC, \Qb(\normal,\normal)\normal\nablaC\Wb + \IdS(\gauge{\Wb}\Qb)\normal - \IdS\Qb(\normal\nablaC\Wb)} \notag\\
        &= -\innerH{\tangentQR}{\HbSC, \gauge{\Wb}\Qb} + \innerH{\tangentR\otimes\tangentS}{\SigmabSC, \nablaC\Wb} \notag\\
    \SigmabSC \label{eq:surface_conforming_stress}
        &= \normal \otimes \left( \Qb(\normal,\normal)\lambdabSC - \IdS\Qb\lambdabSC \right) \in \normal\otimes\tangentS \formPeriod
\end{align}
If we provide the ansatz \eqref{eq:surface_conforming_ansatz} and substitute $ \lambdabSC $ \eqref{eq:lambdaSC_determined} while taking $ \zetabEL $ \eqref{eq:elastic_conforming_molecular_field_zeta} and $ \zetabIM[\mfrak] $ \eqref{eq:immobility_conforming_molucular_field_zeta} into account, we obtain
the surface conformal constraint stress field
\begin{align}
     \SigmabSCPhi \label{eq:surface_conforming_stress_without_lambda}
        &:= \SigmabSC\vert_{\Phi}
          = \normal \otimes \varsigmabSCPhi \in \normal\otimes\tangentS \formComma\\
     \varsigmabSCPhi
         &= \beta\lambdabSC-(\qb-\frac{\beta}{2}\IdS)\lambdabSC
         = - 2 \left( \qb - \frac{3}{2}\beta\IdS \right)\left( \zetabEL + \zetabIM \right) \notag\\
        &= - L \left( 4 \left( \qb - \frac{3}{2}\beta\IdS \right) \nabla\qb \dbdot \shop 
                                        -6 \left( \qb - \frac{3}{2}\beta\IdS \right)\shop\nabla\beta
                                        -6 \beta\qb\nabla\meanc +  \left( \Tr\qb^2 + \frac{9}{2}\beta^2 \right)\nabla\meanc\right) \notag\\
        &\quad - \begin{cases}
                      M  \normal\nablaC\Vb \left( 6\beta\qb - \left( \Tr\qb^2 + \frac{9}{2}\beta^2 \right)\IdS \right) & \text{for }  \Phi=\mfrak \formComma \\
                      \nullb & \text{for }  \Phi=\jau \formComma         
                  \end{cases} \notag %Mproved 
\end{align}
where $ L,M \ge 0 $ are the elastic and immobility parameter, see sections \ref{sec:elastic} and \ref{sec:immobility}.

\paragraph{Constant Normal Eigenvalue (optional)}\label{sec:constant_beta}
 Within several surface conforming Q-tensor models the eigenvalue $ \beta $ in normal direction is prescribed by a constant $ \beta_0\in\R $. There are two special cases discussed in \cite{Nestler_2020}.
 The so-called flat degenerated case ($ \beta_0=0 $) describing biaxial Q-tensor fields, which are intrinsically uniaxial.
 This means the quadratic director field ansatz for tangential principal director $ \pb\in\tangentS $ reads
 \begin{align*}
    \beta \equiv \beta_0 = 0:\quad\quad
    \Qb = \qb = S(\pb\otimes\pb-\frac{1}{2}\IdS) = S \pb\otimes\pb + \frac{S}{2}\normal\otimes\normal -\frac{1}{3}(S+\frac{S}{2})\Id \in \tangentQS\formComma
 \end{align*}
 where $ S\in\tangentS[^0] $ is the scalar order in $ \pb $-direction and $ \frac{S}{2} $ in normal direction.
 Therefore flat degenerated Q-tensor fields are only genuinely uniaxial for the trivial case $ \Qb=\nullb $.
 The second case relates to the surface conforming uniaxial setting $ \Qb = S(\pb\otimes\pb-\frac{1}{3}\Id) = \CQdepl(\qb,\beta) $, with the same $ \qb $  as given above and 
 $ \beta= -\frac{S}{3} = \pm \frac{\sqrt{2\Tr\qb^2}}{3} $, \cf\ relation \eqref{eq:uniaxial_constrain_conforming_simplified}. 
 By approximating the scalar order $ S\approx S_0 $ in $ \beta $ at the thermotropic equilibrium, where $ \HbTH = 0 $ \eqref{eq:thermotropic_molucular_field}, we obtain
 $ \beta \approx \frac{1}{12 c} (b - \sqrt{b^2 - 24 ac})=\beta_0 $ if we only consider the globally stable nematic state  for $ a < \frac{b^2}{27c} $, see \eg\ \cite{MottramNewton_2014}.
 We stipulate the Lagrange function for $ \beta_0\in\R $ by
 \begin{align*}
    \energyCB &:= \innerH{\tangentS[^0]}{\lambdaCB, \Qb(\normal,\normal)- \beta_0}
              =  \innerH{\tangentS[^0]}{\lambdaCB, \beta - \beta_0} \formPeriod
 \end{align*}
 This yields the Lagrange parameter $ \lambdaCB\in\tangentS[^0] $ as an additional degree of freedom.
 Variation \wrt\ this Lagrange parameter results in the identity $ \innerH{\tangentS[^0]}{\deltafrac{\energyCB}{\lambdaCB},\theta} = 0 $ for all $ \theta\in\tangentS[^0] $ and thus the constraint
 \begin{align}\label{eq:constant_beta_constrain}
    0 &= \Qb(\normal,\normal) - \beta_0 = \beta - \beta_0
 \end{align}
 for $ \Qb=\CQdepl(\qb,\beta)\in\tangentCQR $.
 Note that under this constraint all spatial and temporal derivatives $ \nabla\beta $ and $ \dot{\beta} $ can be neglected, since $ \beta $ is  a constant scalar field.
 
 The variation \wrt\ $ \Qb\in\tangentQR $ in direction of $ \Psib\in\tangentQR $ yields
 \begin{align*}
     \innerH{\tangentQR}{\deltafrac{\energyCB}{\Qb}, \Psib}
        &= \innerH{\tangentR[^2]}{\lambdaCB \normal\otimes\normal, \Psib}
         = -\innerH{\tangentQR}{\HbCB, \Psib} \formComma\\
     \HbCB &= \Qdepl(\nullb, \nullb, \omegaCB) = - \lambdaCB \left( \normal\otimes\normal - \frac{1}{3}\Id \right) \in \tangentQR \formComma\\
     \omegaCB &= -\frac{2}{3}\lambdaCB \in\tangentS[^0] \formPeriod
 \end{align*}
 Similar to the conforming constraint, we could use the orthogonality $ \HbCB\bot\Qdepl(\tangentQS,\tangentS, 0) $ if we want to derive a reduced model without Lagrange parameter. 
 The pure normal force balance $ 0 = \sum_{\alpha}\omega_{\alpha} $ yields
 \begin{align}\label{eq:lambdaCB_determined}
    \lambdaCB &= \frac{2}{3}\left(\omegaEL + \omegaTH + \omegaNV\right) \in \tangentS[^0]\formComma
 \end{align}
 where $\omegaEL$ determines the pure normal part of the elastic force \eqref{eq:elastic_conforming_molecular_field}, $  \omegaTH $ the pure normal part of the thermotropic force \eqref{eq:thermotropic_conforming_molecular_field} and  $\omegaNV$ the pure normal part of the nematic viscous force \eqref{eq:nematic_viscous_conforming_molecular_field}.
 The variation \wrt\ $ \para $ in direction of $ \Wb\in\tangentR $ yields
 \begin{align*}
  \innerH{\tangentR}{\deltafrac{\energyCB}{\para},\Wb}
      &= \innerH{\tangentS[^0]}{ \gauge{\Wb}\lambdaCB + \lambdaCB\Tr\Gb[\Wb], \Qb(\normal,\normal)- \beta_0 } \\
      &\quad + \innerH{\tangentS[^0]}{\lambdaCB,  2\Qb(\gauge{\Wb}\normal,\normal) + (\gauge{\Wb}\Qb)(\normal,\normal) } \formComma\\
      &= -\innerH{\tangentQR}{\HbCB, \gauge{\Wb}\Qb}
 \end{align*}
 by constraint \eqref{eq:constant_beta_constrain} and  $\Qb(\gauge{\Wb}\normal,\normal) = 0$, 
 since $ \gauge{\Wb}\normal\in\tangentS $ \eqref{eq:gauge_normal} and surface conforming constraint \eqref{eq:surface_conforming_constrain}.
 As a consequence, there are no additional constraint forces in the fluid equation \eqref{eq:LDA_fluid}, which is thus independent of the Lagrange parameter $ \lambdaCB $.   
 Therefore, we could also omit the pure normal force balance \eqref{eq:lambdaCB_determined} without further ado and it remains only the tangential Q-tensor force balance $ \nullb = \sum_{\alpha}\hb_{\alpha} $ \eqref{eq:model_conforming_qtensor} for all considered $ \hb_{\alpha}\in\tangentQS $.

\subsubsection{Uniaxiality Constraint (optional)}\label{sec:uniaxiality_constrain}

A Q-tensor field $ \Qb\in\tangentQR $ is called uniaxial if and only if two of its eigenvalue fields are equal, see \cite{MottramNewton_2014}.
It does not matter that $ \Qb $ is only defined on $ \surf $, since it is a pure local property.
To force uniaxiality, the original biaxiality measure $  (\Tr\Qb^2)^{-3}\mathfrak{b}(\Qb)  $ (\cite{Kaiser_1992,Majumdar_2010,Nestler_2020}), containing the eigenvalue polynomial
\begin{align}\label{eq:biaxiality_measure_dimensionalized}
    \mathfrak{b}(\Qb) 
        &:= (\Tr\Qb^2)^{3} - 6(\Tr\Qb^3)^2
         = 2(\lambda_1-\lambda_2)^2  (\lambda_1-\lambda_3)^2 (\lambda_2-\lambda_3)^2 
\end{align}
for eigenvalue fields $ \lambda_\alpha \in \tangentS[^0] $ of $ \Qb $, is not suitable for the Lagrange multiplier technique.
Due to its double root structure \wrt\ its eigenvalues, the variation \wrt\ $ \Qb $ would always result in a vanishing constraint force at $ \mathfrak{b}(\Qb)=0 $, which is not allowed. 
Instead of processing the biaxiality measure in any way, we introduce the biaxiality Q-tensor polynomial
\begin{align}\label{eq:biaxiality_qt}
    \mathfrak{B}(\Qb) 
        &:=  \Qb^4 - \frac{5}{6}(\Tr\Qb^2)\Qb^2 + \frac{1}{9}(\Tr\Qb^2)^2\Id \in\tangentQR \formPeriod
\end{align}
The vanishing trace $ \Tr\mathfrak{B}(\Qb) = 0 $ is given by $ 2\Tr\Qb^4=(\Tr\Qb^2)^2 $ and $ \Tr\Id=3 $.
Considering the square of the norm and using Lemma \ref{lem:trace_of_some_Q_powers} results in
\begin{align*}
    \normsq{\tangentQR}{\mathfrak{B}(\Qb)}
        &= \Tr\mathfrak{B}(\Qb)^2
         = \Tr\Qb^8 - \frac{5}{3}(\Tr\Qb^2)\Tr\Qb^6 + \frac{11}{12}(\Tr\Qb^2)^2\Tr\Qb^4 - \frac{4}{27}(\Tr\Qb^2)^4 \\
        &= \frac{1}{54}(\Tr\Qb^2)^4 - \frac{1}{9}(\Tr\Qb^2)(\Tr\Qb^3)^2
         = \frac{\Tr\Qb^2}{54} \mathfrak{b}(\Qb) \formPeriod
\end{align*}
As a consequence it holds:
\begin{align*}
    \mathfrak{B}(\Qb) &= \nullb  && \Longleftrightarrow && \Qb\in\tangentQR \text{ is uniaxial.} 
\end{align*}
Basically, we found a tensorial square root of \eqref{eq:biaxiality_measure_dimensionalized}, which overcomes the quadratic eigenvalue polynomial issue mentioned above.
This motivates us to define the Lagrange function
\begin{align*}
    \energyUN
        &:= -3 \innerH{\tangentQR}{\LambdabUN, \mathfrak{B}(\Qb)}
          = - 3\innerH{\tangentQR}{\LambdabUN, \Qb^4 - \frac{5}{6}(\Tr\Qb^2)\Qb^2 + \frac{1}{9}(\Tr\Qb^2)^2\Id}
\end{align*}
as a potential energy, where the Lagrange parameter $ \LambdabUN\in\tangentQR $ appears as an additional degree of freedom.
Therefore, the variation $ \innerH{\tangentQR}{\deltafrac{\energyUN}{\LambdabUN}, \Thetab} $ in arbitrary directions of $ \Thetab\in\tangentQR $ gives the Q-tensor valued constraint
\begin{align}\label{eq:uniaxiality_constrain}
    \mathfrak{B}(\Qb) &= \nullb \in\tangentQR\formComma
\end{align}
since $ \energyUN $ is the only considered energetic term comprising  $ \LambdabUN $.

The variation \wrt\ $ \Qb $ in direction of $ \Psib\in\tangentQR $ yields
\begin{align}
    \MoveEqLeft
    \innerH{\tangentQR}{\deltafrac{\energyUN}{\Qb},\Psib} \notag\\
        &= -3 \innerH{\tangentR[^2]}{\LambdabUN, \Qb^3\Psib + \Qb^2\Psib\Qb + \Qb\Psib\Qb^2 + \Psib\Qb^3} \notag\\
        &\quad + \frac{5}{2} \innerH{\tangentR[^2]}{\LambdabUN, (\Tr\Qb^2)(\Qb\Psib+\Psib\Qb) + 2(\Qb\dbdot\Psib)\Qb^2 }
             -\frac{4}{3} \innerH{\tangentR[^2]}{ \LambdabUN, (\Tr\Qb^2) (\Qb\dbdot\Psib)\Id }\notag\\
        &= -\innerH{\tangentR[^2]}{6\left(\LambdabUN\Qb^3 + \Qb\LambdabUN\Qb^2\right)- 5\left( (\Tr\Qb^2)\Psib\Qb + (\LambdabUN\dbdot\Qb^2)\Qb \right), \Psib}
        = -\innerH{\tangentQR}{\HbUN,\Psib} \formComma\notag\\
    \HbUN \label{eq:uniaxial_constrain_molecular_field}
        &= 6\projQR\left( \LambdabUN\Qb^3 + \Qb\LambdabUN\Qb^2 \right)
          -5(\Tr\Qb^2) \projQR\left( \LambdabUN\Qb \right)
          -5 (\LambdabUN\dbdot\Qb^2) \Qb\formComma
\end{align}
where we use that $ \Id\bot\tangentQR $ and $ \tangentQR < \tangentSymR $ holds to omit the rear summand and reduce symmetric summations in the first identity.
As a consequence $ \HbUN\in\tangentQR $ is the generalized constraint force to maintain uniaxiality \eqref{eq:uniaxiality_constrain} in the Q-tensor equation.
However, there are no effective constraint forces in the fluid equation, since the variation \wrt\ $ \para $ in direction of $ \Wb\in\tangentR $ results in
\begin{align}
    \innerH{\tangentR}{\deltafrac{\energyUN}{\para}, \Wb}
        &= -3 \left(\innerH{\tangentR[^2]}{\LambdabUN, \gauge{\Wb}\mathfrak{B}(\Qb)}
                    +\innerH{\tangentR[^2]}{\gauge{\Wb}\LambdabUN + \LambdabUN\Tr\Gb[\Wb], \mathfrak{B}(\Qb) } \right) \notag\\
        &= - \innerH{\tangentQR}{\HbUN,\gauge{\Wb}\Qb}
            \label{eq:uniaxiality_constrain_variation_wrt_X}
\end{align}
at the constraint $ \mathfrak{B}(\Qb)= \nullb $.

\paragraph{Uniaxiality Constraint for Surface Conforming Q-Tensor Fields:}

Substituting the surface conforming ansatz $ \Qb = \CQdepl(\qb,\beta) = \qb + \beta\left( \normal\otimes\normal - \frac{1}{2}\IdS \right)\in\tangentCQR $ into the summands of \eqref{eq:biaxiality_qt} yields
\begin{align*}
    \Qb^4 
        &= -\frac{\beta}{2}\left( 2\Tr\qb^2 + \beta^2 \right)\qb
            + \frac{1}{16}\left( 4(\Tr\qb^2)^2 + 12\beta^2\Tr\qb^2 + \beta^4 \right)\IdS
            + \beta^4 \normal\otimes\normal \formComma \\
    (\Tr\Qb^2)\Qb^2
        &= -\frac{\beta}{2}\left( 2\Tr\qb^2 + 3\beta^2 \right)\qb
            +\frac{1}{8}\left( 4(\Tr\qb^2)^2 + 8\beta^2\Tr\qb^2 + 3\beta^4 \right)\IdS
            + \frac{\beta^2}{2}\left( 2\Tr\qb^2 + 3\beta^2 \right) \normal\otimes\normal\formComma \\
    (\Tr\Qb^2)^2\Id
        &= \frac{1}{4}\left( 4(\Tr\qb^2)^2 + 12 \beta^2 \Tr\qb^2 + 9\beta^4 \right)\left( \IdS + \normal\otimes\normal  \right) \formComma
\end{align*} 
where we use that $ 2\qb^2= (\Tr\qb^2)\IdS $ is valid.
Eventually, we obtain the biaxiality Q-tensor polynomial
\begin{align}
     \mathfrak{B}(\Qb)
        &= \frac{1}{36}(2\Tr\qb^2 - 9\beta^2)\left( -3\beta\qb + 2 (\Tr\qb^2)\left( \normal\otimes\normal - \frac{1}{2}\IdS \right) \right) \notag\\
        &= \frac{1}{36}(2\Tr\qb^2 - 9\beta^2) \CQdepl(-3\beta\qb ,2 \Tr\qb^2) \in \tangentCQR \formPeriod
                \label{eq:biaxiality_qt_conforming}
\end{align}
As a consequence of $ \mathfrak{B}(\Qb)= \nullb $, it holds:
\begin{align}\label{eq:uniaxial_constrain_conforming_simplified}
    2\Tr\qb^2 = 9\beta^2 &\quad\text{ or }\quad \qb=0  && \Longleftrightarrow && \Qb= \CQdepl(\qb,\beta)\in\tangentCQR \text{ is uniaxial.} 
\end{align}
The first condition ($ 2\Tr\qb^2 = 9\beta^2 $) refers to uniaxiality with a tangential principal director, see \cite{Nestler_2020}, whereas the second condition ($ \qb=0 $) is associated  to uniaxiality with the normal field as principal director.
More precisely, it holds $ \Qb=\CQdepl(\nullb,\beta)= \frac{3}{2}\beta\left( \normal\otimes\normal - \frac{1}{3}\Id \right) $ in the latter situation.
Note that while it is possible for uniaxial surface conforming Q-tensor fields to bear tangential as well as normal principal directors simultaneously at the surface, there must be a spatial phase transition between both configurations for reasons of spatial continuity, where $\Qb= \nullb$ is orderless.
Analogous applies for temporal changes between both  director configurations.

To substitute $ \Qb = \CQdepl(\qb,\beta)$ into the constraint force $ \HbUN $ \eqref{eq:uniaxial_constrain_molecular_field}, we also have to decompose the Lagrange parameter.
A priori $ \LambdabUN\in\tangentQR $ is not restricted to be surface conforming.
Therefore, we use the Q-tensor ansatz \eqref{eq:qtensor_decomposition} and stipulate
\begin{align*}
    \LambdabUN 
        &= \Qdepl(\lambdabUN,\lbUN,\lambdabotUN) = \lambdabUN + \lbUN\otimes\normal + \normal\otimes\lbUN + \lambdabotUN\left( \normal\otimes\normal -\frac{1}{2}\IdS \right)
            \in \tangentQR
\end{align*}
for uniquely given $ \lambdabUN\in\tangentQS $, $ \lbUN\in\tangentS $ and $ \lambdabotUN\in\tangentS[^0] $. 
Applying these representations to \eqref{eq:uniaxial_constrain_molecular_field} and using $ 2\qb^2= (\Tr\qb^2)\IdS $ yields $ \HbUN = \Qdepl(\hbUN,\zetabUN,\omegaUN) \in\tangentQR $, where
\begin{align*}
    \hbUN
        &= -6\beta\qb\lambdabUN\qb
           + \frac{1}{2}\left( 2\Tr\qb^2 + 3\beta^2 \right)\projQS(\lambdabUN\qb)\\
         &\quad + \frac{1}{2}\left( 10\beta\lambdabUN\dbdot\qb + \lambdabotUN\left(4\Tr\qb^2 - 9\beta^2 \right)\right)\qb
            - \frac{\beta}{4}\left( 14\Tr\qb^2 - 9\beta^2 \right)\lambdabUN \in\tangentQS\formComma\\
    \zetabUN
        &= -\frac{1}{2} \left( 2\Tr\qb^2 + 3\beta^2 \right)\qb\lbUN 
            - 2\beta(\Tr\qb^2)\lbUN \in\tangentS\formComma\\
    \omegaUN
        &= -\frac{1}{6} \left( 2\Tr\qb^2 - 27\beta^2 \right)\lambdabUN\dbdot\qb
            + 18\lambdabotUN\beta\Tr\qb^2 \in\tangentS[^0]   \formPeriod
\end{align*} %MProved
The constraint $ \mathfrak{B}(\Qb)=0 $ is not yet included.
The equivalent uniaxiality condition given in \eqref{eq:uniaxial_constrain_conforming_simplified} can be read as substitution rule: $ \beta^2 \fb(\qb) = \frac{2}{9}(\Tr\qb^2)\fb(\qb) $ 
for all $ \fb:\tangentQS \rightarrow \tangentR[^n] $ with $ \fb(\nullb)=\nullb $.
Utilizing this rule results in
\begin{subequations}\label{eq:HbUN_decomposed}
\begin{align}
    \HbUN
        &= \Qdepl(\hbUN,\zetabUN,\omegaUN) \in\tangentQR \formComma\\
    \hbUN
        &= -6\beta\qb\lambdabUN\qb
           + \frac{4}{3}(\Tr\qb^2)\projQS(\lambdabUN\qb)\\
        &\quad  + \left( 5\beta\lambdabUN\dbdot\qb + \lambdabotUN\Tr\qb^2\right)\qb \notag
            - \frac{\beta}{4}\left( 14\Tr\qb^2 - 9\beta^2 \right)\lambdabUN \in\tangentQS\formComma\\
    \zetabUN
        &= -\frac{2}{3} (\Tr\qb^2) \left( 2\qb\lbUN + 3\beta\lbUN \right) \in\tangentS\formComma\\
    \omegaUN
        &= \frac{1}{3}(\Tr\qb^2)\left( 2\lambdabUN\dbdot\qb - 9\lambdabotUN\beta \right) \in\tangentS[^0]   \formPeriod
\end{align} %MProved
\end{subequations}
At first glance, this might look problematic, since the uniaxiality constraint equation $ \mathfrak{B}(\Qb)=0 $ becomes surface conforming valued as we see in \eqref{eq:biaxiality_qt_conforming},
though we still end up with a fully Q-tensor valued Lagrange parameter $ \LambdabUN = \Qdepl(\lambdabUN,\lbUN,\lambdabotUN) $ implemented in the constraint force $ \HbUN $.
This could mean that $ \LambdabUN $ is underdetermined through all available equations.
For a non-hydrodynamical system one could simply omit the surface non-conforming part of the Q-tensor equation, which would remove $  \zetabUN $, and therefore also $ \lbUN $, from the system.
However, as we can see in Section \ref{sec:surface_conforming_constrain}, the non-conforming part couples to the fluid equation in a hydrodynamical system, at least through the surface conforming Lagrange parameter $ \lambdabSC $.
Therefore it is not allowed to just omit the non-conforming part, instead we have to include this part into the fluid equation similar to what we demonstrate in Section \ref{sec:surface_conforming_constrain}.
Fortunately, the Lagrange function $ \energyUN $ does not induce an effective fluid constraint force depending on $ \LambdabUN $, see \eqref{eq:uniaxiality_constrain_variation_wrt_X},
\ie\ the coupling of the non-conforming part is still exclusive via $ \lambdabSC $.
Hence the non-conforming force balance $ \nullb = \sum_{\alpha}\zetab_{\alpha} $ alters \eqref{eq:lambdaSC_determined} to $\lambdabSC = 2\left( \zetabEL + \zetabIM + \zetabUN \right)$,
which have to be substituted into the surface conforming constraint stress $ \SigmabSCPhi $ \eqref{eq:surface_conforming_stress_without_lambda}.
Applying $  2\qb^2= (\Tr\qb^2)\IdS $ and  uniaxiality \eqref{eq:uniaxial_constrain_conforming_simplified}, this results in
\begin{align*}
    \SigmabSCPhi &= -2\normal\otimes\left( \qb - \frac{3}{2}\beta\IdS \right)\left( \zetabEL + \zetabIM + \zetabUN \right)\\
                 &= -2\normal\otimes\left(\left( \qb - \frac{3}{2}\beta\IdS \right)\left( \zetabEL + \zetabIM \right) - \frac{2}{3} (\Tr\qb^2)\left( \Tr\qb^2 - \frac{9}{2}\beta^2 \right)\lbUN\right)\\
                 &= -2\normal\otimes\left( \qb - \frac{3}{2}\beta\IdS \right)\left( \zetabEL + \zetabIM\right)\formComma
\end{align*}
which is still the same stress tensor field as given in \eqref{eq:surface_conforming_stress_without_lambda}.
As a consequence, $ \lbUN\in\tangentS $ is indeed arbitrary and thus has not any effects on the solution.
Therefore, in order to prevent underdetermination, we could either set $ \lbUN = \nullb $ \oeda\ or we drop the non-conforming equation $ \nullb = \sum_{\alpha}\zetab_{\alpha} $, which is already included in $ \SigmabSCPhi $, altogether according to Section \ref{sec:surface_conforming_constrain}.

\subsubsection{Isotropic State Constraint (optional)}\label{sec:isotropic_state}

In this section we consider the nematic isotropic state, \ie\ $ \Qb=0 $.
This intention is purely for verification reasons. 
We could have derived the resulting model without considering the Q-tensor field in the first place.
In order to implement this constraint by the Lagrange multiplier technique,
we use the Lagrange function
$ \energyIS := -\innerH{\tangentQR}{\LambdabIS,\Qb} $,
where $ \LambdabIS\in\tangentQR $ is the associated Lagrange parameter, yielding a state variable as an additional degree
of freedom. We obtain the constraint equation
\begin{align}\label{eq:isotropic_constrain}
    \nullb &= \Qb
\end{align}
by $ \innerH{\tangentQR}{\deltafrac{\energyIS}{\LambdabIS},\Thetab}=0 $ for all virtual displacements $ \Thetab\in\tangentQR $,
since $ \energyIS $ is the only energetic term considered to contain $ \LambdabIS $.
The variation $ \innerH{\tangentQR}{\deltafrac{\energyIS}{\Qb}, \Psib} = -\innerH{\tangentQR}{\HbIS, \Psib}$ for all $ \Psib\in\tangentQR $ yields the Q-tensor force
\begin{align*}
    \HbIS &= \LambdabIS \in \tangentQR\formPeriod 
\end{align*}
Material constraint forces are not involved, since the constraint \eqref{eq:isotropic_constrain} results in
\begin{align*}
    \innerH{\tangentR}{\deltafrac{\energyIS}{\para}, \Wb}
        &= - \innerH{\tangentQR}{\HbIS, \gauge{\Wb}\Qb} - \innerH{\tangentQR}{\gauge{\Wb}\LambdabIS + (\Tr\Gb[\Wb])\LambdabIS, \Qb}\\
        &= - \innerH{\tangentQR}{\HbIS, \gauge{\Wb}\Qb}
\end{align*}
for all $ \Wb\in\tangentR $ and thus $ \FbIS=\nullb $.
Note that $ \LambdabIS $ only participate in the Q-tensor force balance \eqref{eq:LDA_molecular} and is fully determined by it.
Therefore, if we wanted to avoid the Lagrange parameter formulation, then we could omit this force balance and the constraint equation \eqref{eq:isotropic_constrain} while satisfying the isotropic state $ \Qb=\nullb $ for all remaining equations.

\subsection{Flux Forces}\label{sec:flux_forces}

\subsubsection{Immobility Forces}\label{sec:immobility}

An approach to control the relaxation rate of the nematic field is to consider the immobility potential
\begin{align}
     \energyIM \label{eq:immobility_flux}
        &:= \frac{M}{2}\normHQRsq{\Dphi\Qb}  
\end{align}
as a part of the flux potential, with $ M \ge 0 $ and $ \Dphi\Qb\in\{\Dmat\Qb , \Djau\Qb\} $ given in \eqref{eq:DmatQ} and \eqref{eq:DjauQ}.
In simple terms, we are assuming a kind of damping in the temporal change of the Q-tensor field depending on the chosen rate,
which causes an emission of energy to the surrounding. 
The choice of the Q-tensor rate by $\Phi\in\{\mfrak,\jau\}$ not only influences the equilibrium state to a certain extent, but above all the dynamics leading to it.
For instance, the minimum equation $\Dphi\Qb=\nullb$  of \eqref{eq:immobility_flux} imply that the director fields of $\Qb$ are frozen into the surface motion for the Jaumann rate ($\Phi=\jau$), whereas for the material rate ($\Phi=\mfrak$), they are frozen into the embedding space. 
We discussed some examples for the kernel of some tangential time derivative on tensor fields in more details in \cite{NitschkeVoigt_JoGaP_2022}.

Variation \wrt\ process variable $ \Dphi\Qb\in\tangentQR $ in direction of $ \Psib\in\tangentQR $ yields
\begin{align}
    \innerH{\tangentQR}{ \deltafrac{\energyIM}{\Dphi\Qb} , \Psib }
        &= - \innerH{\tangentQR}{ \HbIM, \Psib } \formComma\notag\\
    \HbIM \label{eq:HbIM}
        &= - M \Dphi\Qb \in\tangentQR \formPeriod
\end{align}
Since the $ \hil $-inner product does not depend on the surface velocity, lemma \ref{lem:ttensor_veloder_of_timeD} and \ref{lem:ttensor_weak_skewsymmetric_deformation} yield
\begin{align*}
    \innerH{\tangentR}{ \deltafrac{\energyIM}{\Vb} , \Wb }
        &= M\innerH{\tangentQR}{\Dphi\Qb, \veloder{\Wb}\Dphi\Qb}\\
        &= M\left( \innerH{\tangentQR}{\Dphi\Qb, \gauge{\Wb}\Qb} - \innerH{\tangentR[^2]}{(\Dphi\Qb)\Qb + (\Qb \Dphi\Qb )^T, \Phib[\Wb]} \right)\\
        &= M\left( \innerH{\tangentQR}{\Dphi\Qb, \gauge{\Wb}\Qb} - 2\innerH{\tangentR[^2]}{(\Dphi\Qb)\Qb, \Phib[\Wb]} \right)\\
        &=  -\innerH{\tangentQR}{ \HbIM , \gauge{\Wb}\Qb }
                + \innerH{\tangentR\otimes\tangentS}{  \SigmabIM ,\nablaC\Wb}   \formComma
\end{align*}
\begin{align} \label{eq:immobility_stress_field}
    \SigmabIM[\mfrak] &= \nullb \formComma
        &\SigmabIM[\jau] &= M (\Id + \normal\otimes\normal)(\Qb\Djau\Qb - (\Djau\Qb)\Qb)\IdS \in \tangentAS\oplus(\tangentnormal\otimes\tangentS) \formPeriod
\end{align}
Therefore the immobility force field yields $ \FbIM = \DivC\SigmabIM \in \tangentR  $.
Note that the total immobility force vanishes for a consistent choice of process variable and gauge of surface independence,
\ie\ $ \totalFb_{\IM}^{\Phi} = \FbIM + \gaugeFbIM{\Psi} = \nullb $ for $ \Phi=\Psi $.
Hence especially the value of the Jaumann force field $ \FbIM[\jau] $ appears kinda artificial in order to maintain $ \Psi $-independent models as shown in Section \ref{sec:lagrangedalembert}.
Apart from our derivation of the surface Beris-Edwards models, we can relate $ \FbIM[\jau] $ to the ``broken symmetry'' force for tangential polar nematics in \cite{Morris_2022,Kruse_2005}.
Such polar nematic/director fields are very similar to flat degenerated Q-tensor fields $ \Qb=\qb\in\tangentQS $, which in turn represent tangential apolar nematics with a scalar order.
In order to be able to compare this force we also have to relate  $ \Djau\Qb $ to the molecular field.
This relation is given by the Q-tensor force balance $ \nullb = \sum_{\alpha}\Hb_{\alpha} $ \eqref{eq:LDA_molecular}, which contains the Jaumann derivative by $ \HbIM $ \eqref{eq:HbIM}.
A comparison with \cite{Beris_1994} reveals that using \eqref{eq:immobility_flux} as a flux potential instead of a contribution to the kinetic energy (with $M=\rho$) leads to a non-inertia instead of an inertia model.
The term "inertia" here only refers to the kinetics of the nematic field and not to those of the point masses.

\paragraph{Immobility Forces for Surface Conforming Q-Tensor Fields:}
We are considering the surface conforming decomposition $ \Qb=\CQdepl(\qb,\beta)\in\tangentCQR $ \eqref{eq:surface_conforming_ansatz}.
Note that $ \tangentCQR $ is not closed by the material derivative ($ \Phi=\mfrak $) contrarily to the Jaumann derivative ($\Phi=\jau$).
Therefor the non-conforming part of $ \HbIM[\mfrak] $ is not vanishing generally.
In \cite{NitschkeVoigt_2023}, the orthogonal decomposition of both time derivatives restricted to surface conforming Q-tensor fields is sufficiently clarified.
This reference yields
\begin{subequations}\label{eq:immobility_conforming_molucular_field}
\begin{align}
    \HbIM[\mfrak] &= \Qdepl(\hbIM[\mfrak],\zetabIM[\mfrak],\omegaIM[\mfrak])\in\tangentQR \formComma
        &\HbIM[\jau] &= \CQdepl(\hbIM[\jau],\omegaIM[\jau])\in\tangentCQR\formComma\\
    \hbIM[\mfrak] &= -M\dot{\qb} \in\tangentQS\formComma
        &\hbIM[\jau] &= -M\timeJ\qb \in\tangentQS \formComma\\
    \zetabIM[\mfrak] \label{eq:immobility_conforming_molucular_field_zeta}
            &= -M \left(\nabla\vnor + \vb\shop\right)\left( \qb - \frac{3}{2}\beta\IdS \right) \in \tangentS \formComma
        &\zetabIM[\jau] &= \nullb \in\tangentS    \\
    \omegaIM[\mfrak] &= -M\dot{\beta} \in\tangentS[^0] \formComma
        &\omegaIM[\jau] &= -M\dot{\beta} \in\tangentS[^0] \formComma
\end{align}
\end{subequations}
where $  \nabla\vnor + \shop\vb = \normal\nablaC\Vb \in\tangentS $ for $ \Vb=\vb+\vnor\normal\in\tangentR $.
The tangential Q-tensor fields $ \hbIM $ are given in terms of the tangential material derivative $ \dot{\qb} $ and tangential Jaumann derivative $ \timeJ\qb $ of $ \qb\in\tangentQS < \tangentS[^2] $.
The space of tangential Q-tensor fields $ \tangentQS $ is closed by both time derivatives,
see \cite{NitschkeVoigt_JoGaP_2022,NitschkeVoigt_2023} for more details.
We only list $ \zetabIM[\jau] = \nullb $ formally for reasons of consistency. 
The only non-symmetric term in
\begin{align}\label{eq:QDjauQ_conforming}
    \Qb\Djau\Qb
        &= \qb\timeJ\qb - \frac{1}{2}\left( \beta\timeJ\qb + \dot{\beta}\qb \right) + \beta\dot{\beta}\left( \normal\otimes\normal + \frac{1}{4}\IdS \right)
\end{align}
is the first summand, which is also tangential.
As a consequence we obtain
\begin{align*}
    \SigmabIM[\mfrak]  = \sigmabIM[\mfrak] &= \nullb \formComma
      & \SigmabIM[\jau] = \sigmabIM[\jau] &= M \left( \qb\timeJ\qb - (\timeJ\qb)\qb \right) \in\tangentAS
\end{align*}
for substituting  $ \Qb=\CQdepl(\qb,\beta) $ into stress fields \eqref{eq:immobility_stress_field}.

\subsubsection{Nematic Viscous Forces}\label{sec:nematic_viscosity}

In order to model viscous forces depending anisotropically on the nematic order, we consider the linear ansatz 
\begin{align}\label{eq:aniso_metric}
    \IbxiQ 
        &:= \Id - \xi\Qb \in \tangentSymR
\end{align}
for an anisotropic inner metric, where $ \xi\in\R $ is the anisotropy coefficient.
Note that this tensor is locally isotropic, \ie\ $ \IbxiQ\in\tangentIdR $, if and only if $ \xi\Qb = 0 $ locally, since $ \Qb\bot\tangentIdR $.
The fully lower-convected rate \cite{NitschkeVoigt_2023} of $\IbxiQ $ stipulates the nematic viscous potential
\begin{align}
    \energyNV 
        &:= \frac{\coeffIF}{4}\normHsq{\tangentSymR}{\Dlow\IbxiQ}
          = \frac{\coeffIF}{4}\normHsq{\tangentSymR}{\Gb[\Vb]+\Gb^T[\Vb]-\xi\Dlow\Qb} \notag\\
         &= \frac{\coeffIF}{4}\normHsq{\tangentSymR}{\Sb[\Vb]\IbxiQ + \IbxiQ\Sb[\Vb] - \xi\Djau\Qb} \formComma
            \label{eq:disspot_aniso_visc}
\end{align}
where $ \coeffIF \ge 0 $ is the isotropic viscosity of the material.
Ultimately, it quantify the rate of the local inner product given by the anisotropic internal metric tensor \eqref{eq:aniso_metric} when its arguments were frozen in the material flow, \ie
\begin{align*}
    \forall\Rb_1,\Rb_2\in\ker\Dupp<\tangentR:\quad \frac{d}{dt} \IbxiQ(\Rb_1,\Rb_2) &= (\Dlow\IbxiQ)(\Rb_1,\Rb_2)\formComma
\end{align*}
where $ \ker\Dupp $ is the kernel of the upper-convected time derivative \cite{NitschkeVoigt_2023}.
The fully lower-convected rate of $ \IbxiQ $ yields the relations
\begin{subequations}\label{eq:tmp_01}
\begin{align}
     \Dlow\IbxiQ     
        &= -\xi\Dmat\Qb + \Gbcal^T[\Vb]\IbxiQ + \IbxiQ\Gbcal[\Vb] \\ 
        &= -\xi\Djau\Qb + \Sb[\Vb]\IbxiQ + \IbxiQ\Sb[\Vb]  \label{eq:DlowIbxiQ_jauform} %MProved
\end{align}
\end{subequations}
to its corresponding Jaumann and material rate in terms of time derivatives \eqref{eq:DmatQ} and \eqref{eq:DjauQ} of $ \Qb $, 
where we use the $ \R^3 $-metric compatibility $ \Djau\Id = \Dmat\Id = 0 $.
Note that for $ \xi=0 $ the potential \eqref{eq:disspot_aniso_visc} specifies the usual isotropic viscosity potential 
$ \frac{\coeffIF}{4}\| \Gb[\Vb]+\Gb^T[\Vb] \|_{\hilspace{\tangentSymR}}^2 $,
see \eg\ \cite{BachiniKrauseNitschkeVoigt_2023,Arroyo_2009}.

With Lemma \ref{lem:ttensor_variational_independence_of_process_variable} the variation \wrt\ process variable $ \Dphi\Qb$ in direction of $ \Psib\in\tangentQR $ yields
\begin{subequations} \label{eq:nematicvisc_molecular_field}
\begin{align}
    \innerH{\tangentQR}{\deltafrac{\energyNV}{\Dphi\Qb},\Psib}
        &= \innerH{\tangentQR}{\deltafrac{\energyNV}{\Djau\Qb},\Psib}
         = -\frac{\coeffIF\xi}{2} \innerH{\tangentSymR}{\Sb[\Vb]\IbxiQ + \IbxiQ\Sb[\Vb] - \xi\Djau\Qb, \Psib}\notag\\
        &= -\innerH{\tangentQR}{\HbNV,\Psib} \formComma \notag\\
    \HbNV
        &= -\coeffIF\xi\left( \frac{\xi}{2}\Djau\Qb - \projQR \left( \Sb[\Vb]\IbxiQ \right) \right)  \label{eq:nematicvisc_molecular_field_jauform}\\ %Mproved (Vgl. mit Abl. nach \Dlow\Qb)
        &= -\coeffIF\xi\left( \frac{\xi}{2}\Dmat\Qb - \projQR \left( \IbxiQ\Gbcal[\Vb] \right) \right)  \label{eq:nematicvisc_molecular_field_matform} %Mproved (Vgl. mit Abl. nach \Dlow\Qb)
        \in\tangentQR\formComma
\end{align}
\end{subequations}
where we represent $ \HbNV $ in terms of the Jaumann as well as the material derivative.
Using the initial formulation of the potential \eqref{eq:disspot_aniso_visc}, the relations  of $ \Dlow\IbxiQ $ to its material and Jaumann rate \eqref{eq:tmp_01}, and Lemma \ref{lem:ttensor_veloder_of_timeD} for evaluating the velocity gradient $ \veloder{\Wb}\Dlow\IbxiQ $, the variation \wrt\ material velocity $ \Vb $ in direction of $ \Wb\in\tangentR $ yields
\begin{align} 
\MoveEqLeft
    \innerH{\tangentR}{\deltafrac{\energyNV}{\Vb},\Wb}
        = \frac{\coeffIF}{2} \innerH{\tangentSymR}{\Dlow\IbxiQ, \veloder{\Wb}\Dlow\IbxiQ} \notag\\
      \label{eq:tmp01}
        &=  \frac{\coeffIF}{2} \innerH{\tangentSymR}{ -\xi\Djau\Qb + \Sb[\Vb]\IbxiQ + \IbxiQ\Sb[\Vb], -\xi\gauge{\Wb}\Qb  + \Gbcal^T[\Wb]\IbxiQ + \IbxiQ\Gbcal[\Wb]} \\ % (*)
        &= -\innerH{\tangentQR}{ \HbNV , \gauge{\Wb}\Qb } 
            + \coeffIF \innerH{\tangentR[^2]}{ -\xi\IbxiQ\Djau\Qb + (\IbxiQ)^2\Sb[\Vb] + \IbxiQ\Sb[\Vb]\IbxiQ , \Gbcal[\Wb] } \formComma \notag
\end{align}
where we take advantage of the symmetry, \ie\ $ \Djau\Qb,\IbxiQ,\Sb[\Vb]\in\tangentSymR $.
Moreover, for 
\begin{align*}
    \Rb = -\xi\IbxiQ\Djau\Qb + (\IbxiQ)^2\Sb[\Vb] + \IbxiQ\Sb[\Vb]\IbxiQ
\end{align*}
Lemma \ref{lem:ttensor_weak_deformation} states
\begin{align*}
    \innerH{\tangentR[^2]}{\Rb,\Gbcal[\Wb]}
                &= \innerH{\tangentR\otimes\tangentS}{ \Rb\IdS - \normal\otimes\IdS\Rb\normal , \nablaC\Wb}\formPeriod
\end{align*}
Unfolding $ \IbxiQ $ in the right-handed side term by term yields
\begin{align*}
    -\xi\IbxiQ(\Djau\Qb)\IdS
        &= -\xi  (\Djau\Qb)\IdS + \xi^2  (\Qb\Djau)\Qb\IdS\formComma \\
    (\IbxiQ)^2\Sb[\Vb]\IdS
        &= \Sb[\Vb] - 2\xi \Qb\Sb[\Vb] + \xi^2 \Qb^2\Sb[\Vb]\formComma\\ %S tangential
    \IbxiQ\Sb[\Vb]\IbxiQ\IdS
        &= \Sb[\Vb] - \xi\left( \Qb\Sb[\Vb] + \Sb[\Vb]\Qb\IdS \right) + \xi^2 \Qb\Sb[\Vb]\Qb\IdS %S tangential
\end{align*}
for the first summand and
\begin{align*}
    \xi\IdS\IbxiQ(\Djau\Qb)\normal
        &= \xi \normal (\Djau\Qb) \IdS -\xi^2\IdS \Qb (\Djau\Qb) \normal \\ %\Djau\Qb symmetric
    -\IdS (\IbxiQ)^2\Sb[\Vb] \normal
        &= \nullb \\%S tangential
    -\IdS \IbxiQ\Sb[\Vb]\IbxiQ \normal
        &= \xi \normal\Qb\Sb[\Vb] - \xi^2 \normal\Qb\Sb[\Vb]\Qb\IdS % S tangential,symmetric, Q symmetric
\end{align*}
for the second summand, where we use that $ \Djau\Qb, \Qb \in \tangentSymR  $ and $ \Sb[\Vb]\in\tangentSymS[^2] $.
Adding all this up and utilizing $ \IdS = \Id-\normal\otimes\normal $ result in
\begin{subequations} \label{eq:nematicvisc_stress_field}
\begin{align}
    \innerH{\tangentR}{\deltafrac{\energyNV}{\Vb},\Wb}
        &= -\innerH{\tangentQR}{ \HbNV , \gauge{\Wb}\Qb } + \innerH{\tangentR\otimes\tangentS}{\SigmabNV , \nablaC\Wb} \formComma \label{eq:tmp02}\\
    \SigmabNV
%        &= \coeffIF\left( 2\Sb[\Vb] 
%                    -\xi\Big( \projS[^2] \Djau\Qb + 3(\projS[^2]\Qb)\Sb[\Vb] + \Sb[\Vb]\projS[^2]\Qb + 2\normal\otimes\Sb[\Vb]\Qb\normal \right) \\
%        &\quad\quad\quad +\xi^2 \Big( \projS[^2] ( \Qb\Djau\Qb ) + ( \projS[^2]\Qb^2)\Sb[\Vb] +  (\projS[^2]\Qb)\Sb[\Vb]\projS[^2]\Qb \\
%        &\quad\quad\quad\quad\quad\quad  +\normal\otimes\left( \normal \Qb (\Djau\Qb) \IdS - \IdS \Qb (\Djau\Qb) \normal + \Sb[\Vb]\Qb^2\normal \right)\Big)\Big)\\ % MProved ab (*)
        &= \coeffIF\Big( 2\Sb[\Vb] 
                    -\xi\left( \projS[^2] \Djau\Qb + 2\proj_{\tangentSymS}(\Qb\Sb[\Vb])  + 2\Qb\Sb[\Vb] \right) \notag\\
        &\quad\quad\quad +\xi^2 \left( \projS[^2](\Qb\Sb[\Vb]\Qb) 
            -\normal\otimes \IdS \Qb (\Djau\Qb) \normal  + \Qb\Djau\Qb\IdS + \Qb^2\Sb[\Vb]\right)\Big) \label{eq:nematicvisc_stress_field_jauform} \\  % MProved ab (*)
        &= \coeffIF\Big( 2\Sb[\Vb] 
                     -\xi\left( \projS[^2] \Dmat\Qb + 2\proj_{\tangentSymS}(\Qb\nablaC\Vb)  + 2\Qb\Sb[\Vb] \right) \notag\\
        &\quad\quad\quad +\xi^2 \big( \projS[^2](\Qb\Gbcal^{T}[\Vb]\Qb) 
               -\normal\otimes\IdS\Qb\left( (\Dmat\Qb)\normal - 2\Abcal[\Vb]\Qb\normal - \Qb\bb[\Vb] \right) \notag\\
        &\quad\quad\quad\quad\quad\quad + \Qb\Dmat\Qb\IdS + \Qb^2\nablaC\Vb\big)\Big) \label{eq:nematicvisc_stress_field_matform} % MProved ab (*)
                \in\tangentR\otimes\tangentS
\end{align}
\end{subequations}
in terms of Jaumann as well as material derivatives of $ \Qb $.
For the latter identity we apply the time derivative relation \eqref{eq:DjauQ}, $ \nablaC\Vb = \Sb[\Vb]+\Abcal[\Vb]\IdS = ( \Sb[\Vb]-\IdS\Abcal[\Vb] )^T $ and $ \Sb[\Vb]- \Abcal[\Vb] = \Gbcal^T[\Vb] $. 

\begin{remark}
    For Q-tensor fields $ \Qb $ with scalar order parameters $ S_1,S_2\in[0,1] $,
    the anisotropic internal metric $ \IbxiQ = \Id - \xi\Qb $ is symmetric and positive definite if and only if $ \xi\in(-\frac{3}{2}, \frac{3}{2}) $.
\end{remark}
\begin{proof}
    Let $ \Pb\in\tangentR $ be an eigenvector of $ \Qb\in\tangentQR $ to the corresponding eigenvalue $ \lambda\in\tangentS[^0] $, \ie\ it holds $ \Qb\Pb=\lambda\Pb $.
    Since the scalar order parameter of $ \Qb $ are in $ [0,1] $, the range of eigenvectors is restricted to $ \lambda\in [-\frac{2}{3}, \frac{2}{3}] $, see \cite{MottramNewton_2014}.
    The eigenvector space of $ \Qb $ and $ \IbxiQ $ are equal and the corresponding eigenvalue for eigenvector $ \Pb $ is $ 1-\xi\lambda $, \ie\ it is $ \IbxiQ\Pb=(1-\xi\lambda)\Pb $ valid.
    For $ \IbxiQ $ to be positive definite, it must hold $ 1-\xi\lambda > 0 $, which gives the assertion.
\end{proof}

\paragraph{Nematic Viscous Forces for Surface Conforming Q-Tensor Fields:}
Substituting $ \Qb=\CQdepl(\qb,\beta)\in\tangentCQR $ \eqref{eq:surface_conforming_ansatz} into $ \Sb[\Vb]\IbxiQ $ yields
\begin{align*}
    \Sb[\Vb]\IbxiQ &= -\xi\Sb[\Vb]\qb + \left( 1 + \frac{\xi}{2}\beta \right)\Sb[\Vb] \in\tangentSymS \formPeriod
\end{align*}
For its Q-tensor part we can invoke Lemma \ref{lem:qtensorpart_of_tangentialtensor_is_conforming} as a consequence.
Since the Jaumann derivative is compatible with the surface conforming decomposition, \ie\ $ \Djau\CQdepl(\qb,\beta) =\CQdepl(\timeJ\qb,\dot{\beta}) $ \cite{NitschkeVoigt_2023}, 
the Q-tensor field \eqref{eq:nematicvisc_molecular_field_jauform} becomes
\begin{subequations}\label{eq:nematic_viscous_conforming_molecular_field}
\begin{align}
    \HbNV &= \CQdepl(\hbNV,\omegaNV) \in \tangentCQR \formComma\\
    \hbNV &= -\coeffIF\xi\left( \frac{\xi}{2}\timeJ\qb + \xi\projQS(\Sb[\Vb]\qb) - \left( 1 + \frac{\xi}{2}\beta \right)\projQS\Sb[\Vb] \right)  \\ %MProved
          &= -\coeffIF\xi\left( \frac{\xi}{2}\dot{\qb} + \xi\projQS(\qb\Gb[\Vb]) - \left( 1 + \frac{\xi}{2}\beta \right)\projQS\Sb[\Vb] \right) \in\tangentQS \formComma \\ %MProved
    \omegaNV &= -\coeffIF\xi\left( \frac{\xi}{2}\dot{\beta} - \frac{\xi}{3} \Sb[\Vb]\dbdot\qb  + \frac{1}{3}\left( 1 + \frac{\xi}{2}\beta \right)\Tr\Gb[\Vb] \right) \in\tangentS[^0] \formPeriod %MProved
\end{align}
\end{subequations}
The second identity for the tangential Q-tensor part $ \hbNV $ can be obtained by substituting the tangential time derivative relation \eqref{eq:jauq} into the first one or equivalently by evaluating   \eqref{eq:nematicvisc_molecular_field_matform} directly for the given ansatz $ \Qb=\CQdepl(\qb,\beta)$.
To evaluate the stress tensor field \eqref{eq:nematicvisc_stress_field_jauform} we use that
\begin{align*}
    \projS[^2]\Djau\Qb &= \timeJ\qb - \frac{1}{2}\dot{\beta}\IdS \formComma
    &\Qb\Sb[\Vb] &= \qb\Sb[\Vb]-\frac{\beta}{2}\Sb[\Vb]
\end{align*}
holds to get the $ \xi(\ldots) $ part.
For the $ \xi^2(\ldots) $ part, we note that $ \IdS \Qb (\Djau\Qb) \normal $ vanishes by \eqref{eq:QDjauQ_conforming}.
Moreover, it holds
\begin{align*}
     \Qb (\Djau\Qb)\IdS
        &= \qb\timeJ\qb - \frac{1}{2}\left( \beta\timeJ\qb + \dot{\beta}\qb \right) + \frac{1}{4}\beta\dot{\beta}\IdS \formComma\\
     \projS[^2](\Qb\Sb[\Vb]\Qb)
        &= \qb\Sb[\Vb]\qb-\frac{\beta}{2}\left( \qb\Sb[\Vb] + \Sb[\Vb]\qb \right) + \frac{1}{4}\beta^2\Sb[\Vb] \formComma \\
     \Qb^2\Sb[\Vb]
        &= -\beta\qb\Sb[\Vb] + \frac{1}{2}\left( \Tr\qb^2 + \frac{1}{2}\beta^2 \right)\Sb[\Vb] \formComma
\end{align*}
where we use that $ 2\qb^2=(\Tr\qb^2)\IdS $ is valid.
Eventually, the stress field \eqref{eq:nematicvisc_stress_field_jauform} becomes
\begin{align}
    \SigmabNV &= \sigmabNV\notag\\
        &= \coeffIF\bigg( 2\Sb[\Vb] 
                -\xi\left( \timeJ\qb - \frac{\dot{\beta}}{2}\IdS + 3\qb\Sb[\Vb] + \Sb[\Vb]\qb - 2\beta\Sb[\Vb] \right) 
            +\xi^2 \bigg( \qb\timeJ\qb - \frac{1}{2}\left( \beta\timeJ\qb + \dot{\beta}\qb \right)\notag\\ 
        &\quad\quad\quad\quad+ \frac{\beta\dot{\beta}}{4}\IdS 
          + \qb\Sb[\Vb]\qb-\frac{\beta}{2}\left( 3\qb\Sb[\Vb] + \Sb[\Vb]\qb \right)  + \frac{\Tr\qb^2 + \beta^2}{2}\Sb[\Vb] \bigg)
            \bigg) \label{eq:nematicvisc_stress_field_jauform_conforming}\\ %Mproved
        &= \coeffIF\bigg( 2\Sb[\Vb] 
                -\xi\left( \dot{\qb} - \frac{\dot{\beta}}{2}\IdS + \qb(\Gb[\Vb]+2\Sb[\Vb]) + \Gb^T[\Vb]\qb - 2\beta\Sb[\Vb] \right) 
            +\xi^2 \bigg( \qb\dot{\qb} - \frac{1}{2}\left( \beta\dot{\qb} + \dot{\beta}\qb \right) \label{eq:nematicvisc_stress_field_materialform_conforming}\\ 
        &\quad\quad+ \frac{\beta\dot{\beta}}{4}\IdS 
          + \qb\Gb^T[\Vb]\qb-\frac{\beta}{2}\left( \qb(\Gb[\Vb]+2\Sb[\Vb]) + \Gb^T[\Vb]\qb \right)  + \frac{\Tr\qb^2}{2}\Gb[\Vb] + \frac{\beta^2}{2}\Sb[\Vb] \bigg)
            \bigg)%Mproved
        \in\tangentS[^2] \formComma \notag
\end{align}
where we use the tangential time derivative relation \eqref{eq:jauq}.
Alternatively, evaluating \eqref{eq:nematicvisc_stress_field_matform} directly for the given ansatz $ \Qb=\CQdepl(\qb,\beta)$ gives also the second representation of $  \SigmabNV $ in terms of $ \dot{\qb} $.

\subsubsection{Inextensibility Constraint}\label{sec:incompressibility}

If we stipulate the fluid to be inextensible then the material rate $ \dot{\rho} $ of the mass density $ \rho $ vanishes.
As a consequence of mass conservation
\begin{align*}
    0 
        &= \dot{m}\vert_{\mathcal{M}} 
         = \frac{d}{dt}\int_{\mathcal{M}}\rho\dS
         = \int_{\mathcal{M}} \dot{\rho} + \rho\Tr\Gb[\Vb] \dS 
\end{align*}
for all 2-dimensional subsurfaces $ \mathcal{M}\subseteq\surf $ and $ \rho > 0 $, the material velocity $ \Vb\in\tangentR $ has to be divergence-free
according to $\DivC\Vb = \Tr\Gb[\Vb] = 0 $.
To archive this constraint we consider the flux potential
\begin{align*}
    \energyIC 
        &:= -\innerH{\tangentS[^0]}{p , \Tr\Gb[\Vb]}
\end{align*}
as the  Lagrange function for inextensibility with Lagrange parameter $ p\in\tangentS[^0] $, which yields a new degree of freedom.
The same method has been previously published in \cite{Arroyo_2009}.
Formally we declare $ p $ as a process variable, since it is associated to the process variable $ \Vb $ in the Lagrange function.
Therefore, the variation $ \innerH{\tangentS[^0]}{\deltafrac{\energyIC}{p},\theta} $ has to vanish for all virtual displacements $ \theta\in\tangentS[^0] $,
since $ \energyIC $  is the only energetic term considered to contain $ p $.
This yields the desired inextensibility constraint as an additional equation
\begin{align}\label{eq:conti_incompressible}
    0 &= \Tr\Gb[\Vb] = \DivC\Vb = \div\vb - \meanc\vnor
\end{align}
for $ \Vb=\vb+\vnor\normal\in\tangentR $. 
More specifically, \eqref{eq:conti_incompressible} equals the continuity equation for $ \dot{\rho}=0 $ anyway and replaces it hereby.

Variation \wrt\ $ \Vb $ in direction of $ \Wb\in\tangentR $ results in
\begin{align}
    \innerH{\tangentR}{\dfrac{\energyIC}{\Vb},\Wb}
        &= - \innerH{\tangentS[^0]}{\Tr\Gb[\Vb],\veloder{\Wb}p} - \innerH{\tangentS[^0]}{p, \Tr\Gb[\Wb]}
         =  \innerH{\tangentR}{\SigmabIC,\nablaC\Wb} \formComma\notag\\
    \SigmabIC \label{eq:SigmabIC}
        &= \sigmabIC
         = -p \IdS \in\tangentIdS \formComma
\end{align}
where the first summand vanishes due to inextensibility.
Therefore, the inextensibility force is $ \FbIC = \DivC\SigmabIC = -\GradC p $ by relations of the adjoint gradient \eqref{eq:GradC}.
This justifies to interpret $ p $ as the pressure.
Even though we actually derive inextensible models in this paper, all other stress and generalized force fields in section \ref{sec:derivation} do not relay on the inextensibility constraint \eqref{eq:conti_incompressible} and could also be used for compressible models. 

\subsubsection{No Normal Flow Constraint (optional)}\label{sec:no_normal_flow}

In order to force a geometrically stationary surface we stipulate a vanishing normal velocity, \ie\ $ \vnor = \Vb\normal = 0 $.
Note that this is the only situation where a purely Eulerian observer exists at a moving surface.
We implement this constraint by the Lagrange function
\begin{align*}
    \energyNN
        &:= -\innerH{\tangentS[^0]}{\lambdaNN,\vnor}
          = -\innerH{\tangentR}{\lambdaNN\normal,\Vb} \formComma
\end{align*}
where $ \lambdaNN\in\tangentS[^0] $ is the associated Lagrange parameter, yielding a process variable as an additional degree of freedom.
We obtain the constraint equation
\begin{align}\label{eq:no_normal_flow_constrain}
    0 &= \Vb\normal = \vnor
\end{align}
by $ \inner{\tangentS[^0]}{\deltafrac{\energyNN}{\lambdaNN},\theta}=0 $ for all virtual displacements $ \theta\in\tangentS[^0] $,
since $ \energyNN $ is the only energetic term considered to contain $ \lambdaNN $. 
The variation $ \innerH{\tangentR}{\deltafrac{\energyNN}{\Vb}, \Wb} = -\innerH{\tangentR}{\FbNN, \Wb} $ in arbitrary direction $ \Wb\in\tangentR $, yields the constraint force
\begin{align} \label{eq:FbNN}
    \FbNN &= \lambdaNN\normal \in \tangentnormal < \tangentR \formPeriod
\end{align}
The Lagrange parameter $ \lambdaNN $ appears solely in the normal direction of the fluid equation and is therefore only determined by it.
Due to this we could drop the normal fluid equation and omit $ \vnor $ from all remaining equations by constraint \eqref{eq:no_normal_flow_constrain}, if we are not interested in the constraint force $ \FbNN $.

\paragraph{Flat Surfaces:}

Since the condition \eqref{eq:no_normal_flow_constrain} yields a geometrically stationary surface, geometric quantities stay stationary also from an Eulerian perspective.
Other observer only perceive advection of such quantities along the observer flow.
Therefore, if a geometrical quantity is globally zero from the beginning, then it stays that way for every observer with $ \vnor=0 $.
As a consequence, a flat surface can be stipulated by the initial constraint
\begin{align}\label{eq:flat_surface_initial_constrain}
    \nullb = \shop\vert_{t=t_0}
\end{align}
additionally to the constraint $ \vnor=0 $.
This is equivalent to the constraint $ \nullb = \shop $ at all times under \eqref{eq:no_normal_flow_constrain}.
To make it clear, there is no need to augment the Lagrange multiplier technique on $ \vnor=0 $ for a flat surface.
We only have to consider the initial constraint \eqref{eq:flat_surface_initial_constrain} and substitute $\shop = \nullb $ wherever it appears.
Any constraint forces that may occur are already included in $ \FbNN $.

\subsubsection{No Flow Constraint (optional)}\label{sec:no_flow}

In this section we consider materially stationary surfaces, \ie\ the surface is at rest by stipulating a vanishing material velocity.
This intention is purely for verification reasons. 
We could have derived the resulting model without considering the velocity in the first place, \eg\ by formulating the associated $ \hil $-gradient flow.
In order to implement this constraint by the Lagrange multiplier technique,
we use the Lagrange function
\begin{align*}
    \energyNF
        &:= -\innerH{\tangentR}{\LambdabNF,\Vb} \formComma
\end{align*}
where $ \LambdabNF\in\tangentR $ is the associated Lagrange parameter, yielding a process variable as an additional degree of freedom.
We obtain the constraint equation
\begin{align}\label{eq:no_flow_constrain}
    \nullb &= \Vb
\end{align}
by $ \innerH{\tangentR}{\deltafrac{\energyNF}{\LambdabNF},\Thetab}=0 $ for all virtual displacements $ \Thetab\in\tangentR $,
since $ \energyNF $ is the only energetic term considered to contain $ \LambdabNF $. 
The variation $ \innerH{\tangentR}{\deltafrac{\energyNF}{\Vb}, \Wb} = -\innerH{\tangentR}{\FbNF, \Wb} $ in arbitrary direction $ \Wb\in\tangentR $  yields the constraint force
\begin{align}\label{eq:FbNF}
    \FbNF &= \LambdabNF\in \tangentR \formPeriod
\end{align}
The Lagrange function does not depend on $ \Qb $ and thus the Q-tensor constraint force is $ \HbNF=\nullb $.
The Lagrange parameter $ \LambdabNF $ appears solely in the fluid equation and is therefore only determined by it.
Due to this we could drop the fluid equation and omit $ \Vb $ from all remaining equations by constraint \eqref{eq:no_flow_constrain}, if we are not interested in the constraint force $ \FbNF $.
Eventually, only the Q-tensor equation \eqref{eq:L2flow} remains, which describes the usual local $ \hil $-gradient flow equation for the free energy $ \energyEL + \energyTH $ under possibly considered constraints on $ \Qb $.
There is no longer a choice of $ \Phi\in\{\mfrak,\jau\} $ to describe the dissipation mechanism for $ \Qb $, since it holds $ \Dmat\Qb=\Djau\Qb =:\Dt\Qb $.
Moreover the Lagrangian and Eulerian observer are equal \wrt\ stationary materials and therefore $ \Dt\Qb=\partial_t\Qb $.

\subsection{Total Energy Rate}\label{sec:energy_rate}

In this section we show thermodynamical consistency of the total energy $ \energyTOT $ by revealing that $ \ddt\energyTOT $  solely depends on the energy flux potential $ \fluxpotential $.
%, where only the active flux potential $ \energyAC $ is able to yield a non-dissipative system $ \ddt\energyTOT > 0 $.
The total energy
\begin{align*}
    \energyTOT 
        &:= \energyK + \potenergy
          =  \energyK + \energyEL + \energyTH + \energyBE  
\end{align*}
comprises the kinetic energy $ \energyK $ \eqref{eq:kinetic_energy} and potential energy $ \potenergy $,
which in turn includes the elastic energy $ \energyEL $ \eqref{eq:energy_elastic}, the thermotropic energy $ \energyTH $ \eqref{eq:energy_thermotropic} and the bending energy $ \energyBE  $ \eqref{eq:energy_bending}.
Note that all Langrage functions implemented as potential energy are not considered in $ \potenergy $, since they all are vanishing trivially under their associated constraint.
By the transport formula $ \ddt\int_{\surf} f \dS = \int_{\surf} \dot{f} + f\Tr\Gb[\Vb] \dS $ and local mass conservation $ \dot{\rho} + \rho\Tr\Gb[\Vb] = 0 $ the rate of kinetic energy yields
\begin{align}\label{eq:kinetic_energy_rate}
    \ddt\energyK
        &= \innerH{\tangentR}{\frac{1}{2}\left( \dot{\rho} + \rho\Tr\Gb[\Vb] \right)\Vb + \rho\Dmat\Vb, \Vb}
         = \innerH{\tangentR}{\rho\Dmat\Vb, \Vb} \formPeriod
\end{align}
Calculating the potential energy rate $ \ddt\potenergy $ directly is about as costly and quite similar to its variation procedures given in Sections \ref{sec:elastic}--\ref{sec:bending}.
For this reason we use the chain rule 
\begin{align}\label{eq:temporal_energetic_chain_rule_partial_form}
    \ddt\potenergy
        &= \innerH{\tangentR}{\frac{\partial \potenergy}{\partial \para}, \Vb} + \innerH{\tangentQR}{\frac{\partial \potenergy}{\partial \Qb}, \Dmat\Qb}
\end{align}
instead.
We obtain this chain rule for $ \ddt\potenergy = \lim_{\tau\rightarrow 0} \frac{1}{\tau}\left( \potenergy\vert_{t+\tau} - \potenergy\vert_t \right) $ by Taylor expansion at $ \tau=0 $ and the fact that $ \potenergy $ is instantaneous, which results in
\begin{align*}
    \potenergy\vert_{t+\tau}
        &= \potenergy[\para\vert_{t+\tau}, \Qb[\para]\vert_{t+\tau}]
         = \potenergy[ \para\vert_t + \tau\Vb\vert_t + \landau(\tau^2) , \Qb[\para]\vert_t + \tau\Dmat\Qb[\para]\vert_t + \landau(\tau^2) ]\\
        &= \potenergy\vert_t + \tau\left( \innerH{\tangentR}{\frac{\partial \potenergy}{\partial \para}, \Vb} + \innerH{\tangentQR}{\frac{\partial \potenergy}{\partial \Qb}, \Dmat\Qb} \right)\Bigg\vert_t + \landau(\tau^2)
\end{align*}
for $ \potenergy = \potenergy[\para,\Qb[\para]] $, \cf\ \cite{NitschkeSadikVoigt_A_2022} for more details in this kind of process.
Note that \eqref{eq:temporal_energetic_chain_rule_partial_form} uses partial variations, which variate only the partial arguments of $ \potenergy[\para,\Qb[\para]] $ without considering mutual dependencies of the arguments $ \para $ and $ \Qb=\Qb[\para] $. 
Moreover these partial variations depend on the syntactical description of $ \potenergy $, see \cite{NitschkeSadikVoigt_A_2022}.
Since the parameterization $ \para $ does not depend on $ \Qb $ a priori, it holds 
$ \innerH{\tangentQR}{\frac{\partial \potenergy}{\partial \Qb}, \Psib} = \innerH{\tangentQR}{\deltafrac{\potenergy}{\Qb}, \Psib} $ anyway for all $ \Psib\in\tangentQR $.
However, $ \Qb $ is not a priori independent of $ \para $. 
Its first order dependency is determined by the choice of a gauge of surface independence, such as \eqref{eq:material_gauge} or \eqref{eq:jaumann_gauge}.
With the same process as in \cite{NitschkeSadikVoigt_A_2022} we 
obtain\footnote{The only noteworthy difference to \cite{NitschkeSadikVoigt_A_2022} is that we have to omit the tangential projection, since $ \Qb $ is a $ \R^3 $-quantity.
                Therefore the Taylor expansion at the unperturbed surface just reads
                $
                    \Qb[\para+\eps\Wb] = (\Qb[\para])^{*_{\eps\Wb}} + (\gauge{\Wb}\Qb[\para])^{*_{\eps\Wb}} + \landau(\eps^2) \formPeriod
                $
                }  
the identity
\begin{align*}
    \innerH{\tangentR}{\deltafrac{\potenergy}{\para}, \Wb}
        &= \innerH{\tangentR}{\frac{\partial \potenergy}{\partial \para}, \Wb} + \innerH{\tangentQR}{\deltafrac{\potenergy}{\Qb}, \gauge{\Wb}\Qb} \formPeriod
\end{align*}
As a consequence, the chain rule \eqref{eq:temporal_energetic_chain_rule_partial_form} holds
\begin{align*}
    \ddt\potenergy
        &= \innerH{\tangentR}{\deltafrac{\potenergy}{\para}, \Vb}
            + \innerH{\tangentQR}{\deltafrac{\potenergy}{\Qb}, \Dmat\Qb - \gauge{\Vb}\Qb}
\end{align*}   
in terms of syntax 
independent\footnote{For instance and contrarily to partial variations, the total variation \wrt\ $\para$ is always yielding the same results for $ \potenergy=\potenergy[\para,\Qb[\para]] $, $ \potenergy = \potenergy[\para, \{Q^\alpha[\para]\}_{\alpha=1}^{5} ]$, $ \potenergy=\potenergy[\para,\Qb[\para], \{g_{ij}[\para]\}, \shop[\para]] $, \etc, as long as $ \potenergy $ describes the same energy, see \cite{NitschkeSadikVoigt_A_2022} for more details.} 
total variations.
This leads to the total energy rate
\begin{align}
    \ddt\energyTOT
        &= \innerH{\tangentR}{\rho\Dmat\Vb - \totalFb_{\potenergy}, \Vb}
            - \innerH{\tangentQR}{\Hb_{\potenergy}, \Dmat\Qb - \gauge{\Vb}\Qb} \notag\\
        &= \innerH{\tangentR}{\totalFb_{\fluxpotential}, \Vb} + \innerH{\tangentQR}{\Hb_{\fluxpotential}, \Dmat\Qb - \gauge{\Vb}\Qb}\notag\\
        &= \innerH{\tangentR}{\Fb_{\fluxpotential}, \Vb} + \innerH{\tangentQR}{\Hb_{\fluxpotential}, \Dmat\Qb}
            \label{eq:total_energy_rate_general}
\end{align}
by definition of Lagrange forces in \eqref{eq:Lagrange_forces_def}, fluid equation \eqref{eq:LDA_fluid}, Q-tensor equation \eqref{eq:LDA_molecular} and the force separation \eqref{eq:total_force_splitting_weak}.
A first statement about thermodynamic consistency is thus already answered.
The total energy rate, \wrt\ solutions of the surface Beris-Edwards models, does not depend on the potential energy $ \potenergy $ and even vanishes for closed systems, \ie\ in the absence of a flux potential $ \fluxpotential $ determining the energy transfer between the system under consideration and untreated energy reservoirs.
Additionally, \eqref{eq:total_energy_rate_general} validates that  the rate is independent of $ \gauge{\Wb}\Qb $.
Note that the surface Beris-Edwards models,  as well as its solution, do not depend on $ \gauge{\Wb}\Qb $.
Therefore, this holds also for the state measurement $ \energyTOT $ and its rate, which is confirmed by \eqref{eq:total_energy_rate_general}.

In the following we calculate the total energy rate in more detail \wrt\ all in this paper given flux potentials $ \fluxpotential_{\alpha} $ with 
$ \alpha\in\textup{A}_{\fluxpotential}:= \{\IM,\AC,\NV,\IC,\NN,\NF\} $, \ie\ it is $ \fluxpotential = \sum_{\alpha\in\textup{A}_{\fluxpotential}}\fluxpotential_{\alpha}  $ valid.
To save writing efforts we separate the total energy rate by $ \ddt\energyTOT = \sum_{\alpha\in\textup{A}_{\fluxpotential}} \dot{\energy}_{\alpha} $
with summands $ \dot{\energy}_{\alpha} := \innerH{\tangentR}{\Fb_{\alpha}, \Vb} + \innerH{\tangentQR}{\Hb_{\alpha}, \Dmat\Qb} $.
For the considered constraints, solely given by the incompressibility stress \eqref{eq:SigmabIC}, no normal flow force \eqref{eq:FbNN} and no flow force \eqref{eq:FbNF}, we trivially get
\begin{align*}
    \dot{\energy}_{\IC} &= -\innerH{\tangentSymS}{\SigmabIC, \Gb[\Vb]} = 0\formComma
    &\dot{\energy}_{\NN} &= \innerH{\tangentR}{\FbNN, \Vb} = 0\formComma
    &\dot{\energy}_{\NF} &= \innerH{\tangentR}{\FbNF, \Vb} = 0
\end{align*}
as expected under their associated constraints \eqref{eq:conti_incompressible}, \eqref{eq:no_normal_flow_constrain} and \eqref{eq:no_flow_constrain}, if applied respectively.
The material immobility ($ \Phi=\mfrak $) yields
\begin{align*}
     \dot{\energy}_{\IM}\vert_{\Phi=\mfrak}
         &= -M \normHsq{\tangentQR}{\Dmat\Qb}
         = -2 \energyIM[\mfrak]
\end{align*}
by the vanishing material immobility stress $ \SigmabIM[\mfrak] $ \eqref{eq:immobility_stress_field}, Q-tensor force $ \HbIM[\mfrak] $ \eqref{eq:HbIM} and flux potential $ \energyIM[\mfrak] $ \eqref{eq:immobility_flux}.
A similar result is obtained for the Jaumann immobility ($ \Phi=\jau $).
Using the Jaumann immobility stress $ \SigmabIM[\jau] $ \eqref{eq:immobility_stress_field}, Q-tensor force $ \HbIM[\jau] $ \eqref{eq:HbIM}, flux potential $ \energyIM[\jau] $ \eqref{eq:immobility_flux},
time derivatives relation between $ \Dmat\Qb $ and $ \Djau\Qb $ \eqref{eq:DjauQ},
and the identity $ \Abcal[\Vb] = \proj_{\tangentAR}(\nablaC\Vb + \normal\otimes\normal\nablaC\Vb) $, following from \eqref{eq:Gbcal} and \eqref{eq:Abcal}, we calculate
\begin{align*}
    \dot{\energy}_{\IM}\vert_{\Phi=\jau}
        &= - \innerH{\tangentR[^2]}{\SigmabIM[\jau], \nablaC\Vb} - M \innerH{\tangentQR}{\Djau\Qb , \Dmat\Qb}\\
        &= - M \left( \innerH{\tangentAR}{ \Qb\Djau\Qb - (\Djau\Qb)\Qb , \Abcal[\Vb]} 
                    + \innerH{\tangentQR}{\Djau\Qb , \Djau\Qb + \Abcal[\Vb]\Qb - \Qb\Abcal[\Vb]\Qb }\right) \\
        &= -M \normHsq{\tangentQR}{\Djau\Qb}
         = -2 \energyIM[\jau] \formPeriod
\end{align*}
%The activity part is solely given by the stress $ \SigmabAC $ \eqref{eq:SigmabAC}, since there is no Q-tensor force $ \HbAC $ \eqref{eq:HbAC}. This yields 
%\begin{align*}
%     \dot{\energy}_{\AC}
%        &= -\innerH{\tangentR[^2]}{\SigmabAC, \nablaC\Vb}
%         = -\alpha \innerH{\tangentS[^2]}{ \proj_{\tangentSymS}\Qb , \Gb[\Vb] }
%         = -\energyAC \formComma
%\end{align*}
%where the activity flux potential $ \energyAC $ is given in \eqref{eq:activity_potential}.
For the nematic viscosity we shorten the way a bit.
We use the backward calculation from \eqref{eq:tmp02} to \eqref{eq:tmp01}, which is still valid for $ \gauge{\Wb}\Qb=\nullb $ and $ \Wb=\Vb $.
This leads to
\begin{align*}
    \dot{\energy}_{\NV}
        &= -\innerH{\tangentR[^2]}{\SigmabNV, \nablaC\Vb}
           + \innerH{\tangentQR}{\HbNV , \Djau\Qb + \Abcal[\Vb]\Qb - \Qb\Abcal[\Vb]\Qb } \\
        &= -\frac{\coeffIF}{2}\bigg( \innerH{\tangentR[^2]}{\Dlow\IbxiQ, \Gbcal^T[\Vb]\left( \Id - \xi\Qb \right) + \left( \Id - \xi\Qb \right)\Gbcal[\Vb]} \\
                    &\quad\quad\quad\quad - \xi \innerH{\tangentR[^2]}{\Dlow\IbxiQ, \Djau\Qb + \frac{1}{2}\left( \Gbcal[\Vb] - \Gbcal^T[\Vb] \right)\Qb - \frac{1}{2}\Qb\left( \Gbcal[\Vb] - \Gbcal^T[\Vb] \right)}\bigg) \\
        &= -\frac{\coeffIF}{2} \innerH{\tangentSymR}{ \Dlow\IbxiQ, -\xi \Djau\Qb - \xi\Sb[\Vb]\Qb - \xi \Qb\Sb[\Vb] + 2\Sb[\Vb] }
         %= -\frac{\coeffIF}{2} \normHsq{\tangentSymR}{\Dlow\IbxiQ}
         = - 2\energyNV
\end{align*}
by the stress $ \SigmabNV $ \eqref{eq:nematicvisc_stress_field_jauform}, Q-tensor force $ \HbNV $ \eqref{eq:nematicvisc_molecular_field_jauform}, flux potential $ \energyNV $ \eqref{eq:disspot_aniso_visc}, time derivatives relation \eqref{eq:DjauQ}, anisotropic metric $ \IbxiQ $ \eqref{eq:aniso_metric} and its lower-convected rate  $ \Dlow\IbxiQ $ \eqref{eq:DlowIbxiQ_jauform}.
Finally, the total energy rate \eqref{eq:total_energy_rate_general} results in
\begin{align}\label{eq:total_energy_rate}
    \ddt\energyTOT
            &= -2 \left( \energyIM + \energyNV \right) \le 0 \formPeriod
\end{align}
As a consequence, changes of the energy of our system under consideration are solely determined by the energy transfer mechanisms applied in this paper.
Moreover, the system is purely dissipative, since $ \energyIM, \energyNV \ge 0 $.
 
\subsection{Physical relations} 
\label{sec:phys}

For space dimension $d = 2, 3$ a connection can be established between Beris-Edwards models \cite{Beris_1994} and Ericksen-Leslie models \cite{ericksen1961conservation,Leslie_1968}. For rigorous mathematical proofs see \cite{Wang_SIAMJMA_2015}. To establish this connection for the derived surface models, see, e.g., \cite{nitschke2019hydrodynamic} for corresponding surface Ericksen-Leslie models, is beyond the scope of this paper. However, one essential physical implication can also be shown for the surface models. This relation is concerned with the nematic viscous stress tensor field. We demonstrate for a surface conforming uniaxial field with a tangential principal director consistency with the Parodi-Leslie relations \cite{Leslie_1966,Parodi_1970,Currie_1974,Beris_1994}. Let $ \Qb=s(\mathbf{d}\otimes\mathbf{d}-\frac{1}{3}\Id) $, where $ s\in\tangentS[^0] $ is a temporally constant scalar order field, \ie\ $ \dot{s}=0 $, and $ \mathbf{d}\in\tangentS $ is a tangential director field, \ie\ $ \mathbf{d}\mathbf{d}=1 $.
A comparison with ansatz \eqref{eq:surface_conforming_ansatz} results in $ \Qb=\CQdepl(\qb,\beta)\in\tangentCQR $ with $ \qb=s(\mathbf{d}\otimes\mathbf{d}-\frac{1}{2}\IdS)\in\tangentQS $ and $ \beta=-\frac{s}{3}\in\tangentS[^0] $.
Moreover, we use that $ \timeJ\qb=s((\timeJ\mathbf{d})\otimes\mathbf{d}+\mathbf{d}\otimes\timeJ\mathbf{d}) $, $ \dot{\beta}=0 $ and $ \mathbf{d}\bot\timeJ\mathbf{d} $ holds, see \cite{NitschkeVoigt_JoGaP_2022,NitschkeVoigt_2023}.
Substituting $ \qb $ and $ \beta $ into \eqref{eq:nematicvisc_stress_field_jauform_conforming}, or \eqref{eq:nematicvisc_stress_field_materialform_conforming} equivalently, yields
\begin{gather*}
    \sigmabNV
        = \alpha_1 \Sb[\Vb](\mathbf{d},\mathbf{d})\mathbf{d}\otimes\mathbf{d}
          +\alpha_2 (\timeJ\mathbf{d})\otimes\mathbf{d}
          +\alpha_3 \mathbf{d}\otimes\timeJ\mathbf{d}
          +\alpha_4 \Sb[\Vb]
          +\alpha_5 \Sb[\Vb]\mathbf{d}\otimes\mathbf{d}
          +\alpha_6 \mathbf{d}\otimes\Sb[\Vb]\mathbf{d} \formComma\\
\begin{align*}
    \alpha_1 &= \coeffIF s^2\xi^2 \formComma
        &\alpha_2 &= -\coeffIF s\xi \left( 1 + \frac{s\xi}{3} \right) \formComma
            &\alpha_3 &= -\coeffIF s\xi \left( 1 - \frac{2}{3}s\xi \right) \formComma \\
    \alpha_4 &= 2\coeffIF\left( 1 + \frac{s\xi}{3} \right)^2 \formComma
        &\alpha_5 &= -\coeffIF s\xi \left( 1 + \frac{s\xi}{3} \right) \formComma
            &\alpha_6 &= -3 \coeffIF s\xi \formPeriod
\end{align*}
\end{gather*}
Time-reversal invariance leads to the Parodi relation \cite{Parodi_1970}
\begin{align*}
    \alpha_2 + \alpha_3  &= \alpha_6 - \alpha_5 &&\left(= \coeffIF s\xi \left( \frac{s\xi}{3} - 2 \right)\right) \formComma
\end{align*}
which is fulfilled for $ \sigmabNV $.
Please note that neither the material $\sigmabIM[\mfrak]=\nullb$ nor the Jaumann immobility stress tensor field $\sigmabIM[\jau]= M s^2 ( \mathbf{d}\otimes\timeJ\mathbf{d} - (\timeJ\mathbf{d})\otimes\mathbf{d} )$ makes any contributions in this context.
Furthermore, $ \sigmabNV $ respects the Leslie relations
\begin{gather*}
\begin{align*}
    0 &\le \alpha_3 - \alpha_2 &&\left(= 2\coeffIF\left( 1 + \frac{s\xi}{3} \right)^2\right) \formComma
  & 0   &\le 2\alpha_4 + \alpha_5 + \alpha_6 &&\left(=\coeffIF\left( 2 - \frac{s\xi}{3} \right)^2\right)\formComma \\
    0   &\le\alpha_4            &&\left(= 2\coeffIF\left( 1 + \frac{s\xi}{3} \right)^2\right)\formComma
  &0 &\le \alpha_1 + \alpha_4 + \alpha_5 + \alpha_6 &&\left(=2\coeffIF\left( 1 - \frac{2}{3}s\xi \right)^2\right)\formComma
\end{align*}\\
\begin{align*}
    0   &\le \left( \alpha_3 - \alpha_2 \right)\left( 2\alpha_4 + \alpha_5 + \alpha_6 \right) - \left(\alpha_6 - \alpha_5\right)^2  &&\left(=0\right)\formPeriod
\end{align*}
\end{gather*}%MProved (rels by parodi)
as given in \cite{Parodi_1970}.
The relation $ \alpha_1 + \alpha_4 + \alpha_5 + \alpha_6 \ge 0 $ is slightly less restrictive in \cite{Leslie_1966,Currie_1974,Beris_1994} and therefore holds as well.
With the inclusion of immobility, the relations remain almost unaltered.
Only the last relation is no longer trivial for  the Jaumann immobility, but still satisfied by $ 2M\coeffIF s^2\left( 2 - \frac{s\xi}{3} \right)^2 \ge 0 $.
The reason for separately considering immobility, rather than integrating it together  with nematic viscosity in first place, stems from the fact that these relations are derived for inertia models in which $ \frac{\rho}{2}\dot{\mathbf{d}}\dot{\mathbf{d}} $ is incorporated into the kinetic energy. 
The examination of dynamics through immobility can be regarded as a kind of limit of over-damped rotational kinetics.
In our view, these relations should still be considered valuable in this situation, especially if the rotation of the director field is inhibited relatively to the material's local rotation.

\section{Discussion}\label{sec:discussion}

Using a Lagrange-D'Alembert principle we have derived surface Beris-Edwards models on evolving surfaces. 
The derivation leads to thermodynamically consistent formulations allowing for the simultaneous relaxation of the surface Q-tensor field and the shape of the surface by taking the tight interplay between geometry and flow field and flow field and Q-tensor field into account. 
We have considered two different models $ \Phi\in\{\mfrak,\jau\} $.
We do not like to judge which choice of $ \Phi $ is the ``correct'' one. But in our understanding we believe that the material models ($ \Phi=\mfrak $) are suitable if the nematic field lays on the fluid membrane, and the Jaumann models ($ \Phi=\jau $) if the nematic field describes the fluid membrane. In any case, this is not a mathematical but a physical issue. Differences in the solution will be small if the nematic field is sufficiently close to its equilibrium, \eg\ for a small immobility coefficient $ M $ with a convenient initial condition.
However, as already shown for simpler models, out of equilibrium the dynamics can strongly differ \cite{NitschkeSadikVoigt_A_2022}. The derived surface Beris-Edwards models \eqref{eq:model_lagrange_multiplier} and \eqref{eq:model_conforming}, with potential constraints have been shown to reduce to known model equations in the literature if considered in special simplified situations, e.g., for nematic shells, fluid deformable surfaces or nematodynamics in flat space, see Secs. \ref{sec:model_general_cases} and \ref{sec:surface_conforming_special_cases}.

While most model formulations for surface liquid crystals consider a surface conforming formulation, we consider next to this formulation also a general formulation without the necessity of tangential anchoring. This general surface Beris-Edwards model can be seen as a modular model allowing for incorporating constraints by Lagrange multipliers. Surface conformity is one of these constraints. The obtained formulations are equivalent. But we believe the general form to be computationally more feasible. This has been demonstrated for fluid deformable surfaces \cite{Krause_2023,Krause_PAMM_2023,BachiniKrauseNitschkeVoigt_2023}. However, a computational realization for the surface Beris-Edwards model remains subject of future work. Several comments on the Lagrange multiplier approach should be made. As one can easily see, the surface conforming model \eqref{eq:model_conforming} exhibits imperfections under the uniaxial condition. The constraint equation $ \nullb=\Cb_{\UN} $ stipulates $ 0 = (2\Tr\qb^2-9\beta^2)\Tr\qb^2 $ (Tab.~\ref{tab:constraints_conforming}), which in turn makes the second part $ \nullb = (2\Tr\qb^2-9\beta^2)\beta\qb $ of the constraint redundant, since $ \Tr\qb^2=0 $ if and only if $ \qb=\nullb $. However, removing this second part would result in fewer equations than degrees of freedom, indicating an underdetermined system. As there are no uniaxial constraint forces in the fluid equation, and the Q-tensor field is sufficiently determined by the Q-tensor and the constraint equation, we believe that this underdetermination mainly affects the Lagrange parameter $ \LambdabUN\cong\left( \lambdabUN, \lambdabotUN \right) $, though further investigation is required. Nevertheless, if this is indeed the case, it does not impact the actual solution $ (\vb,\vnor,\qb,\beta) $. We speculate that this issue may also arise in the more general model \eqref{eq:model_lagrange_multiplier}, although it may not be as readily apparent. From an analytical standpoint, this does not pose significant challenges. However, numerical discretizations should take this into account. For instance, we believe that careful selection of linearization techniques and linear solvers can mitigate the underdetermination numerically. Alternatively a penalty method could be used instead of the Lagrange multiplier technique. E.g. by incorporating the biaxiality potential energy given by
        \begin{align}\label{eq:biaxiality_energy}
            \potenergy_{\UN} 
                &:= \varpi\int_{\surf} \mathfrak{b}(\Qb) \dS 
                = \varpi\int_{\surf}(\Tr\Qb^2)^{3} - 6(\Tr\Qb^3)^2 \dS
                \ge 0
        \end{align} 
where $ \varpi \ge 0 $ is a penalization parameter and $ \mathfrak{b}(\Qb) $ is a biaxial measure \eqref{eq:biaxiality_measure_dimensionalized}. This potential energy compels the solution to approximate uniaxiality for $ \varpi \gg 0 $ due to the resulting Lagrangian forces
        \begin{align*}
            \FbUN &= \varpi\GradC \mathfrak{b}(\Qb)\formComma
            & \HbUN &= -6\varpi\left( (\Tr\Qb^2)^{2}\Qb - 6(\Tr\Qb^3)\Qb^2 \right)
        \end{align*}
which need to be added to the general fluid equation \eqref{eq:model_lagrange_multiplier_floweq} and Q-tensor equation \eqref{eq:model_lagrange_multiplier_moleculareq} in place of the constraint terms used in the Lagrange multiplier technique. For the surface conforming model \eqref{eq:model_conforming}, it is necessary to substitute $ \Qb=\CQdepl(\qb,\beta) $ \eqref{eq:surface_conforming_ansatz} and perform the corresponding decompositions \eqref{eq:force_decomposition} and \eqref{eq:Qtensorforce_decomposition}. A similar approach was previously published in \cite{Nestler_2020}, albeit for pure tangential uniaxiality, which does not permit phase transitions between tangential and normal director fields. However, based on our experience, penalty methods require careful selection of the penalty parameter to achieve both a sufficiently accurate approximation  and numerical stability. This means that we have to find a ``sweet spot'' for $ \varpi $  at first place or by adaption through multiple simulations. Parameter studies for similar problem formulations can be found in \cite{Nestler_2019}, where  penalty approaches were used to enforce tangentiality of director fields and Q-tensor fields. Examples for penalized surface conformity can be found in \cite{BouckNochettoYushutin_2022,Nestler_2020}. Overall, we would like to emphasize that penalty methods are generally compatible with the Lagrange-D'Alembert principle. 

Our last comments concern potential model extensions. We have chosen a one-constant approximation for the sake of simplicity, e.g., $ L = L_1 $ as the only non-vanishing elastic constant. Considering an elastic energy $ \energyEL $ \eqref{eq:energy_elastic} also comprising $ L_2 $,$ L_3 $,$ L_4 $ and $ L_6 $-terms \cite{Golovaty_JNS_2017,Nitschke_2018} would have made this paper even more technically. 
However, the provided tools should be sufficient to derive the remaining elastic forces and adapt the surface Beris-Edwards models appropriately. 
We omitted the treatment of boundary conditions as it would have opened up a separate and extensive topic.
The local Lagrange-d'Alembert equations \eqref{eq:LDA} and everything derived from it holds only for boundaryless surfaces or with sufficient periodicity at the boundary for all involved degrees of freedom, especially including the geometry.
However, boundary conditions could be included in the step going from the spatially global equation \eqref{eq:LDA_time_local} to the local equations \eqref{eq:LDA}.
The obtained boundary integral has to vanish either by natural or essential boundary conditions, or a combination of both.
Furthermore we only discussed inextensible fluids. For compressible fluids one can use $ \fluxpotential_{PR}=\innerH{\tangentS[^0]}{p,\Tr\Gb[\Vb]} $ as a genuine flux potential, where $ p\in\tangentS[^0] $ is not a degree of freedom as for the Lagrange multiplier technique. Moreover $ p $  serves as a pressure and could depend on $ \Vb $, $ \Qb $ or other additional degrees of freedom. In any case, this pressure has to be determined in some way in order to close the equations. A more profound investigation to develop a compressible model according to the Lagrange-D'Alembert principle is needed. 
The last extension concerns the incorporation of active terms. This is current research and will be published elsewhere.

\appendix

\section{Toolbox} \label{app:toolbox}

\begin{lemma}\label{lem:geo_qwertz01}
    For a parameterization $ \para $ and the metric tensor  given by covariant proxy components $ g_{ij} = \inner{\tangentR}{\partial_i\para,\partial_j\para} $
    holds
    \begin{align*}
        \partial_j\left( g^{ij}\partial_i\para \right)
            &= \meanc\normal - \Gamma_{jk}^jg^{ik}\partial_i\para \formPeriod
    \end{align*}
\end{lemma}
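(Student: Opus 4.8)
The plan is to recognize this as the statement that the Laplace--Beltrami operator applied to the position field returns the mean curvature vector, and to prove it by splitting $\partial_j\partial_i\para$ into its tangential and normal parts. First I would apply the product rule,
\begin{align*}
    \partial_j\left( g^{ij}\partial_i\para \right)
        &= (\partial_j g^{ij})\partial_i\para + g^{ij}\partial_j\partial_i\para \formComma
\end{align*}
so that the task reduces to understanding the second derivatives $\partial_i\partial_j\para$ and the derivative $\partial_j g^{ij}$ of the inverse metric.

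The key structural input is Gauss's formula $ \partial_i\partial_j\para = \Gamma^k_{ij}\partial_k\para + \shopC_{ij}\normal $, which I would justify from the paper's own conventions. Pairing $\partial_i\partial_j\para$ with $\partial_l\para$ and using the three-term expansion recovers the Christoffel symbol of the first kind, $ (\partial_i\partial_j\para)(\partial_l\para) = \Gamma_{ijl} = \frac{1}{2}(\partial_i g_{jl} + \partial_j g_{il} - \partial_l g_{ij}) $, whence the tangential part is $\Gamma^k_{ij}\partial_k\para$. Pairing instead with $\normal$ gives the normal coefficient $ \shopC_{ij} := \normal\,\partial_i\partial_j\para = -(\partial_i\normal)(\partial_j\para) $, and this sign is exactly consistent with the paper's $ \shop = -\nablaC\normal $, so $\shopC_{ij}$ are the covariant proxies of the shape operator. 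Contracting with $g^{ij}$ then yields the normal term, since $ g^{ij}\shopC_{ij} = \Tr\shop = \meanc $, producing precisely the $\meanc\normal$ summand.

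It remains to collect the tangential contributions, namely $(\partial_j g^{ij})\partial_i\para$ together with $g^{ij}\Gamma^k_{ij}\partial_k\para$. Here I would invoke metric compatibility of the Levi-Civita connection, $ \nabla_l g^{ij} = \partial_l g^{ij} + \Gamma^i_{lk}g^{kj} + \Gamma^j_{lk}g^{ik} = 0 $, and contract over $l=j$ to obtain $ \partial_j g^{ij} = -\Gamma^i_{jk}g^{kj} - \Gamma^j_{jk}g^{ik} $. Substituting this and using the symmetry $\Gamma^i_{jk}=\Gamma^i_{kj}$ with the symmetry of $g^{ij}$, the term $-\Gamma^i_{jk}g^{kj}\partial_i\para$ cancels exactly against $g^{ij}\Gamma^k_{ij}\partial_k\para$ (after relabelling summation indices), leaving the single surviving tangential term $-\Gamma^j_{jk}g^{ik}\partial_i\para$, which matches the claimed right-hand side.

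I expect the main obstacle to be purely bookkeeping rather than conceptual: getting the index contractions and relabellings in the cancellation step right, and in particular confirming that the sign convention in Gauss's formula ($\shopC_{ij}=\normal\,\partial_i\partial_j\para$) is compatible with the paper's definition $\shop=-\nablaC\normal$ so that $g^{ij}\shopC_{ij}=\Tr\shop=\meanc$ with the correct sign. Once those conventions are pinned down, the computation is routine and the two tangential terms collapse automatically by metric compatibility.
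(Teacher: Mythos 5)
Your proposal is correct and follows essentially the same route as the paper: decompose $\partial_i\partial_j\para$ via Gauss's formula into $\Gamma_{ij}^k\partial_k\para+\shopC_{ij}\normal$, contract with $g^{ij}$ to get $\meanc\normal$, and cancel the tangential terms using the formula $\partial_j g^{ij}=-(g^{ik}\Gamma_{jk}^j+g^{jk}\Gamma_{jk}^i)$. The only cosmetic difference is that you obtain this last identity by quoting metric compatibility of the Levi-Civita connection, whereas the paper derives it by differentiating $g^{jk}=g^{jl}g^{km}g_{lm}$ together with $\partial_i g_{jk}=\Gamma_{ijk}+\Gamma_{ikj}$ — the same fact in different packaging.
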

\begin{proof}
    Firstly, we notice that
    \begin{align}\label{eq:geo_qwerts01_tmp1}
        \partial_i g_{jk}
            &= \Gamma_{ijk} + \Gamma_{ikj}
    \end{align}
    holds as a consequence of the definition of Christoffel symbols.
    For the inverse metric we obtain
    \begin{align*}
        \partial_i g^{jk}
            &= \partial_i \left( g^{jl} g^{km} g_{lm} \right)
             = \delta_l^k \partial_i g^{jl} + \delta_m^j \partial_i g^{km} + g^{jl} g^{km} \partial_i g_{lm}
             = 2 \partial_i g^{jk} + g^{jl} g^{km} \partial_i g_{lm} \formPeriod
    \end{align*}
    Therefore \eqref{eq:geo_qwerts01_tmp1} yields
    \begin{align}\label{eq:geo_qwerts01_tmp2}
        \partial_i g^{jk}
            &= -\left( g^{jl}\Gamma_{il}^k +  g^{kl}\Gamma_{il}^j \right) \formPeriod
    \end{align}
    Since $ \Gamma_{ijk} = \inner{\tangentR}{\partial_i\partial_j\para,\partial_k\para} $ and $ \shopC_{ij} = \inner{\tangentR}{\partial_i\partial_j\para,\normal} $
    is valid, it holds
    \begin{align}\label{eq:geo_qwerts01_tmp3}
        \partial_i \partial_j \para
            &= \Gamma_{ij}^k \partial_k\para + \shopC_{ij}\normal \formPeriod
    \end{align}
    Finally, \eqref{eq:geo_qwerts01_tmp2} and \eqref{eq:geo_qwerts01_tmp3} yield
    \begin{align*}
         \partial_j\left( g^{ij}\partial_i\para \right)
            &=  g^{ij} \left(  \Gamma_{ij}^k \partial_k\para + \shopC_{ij}\normal \right)
                - \left( g^{ik}\Gamma_{jk}^j +  g^{jk}\Gamma_{jk}^i \right) \partial_i\para
             = \meanc\normal - \Gamma_{jk}^jg^{ik}\partial_i\para \formPeriod
    \end{align*}
    Alternatively, we could also use that $ \Delta\paraC^A = \meanc\normalC^A $, 
    where  $ \Delta\paraC^A = \partial_j ( g^{ij}\partial_i\paraC^A ) + \Gamma_{jk}^jg^{ik}\partial_i\paraC^A $ is the Laplace-Beltrami operator  
    on the scalar fields $ \paraC^A\in\tangentS[^0] $.
\end{proof}

\begin{lemma}\label{lem:scalar_veloder_of_timeD}
    For scalar fields $ f=f[\para]\in\tangentS[^0] $ with $ \veloder{\Vb}f =0 $, \ie\ $ f $ does not depend on the velocity $ \Vb = \partial_t\para $,
    holds
    \begin{align*}
        \veloder{\Wb} \dot{f} &= \gauge{\Wb}f
    \end{align*}
    for all $ \Wb\in\tangentR $.
\end{lemma}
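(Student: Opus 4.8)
The plan is to reduce the statement to the single identity $\dot f = \gauge{\Vb} f$ and then exploit linearity of the deformation derivative in its direction argument. Concretely, for a configuration-dependent scalar $f = f[\para]\in\tangentS[^0]$ with $\veloder{\Vb}f = 0$, the material rate is the derivative of $f$ along the material trajectory at a fixed Lagrangian label, $\dot f = \frac{\textup{d}}{\textup{d}\tau}\big\vert_{\tau=0} f[\para(t+\tau)]$. Using the Taylor expansion $\para(t+\tau) = \para + \tau\Vb + \landau(\tau^2)$ with $\Vb = \partial_t\para$, together with the fact that $f$ has no explicit dependence on $t$ nor on $\Vb$ (the latter being precisely the hypothesis $\veloder{\Vb}f = 0$), the definition \eqref{eq:gaugef} of the deformation derivative yields $\dot f = \gauge{\partial_t\para}f = \gauge{\Vb} f$.

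The second and final step is to differentiate this identity with respect to the velocity in direction $\Wb\in\tangentR$. Since $\gauge{\cdot}f$ is a directional derivative of the functional $f[\cdot]$, it is $\R$-linear in its direction, so $\gauge{\Vb+\eps\Wb}f = \gauge{\Vb}f + \eps\,\gauge{\Wb}f$. Inserting $\dot f = \gauge{\Vb}f$ into the definition \eqref{eq:veloderf} of the velocity derivative then gives $\veloder{\Wb}\dot f = \frac{\textup{d}}{\textup{d}\eps}\big\vert_{\eps=0}\gauge{\Vb+\eps\Wb}f = \gauge{\Wb}f$, which is the assertion.

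The only genuinely delicate point is the identity $\dot f = \gauge{\Vb}f$, i.e.\ that for a purely configuration-dependent scalar the material rate coincides with the deformation derivative along the material velocity; this hinges on interpreting ``following the material'' as holding the chart coordinate fixed and on $f$ carrying its entire time dependence through $\para$. I would expect this to be the main obstacle, as it is the place where the hypothesis $\veloder{\Vb}f=0$ is essential. One can cross-check the result against the observer formula \eqref{eq:dotf}, $\dot f = \partial_t f + \nabla_{\ub}f$ with $\ub = \Vb - \Vb_{\ofrak}$; there, however, one must additionally account for the shared normal velocity $\Vb\normal = \Vb_{\ofrak}\normal$, which forces the observer's normal velocity to co-vary with the perturbation $\Vb \mapsto \Vb + \eps\Wb$, and for the tangential identity $\gauge{\wb}f = \nabla_{\wb}f$. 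For that reason I would favour the direct trajectory argument above as the cleaner route.
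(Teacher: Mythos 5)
Your overall strategy (Taylor-expand $\para(t+\tau)$, identify the $\para$-induced part of $\dot f$ with $\gauge{\Vb}f$, then use linearity of the deformation derivative in its direction argument) follows the paper, and your second step is exactly the paper's final step. The gap is in the intermediate identity $\dot f = \gauge{\Vb}f$, which you obtain by additionally assuming that $f$ carries its \emph{entire} time dependence through $\para$. That assumption is not part of the hypothesis: $\veloder{\Vb}f=0$ only excludes dependence on the velocity, not an explicit temporal dependence of $f$ beyond the one mediated by $\para$. And it fails precisely where the lemma is put to work: Lemma \ref{lem:ttensor_veloder_of_timeD} applies this result to the Cartesian proxy components $f=R^{AB}$ of dynamic fields such as $\Qb$, which evolve in time as independent degrees of freedom, so that $\dot f\neq\gauge{\Vb}f$ there (indeed $\gauge{\Wb}\Qb$ is a priori undetermined and subject to a gauge choice, whereas $\dot Q^{AB}$ is a definite quantity).

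The paper's proof avoids this by establishing the chain-rule decomposition $\dot f[\para,\Vb] = \dot f[\para\vert_t] + \gauge{\Vb}f$, where $\dot f[\para\vert_t]$ is the residual temporal rate of $f$ with $\para$ frozen at time $t$. The point is not that this residual vanishes, but that it is independent of $\Vb$ --- since $\para$ is held fixed there is no velocity for it to depend on --- so it is annihilated by $\veloder{\Wb}$, leaving $\veloder{\Wb}\dot f = \veloder{\Wb}\gauge{\Vb}f = \gauge{\Wb}f$ by linearity. As written, your argument establishes the lemma only for purely configuration-dependent scalars (geometric quantities like $\meanc$), and needs the extra observation about the velocity-independence of the residual rate to cover the cases the paper actually relies on. You correctly flagged $\dot f=\gauge{\Vb}f$ as the delicate point, but resolving it by assumption rather than by the decomposition is where the proof falls short of the intended generality.
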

\begin{proof}
    A Taylor expansion for $ \para(t+\tau,y^1,y^2) $ at $ \tau=0 $ yields
    \begin{align*}
        f[\para](t+\tau,y^1,y^2)
            &= f[\para\vert_{t} + \tau(\Vb\vert_{t} + \landau(\tau))](t+\tau,y^1,y^2)
             = f[\para\vert_{t}](t+\tau,y^1,y^2) + \tau\gauge{\Vb\vert_{t}+\landau(\tau)} f[\para\vert_{t}](t+\tau,y^1,y^2)\formPeriod
    \end{align*}
    Therefore, the time derivative of $ f $ becomes
    \begin{align}\label{eq:scalar_dot_chain_rule}
        \dot{f}[\para,\Vb](t,y^1,y^2)
            &= \lim_{\tau\rightarrow 0} \frac{f[\para](t+\tau,y^1,y^2) - f[\para](t,y^1,y^2)}{\tau}
             = \dot{f}[\para\vert_{t}](t,y^1,y^2) +\gauge{\Vb}f[\para](t,y^1,y^2) \formComma
    \end{align}
    where we cancel out the restriction $ \vert_{t} $ in the second summand, since the term is temporally locally defined.
    This does not apply to the first summand.
    The term $ \dot{f}[\para\vert_{t}] $ is to be understood as the temporal rate of $ f $, which is not induced by $ \para $.
    Poorly written, \eqref{eq:scalar_dot_chain_rule} reads as the chain rule ``$ \frac{d}{dt}f(t,\para(t))=\partial_t f + (\partial_t \para)\cdot\partial_{\para} f $''
    for partial dependencies.
    Since the first term of \eqref{eq:scalar_dot_chain_rule} does not depend on $ \Vb $, due to the temporal fixation of $ \para $
    and linearity of the deformation derivative in its direction argument, we obtain the assertion.
\end{proof}

\begin{lemma}\label{lem:ttensor_veloder_of_timeD}
    For 2-tensor fields $ \Rb=\Rb[\para]\in\tangentS[^2] $ with $ \veloder{\Wb}\Rb = \nullb $ for all $ \Wb\in\tangentR $, 
    \ie\ $ \Rb $ does not depend on the velocity $ \Vb = \partial_t\para $,
    holds
    \begin{align*}
        \veloder{\Wb} \Dphi\Rb &= \gauge{\Wb}\Rb - \Phib[\Wb]\Rb - \Rb\Phib^T[\Wb]
    \end{align*}
    for all $ \Wb\in\tangentR $, where $ \Dphi\Rb = \Dmat\Rb - \Phib[\Vb]\Rb - \Rb\Phib^T[\Vb] $ is one of the time derivatives in \cite{NitschkeVoigt_2023}.
\end{lemma}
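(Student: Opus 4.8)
The plan is to reduce the claim to the already-established scalar statement in Lemma~\ref{lem:scalar_veloder_of_timeD} by exploiting the linearity of the velocity derivative $\veloder{\Wb}$ together with the defining decomposition $\Dphi\Rb = \Dmat\Rb - \Phib[\Vb]\Rb - \Rb\Phib^T[\Vb]$. Since $\veloder{\Wb}$ is linear, I would split its action into the material-derivative contribution $\veloder{\Wb}\Dmat\Rb$ and the two correction terms $-\veloder{\Wb}(\Phib[\Vb]\Rb)$ and $-\veloder{\Wb}(\Rb\Phib^T[\Vb])$, treating each separately.

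For the material part, the decisive observation is that $\Dmat$ acts componentwise on Cartesian proxy fields, i.e.\ $\Dmat\Rb = \dot{R}^{AB}\eb_A\otimes\eb_B$ in analogy with \eqref{eq:DmatV} and \eqref{eq:DmatQ}, so that each proxy $R^{AB}\in\tangentS[^0]$ carries an ordinary scalar material rate $\dot{R}^{AB}$. The hypothesis $\veloder{\Wb}\Rb=\nullb$ says precisely that $\veloder{\Wb}R^{AB}=0$ for every $A,B$, since by the convention following \eqref{eq:veloderf} the velocity derivative is applied to the proxy components. Hence each $R^{AB}$ satisfies the assumption of Lemma~\ref{lem:scalar_veloder_of_timeD}, yielding $\veloder{\Wb}\dot{R}^{AB}=\gauge{\Wb}R^{AB}$. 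Reassembling the proxies and invoking the definition \eqref{eq:gaugeR} of the tensorial deformation derivative gives $\veloder{\Wb}\Dmat\Rb = (\gauge{\Wb}R^{AB})\eb_A\otimes\eb_B = \gauge{\Wb}\Rb$.

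For the two correction terms I would apply the product rule for $\veloder{\Wb}$. Because $\Rb$ is velocity-independent by hypothesis, $\veloder{\Wb}(\Phib[\Vb]\Rb) = (\veloder{\Wb}\Phib[\Vb])\Rb$, and the analogous identity holds for the right factor. It then remains to use that $\Phib[\cdot]$ is linear in its velocity argument (as for $\Phib[\Vb]=\Abcal[\Vb]$ in the Jaumann setting), so that $\Phib[\Vb+\eps\Wb]=\Phib[\Vb]+\eps\Phib[\Wb]$ and consequently $\veloder{\Wb}\Phib[\Vb]=\Phib[\Wb]$, with the same conclusion for $\Phib^T$. Combining the three contributions produces $\veloder{\Wb}\Dphi\Rb = \gauge{\Wb}\Rb - \Phib[\Wb]\Rb - \Rb\Phib^T[\Wb]$, as asserted.

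The main obstacle is not computational but conceptual: one must justify that $\Dmat$ genuinely reduces to scalar material rates on the Cartesian proxies and that the velocity-derivative hypothesis transfers to those proxies, so that Lemma~\ref{lem:scalar_veloder_of_timeD} becomes applicable component by component. Once this reduction is secured, the only further ingredient is the linearity of $\Phib[\cdot]$ in the velocity, which is built into the class of time derivatives of \cite{NitschkeVoigt_2023}.
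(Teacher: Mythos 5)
Your proposal is correct and follows essentially the same route as the paper's own proof: reduce $\veloder{\Wb}\Dmat\Rb=\gauge{\Wb}\Rb$ to Lemma \ref{lem:scalar_veloder_of_timeD} applied to the Cartesian proxy components $\dot{R}^{AB}$, then dispose of the $\Phib$-terms via the product rule, the velocity-independence of $\Rb$, and the linearity of $\Phib[\cdot]$ in its velocity argument. The only difference is that you spell out the component-by-component transfer of the hypothesis and the product-rule step more explicitly than the paper does, which is harmless.
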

\begin{proof}
    With lemma \ref{lem:scalar_veloder_of_timeD} the material derivative yields
    \begin{align*}
        \veloder{\Wb} \Dmat\Rb
            &= ( \veloder{\Wb} \dot{R}^{AB} ) \eb_A\otimes\eb_B
             = ( \gauge{\Wb} R^{AB}) \eb_A\otimes\eb_B
             =  \gauge{\Wb} \Rb \formPeriod
    \end{align*}
    Since $ \Phib[\Vb] $  is linear in $ \Vb $, it holds
    \begin{align*}
        \veloder{\Wb} \Dphi\Rb
            &= \gauge{\Wb} \Rb -  \veloder{\Wb}\left( \Phib[\Vb]\Rb + \Rb\Phib^T[\Vb] \right)
             = \gauge{\Wb}\Rb - \Phib[\Wb]\Rb - \Rb\Phib^T[\Wb] \formPeriod
    \end{align*}
\end{proof}

\begin{lemma}\label{lem:ttensor_variational_independence_of_process_variable}
    For all 2-tensor fields $ \Rb\in\tangentR[^2] $ and time derivatives $ \Dt[\Phi],\Dt[\widetilde{\Phi}]:\tangentR[^2]\rightarrow\tangentR $
    given in \cite{NitschkeVoigt_2023}, 
    let the two functionals $ \functional[\Dt[\Phi]\Rb] $ and $  \widetilde{\functional}[\Dt[\widetilde{\Phi}]\Rb] $
    be semantically equal,
    \ie\ it is $ \functional[\Dt[\Phi]\Rb]= \widetilde{\functional}[\Dt[\widetilde{\Phi}]\Rb] $ valid.
    For all $ \Psib\in\tangentR[^2] $ holds
    \begin{align*}
        \innerH{\tangentR[^2]}{\frac{\delta\functional}{\delta\Dt[\Phi]\Rb} , \Psib}
            &= \innerH{\tangentR[^2]}{\frac{\delta\widetilde{\functional}}{\delta\Dt[\widetilde{\Phi}]\Rb} , \Psib} \formPeriod
    \end{align*}
    This means verbalized that the variation of a functional, \wrt\ to a process variable for $ \Rb $, does not depend on a certain choice of this process variable.
\end{lemma}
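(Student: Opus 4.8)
The plan is to reduce everything to the common structure shared by all admissible tensor rates. First I would recall from \cite{NitschkeVoigt_2023}, as already recorded in the proof of Lemma~\ref{lem:ttensor_veloder_of_timeD}, that each such time derivative has the shape $\Dt[\Phi]\Rb = \Dmat\Rb - \Phib[\Vb]\Rb - \Rb\Phib^T[\Vb]$, so that it coincides with the material rate up to a correction that is linear in $\Rb$ and in the velocity $\Vb$. Hence for two choices $\Phi$ and $\widetilde{\Phi}$,
\[
    \Dt[\Phi]\Rb - \Dt[\widetilde{\Phi}]\Rb = (\widetilde{\Phib}[\Vb]-\Phib[\Vb])\Rb + \Rb(\widetilde{\Phib}^T[\Vb]-\Phib^T[\Vb]) \formComma
\]
and the crucial feature of the right-hand side is that it depends only on the state $\Rb$ and the velocity $\Vb$, while being \emph{independent of the material rate $\Dmat\Rb$}.

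The key step I would then make concerns how the variation \eqref{eq:varRF} with respect to the process variable is realized by an actual family of motions. Perturbing the free material rate by $\Dmat\Rb \mapsto \Dmat\Rb + \eps\Psib$, while keeping $\Rb$ and $\Vb$ fixed, shifts $\Dt[\Phi]\Rb$ by exactly $\eps\Psib$; but because the difference displayed above is frozen under this perturbation, it shifts $\Dt[\widetilde{\Phi}]\Rb$ by the \emph{same} increment $\eps\Psib$. Thus a single one-parameter family simultaneously realizes $\Dt[\Phi]\Rb \mapsto \Dt[\Phi]\Rb + \eps\Psib$ and $\Dt[\widetilde{\Phi}]\Rb \mapsto \Dt[\widetilde{\Phi}]\Rb + \eps\Psib$, with $\Psib\in\tangentR[^2]$ entering both rates through the one common material-rate slot. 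This is what promotes the pointwise semantic equality to an equality in a neighborhood of the process-variable argument, which is precisely what the variation requires.

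Finally I would invoke the hypothesis of semantic equality $\functional[\Dt[\Phi]\Rb] = \widetilde{\functional}[\Dt[\widetilde{\Phi}]\Rb]$, read as an identity of the two functionals along every admissible motion, in particular along the perturbed family just constructed. Differentiating this identity at $\eps=0$ and applying the definition \eqref{eq:varRF} on each side yields
\begin{align*}
    \innerH{\tangentR[^2]}{\frac{\delta\functional}{\delta\Dt[\Phi]\Rb}, \Psib}
        &= \ddfrac{\eps}\Big\vert_{\eps=0}\functional[\Dt[\Phi]\Rb + \eps\Psib]
         = \ddfrac{\eps}\Big\vert_{\eps=0}\widetilde{\functional}[\Dt[\widetilde{\Phi}]\Rb + \eps\Psib]
         = \innerH{\tangentR[^2]}{\frac{\delta\widetilde{\functional}}{\delta\Dt[\widetilde{\Phi}]\Rb}, \Psib}\formComma
\end{align*}
which is the assertion. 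The hard part is not computational but conceptual: one must justify that the abstract perturbation $\Dt[\Phi]\Rb + \eps\Psib$ in \eqref{eq:varRF} may be identified with the concrete motion-perturbation $\Dmat\Rb + \eps\Psib$, and that this identification is simultaneously valid for both representations of the functional. This is exactly what the shared material-rate summand guarantees, since $\Dmat\Rb$ is the single slot through which $\Psib$ acts on either rate; closure of $\tangentR[^2]$ under all the rates in question, so that the perturbed rates stay admissible, makes the remaining bookkeeping immediate.
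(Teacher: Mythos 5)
Your proposal is correct and follows essentially the same route as the paper's proof: both arguments rest on the fact that any two admissible rates differ by a term of the form $(\Phib-\widetilde{\Phib})\Rb + \Rb(\Phib-\widetilde{\Phib})^T$ that depends only on $\Rb$ and $\Vb$ and is therefore frozen under the perturbation, so that shifting one rate by $\eps\Psib$ shifts the other by the same increment, after which the semantic equality is differentiated at $\eps=0$. The paper phrases this directly as a relation between $\Dt[\Phi]\Rb$ and $\Dt[\widetilde{\Phi}]\Rb$ rather than routing through the material rate $\Dmat\Rb$ as a common slot, but this is only a cosmetic difference.
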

\begin{proof}
Since all time derivatives on 2-tensor fields, which are considered in \cite{NitschkeVoigt_2023}, are relatable by 
$ \Dt[\widetilde{\Phi}]\Rb = \Dt[\Phi]\Rb + (\Phib-\widetilde{\Phib})\Rb + \Rb(\Phib-\widetilde{\Phib})^T $
for certain $ \Phib, \widetilde{\Phib} \in\tangentR$ depending on $ \Gbcal[\Vb] $, we calculate
\begin{align*}
    \innerH{\tangentR[^2]}{\frac{\delta\functional}{\delta\Dt[\Phi]\Rb} , \Psib} \hspace{-15pt}
        &= \innerH{\tangentR[^2]}{\frac{\delta\widetilde{\functional}}{\delta\Dt[\Phi]\Rb} , \Psib}\\
        &= \lim_{\eps\rightarrow 0} \frac{\widetilde{\functional}[\Dt[\Phi]\Rb + \eps\Psib +(\Phib-\widetilde{\Phib})\Rb + \Rb(\Phib-\widetilde{\Phib})^T]
                                        - 	\widetilde{\functional}[\Dt[\Phi]\Rb +(\Phib-\widetilde{\Phib})\Rb + \Rb(\Phib-\widetilde{\Phib})^T]}{\eps}\\
        &= \lim_{\eps\rightarrow 0} \frac{\widetilde{\functional}[\Dt[\widetilde{\Phi}]\Rb + \eps\Psib]- \widetilde{\functional}[\Dt[\widetilde{\Phi}]\Rb]}{\eps}
         = \innerH{\tangentR[^2]}{\frac{\delta\widetilde{\functional}}{\delta\Dt[\widetilde{\Phi}]\Rb} , \Psib} \formPeriod
\end{align*}
\end{proof}

\begin{lemma}\label{lem:ttensor_commutator_gauge_nablaS}
    The commutator of the deformation and componentwise derivative yields
    \begin{align*}
        \left[ \gauge{\Wb}, \nablaC \right]\Rb
            &:=  \gauge{\Wb} \nablaC \Rb - \nablaC \gauge{\Wb}\Rb
             = -(\nablaC\Rb)\Gbcal[\Wb]
             = -(\nablaC\Rb)\left( \nablaC\Wb - \normal(\nablaC\Wb)\otimes\normal \right)
    \end{align*}
    for all 2-tensor fields $ \Rb=\Rb[\para]\in\tangentS[^2] $ and vector fields $ \Wb\in\tangentR $.
\end{lemma}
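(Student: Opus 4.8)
The plan is to prove the commutator identity
\[
  \left[ \gauge{\Wb}, \nablaC \right]\Rb = -(\nablaC\Rb)\Gbcal[\Wb]
\]
by reducing everything to Cartesian proxy components, where the componentwise derivative $\nablaC$ acts, by definition, as the tangential covariant (Levi-Civita) derivative on each scalar component $R^{A_1\ldots A_n}\in\tangentS[^0]$, and the deformation derivative $\gauge{\Wb}$ acts componentwise through \eqref{eq:gaugeR}. In this frame the two operators nearly commute; what obstructs exact commutation is that $\nablaC$ secretly carries the surface metric and Christoffel data, and these geometric objects do depend on $\para$, hence are moved by $\gauge{\Wb}$. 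So the heart of the matter is the deformation derivative of the surface differentiation itself.

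First I would write, for a scalar field $f\in\tangentS[^0]$, the identity $\nablaC f = \nabla f = (\partial_i f)\,g^{ij}\partial_j\para$ (viewing $\nablaC f$ as a tangential vector). Applying $\gauge{\Wb}$ and commuting it past the plain partial derivative $\partial_i$ (which it commutes with, since $\partial_i$ and $\gauge{}$ act on disjoint arguments, $y^i$ versus $\eps$), the only surviving terms come from $\gauge{\Wb}(g^{ij}\partial_j\para)$. The required ingredient is the deformation derivative of the frame and metric, which is governed by $\gauge{\Wb}\partial_j\para = \partial_j\Wb$ and the resulting change in $g^{ij}$; these are exactly the data encoded in the surface deformation gradient $\Gbcal[\Wb] = \nablaC\Wb - \normal\nablaC\Wb\otimes\normal$ from \eqref{eq:Gbcal}. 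I expect that assembling $\gauge{\Wb}(g^{ij}\partial_j\para)$ produces precisely a right-multiplication of $\nablaC f$ by $-\Gbcal[\Wb]$, giving the scalar case $[\gauge{\Wb},\nablaC]f = -(\nablaC f)\Gbcal[\Wb]$.

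Next I would promote the scalar result to tensor fields. Writing $\Rb = R^{A_1\ldots A_n}\bigotimes \eb_{A_\alpha}$ with the Cartesian basis vectors $\eb_{A_\alpha}$ constant (hence annihilated by both $\gauge{\Wb}$ and $\nablaC$ as $\R^3$-quantities), the componentwise derivative $\nablaC\Rb$ differentiates only the scalar proxies $R^{A_1\ldots A_n}$. Thus the whole commutator collapses to the scalar commutator applied slot-by-slot to each proxy, and the same $-(\nablaC\Rb)\Gbcal[\Wb]$ structure emerges, with the last tangential slot of $\nablaC\Rb$ contracted against $\Gbcal[\Wb]$. The final rewriting $\Gbcal[\Wb] = \nablaC\Wb - \normal(\nablaC\Wb)\otimes\normal$ is then immediate from \eqref{eq:Gbcal}, yielding the stated second form. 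The statement $\Rb\in\tangentS[^2]$ (tangential) is used only to control which slots survive the contraction, but because the proof runs through Cartesian components it never needs tangential projections explicitly.

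The main obstacle will be the careful bookkeeping of $\gauge{\Wb}g^{ij}$ and the verification that the metric and frame variations recombine into exactly $\Gbcal[\Wb]$ rather than its symmetric part $\Sb[\Wb]$ or the full $\nablaC\Wb$; in particular one must track where the normal-direction correction $-\normal\nablaC\Wb\otimes\normal$ comes from. I would handle this by using the identities for $\gauge{\Wb}\normal$ \eqref{eq:gauge_normal} and $\gauge{\Wb}\IdS$ cited earlier, together with the fact that $\gauge{\Wb}$ is metric-compatible with the $\R^3$-metric ($\gauge{\Wb}\Id=0$). Once $\gauge{\Wb}(g^{ij}\partial_j\para)$ is shown to equal $-g^{ik}(\partial_k\para)\,(\partial_j\para)\Gbcal[\Wb]$ type terms, the contraction structure forces the clean right-multiplication by $\Gbcal[\Wb]$, and the proof closes.
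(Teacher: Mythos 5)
Your proposal is correct and takes essentially the same route as the paper: both work in the mixed Cartesian--tangential proxy, commute $\gauge{\Wb}$ past the partial derivatives of the scalar components, and isolate the contributions of $\gauge{\Wb}g^{ij}=-(G^{ij}[\Wb]+G^{ji}[\Wb])$ and $\gauge{\Wb}\partial_i\para=\partial_i\Wb$, which recombine into $-(\nablaC\Rb)\Gbcal[\Wb]$ using that the last slot of $\nablaC\Rb$ is tangential. Your packaging as a scalar case promoted slot-by-slot is only a cosmetic reorganization (the Cartesian basis is constant), and the auxiliary identity actually needed is the deformation derivative of the inverse metric together with the splitting of $\partial_i\Wb$ into its tangential and normal parts, rather than $\gauge{\Wb}\normal$ or $\gauge{\Wb}\IdS$.
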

\begin{proof}
    The pure Cartesian proxy of
    $ \nablaC\Rb = g^{ij}(\partial_j R^{AB}) \eb_A\otimes\eb_B\otimes\partial_i \para  $
    yields
    \begin{align*}
        [\gauge{\Wb} \nablaC \Rb[\para]]^{ABC}
            &= \gauge{\Wb}\left( g^{ij}[\para] \partial_j R^{AB}[\para] \partial_i \paraC^{C} \right) \\
            &= g^{ij}(\partial_j \gauge{\Wb}R^{AB}[\para]) \partial_i \paraC^C
                + (\gauge{\Wb}g^{ij}[\para] ) (\partial_j R^{AB}) \partial_i \paraC^C
                +  g^{ij}(\partial_j R^{AB}) \partial_i W^{C} \formComma
    \end{align*}
    since $ \gauge{\Wb}\eb_A=0 $, $  \gauge{\Wb}\paraC^A= W^{A}  $ and $ \left[ \gauge{\Wb}, \partial_i \right]\vert_{\tangentS[^0]} = 0 $.
    The first summand is just the proxy of $ \nablaC \gauge{\Wb}\Rb $.
    In the second summand, the deformation derivative of the inverse metric tensor proxy results in 
    $ \gauge{\Wb}g^{ij}[\para] = -(G^{ij}[\Wb] + G^{ji}[\Wb]) $, see \cite{NitschkeSadikVoigt_A_2022}.
    The last summand comprises the componentwise derivative of $ \Wb $, 
    which gives $ \partial_i W^{C} = \tensor{[\nablaC\Wb]}{^C_i} = \tensor{G}{^{C}_i}[\Wb] + \normalC^C b_{i}[\Wb] $ with $\bb[\Wb]=\normal\nablaC\Wb\in\tangentS$.
    Eventually, we obtain
    \begin{align*}
        \gauge{\Wb} \nablaC \Rb[\para]
            &= \nablaC \gauge{\Wb}\Rb[\para]
                - \nablaC\Rb\left( \Gb[\Wb] +  \Gb^T[\Wb] \right)
                + \nablaC\Rb\left( \Gb^T[\Wb] + \bb[\Wb]\otimes\normal \right)\\
            &= \nablaC \gauge{\Wb}\Rb[\para] - \nablaC\Rb\left( \Gb[\Wb] - \bb[\Wb]\otimes\normal \right) \formComma
    \end{align*}
    which in turn gives the assertion, since $ \Gbcal[\Wb] = \Gb[\Wb] + \normal\otimes\bb[\Wb] - \bb[\Wb]\otimes\normal $ \eqref{eq:Gbcal} and $ (\nablaC\Rb)\normal = 0 $.
\end{proof}

\begin{lemma}\label{lem:ttensor_weak_skewsymmetric_deformation}
    For all deformation directions $ \Wb=\wb + \wnor\normal\in\tangentR $, $ \wb\in\tangentS $, $ \wnor\in\tangentS[^0] $ and the skew-symmetric deformation gradient $ \Abcal[\Wb] \in \tangentAR $ \eqref{eq:Abcal} holds
    \begin{align*}
        \forall \Rb\in\tangentR[^2] &:
            &\inner{\tangentR[^2]}{\Rb, \Abcal[\Wb]}
                &= \inner{\tangentR\otimes\tangentS}{\left( \Id + \normal\otimes\normal \right) (\proj_{\tangentAR}\Rb) \IdS , \nablaC\Wb } \formComma\\
        \forall \tilde{\Rb} \in \tangentAR &:
            &\inner{\tangentAR}{\tilde{\Rb}, \Abcal[\Wb]}
                &= -\left( \phi\rot\wb + 2\nabla_{\etab}\wnor + 2\shop(\etab, \wb)\right) \formComma
    \end{align*}
    where $ \tilde{\Rb} = \phi\Eb + \etab\otimes\normal - \normal\otimes\etab $, $ \phi\in\tangentS[^0] $ and $ \etab\in\tangentS $. 
\end{lemma}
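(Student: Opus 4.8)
The plan is to exploit the skew-symmetry of $\Abcal[\Wb]$ to collapse both assertions into a single bilinear computation against a general element of $\tangentAR$, and then to evaluate the double contractions componentwise in the tangential-plus-normal splitting. Since $\Abcal[\Wb]\in\tangentAR$ is skew-symmetric, the contraction $\inner{\tangentR[^2]}{\Rb, \Abcal[\Wb]}$ annihilates the symmetric part of $\Rb$, so $\inner{\tangentR[^2]}{\Rb, \Abcal[\Wb]} = \inner{\tangentAR}{\proj_{\tangentAR}\Rb, \Abcal[\Wb]}$. This places the left-hand side of the first identity in exactly the setting of the second, with $\tilde{\Rb} := \proj_{\tangentAR}\Rb$. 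Every element of $\tangentAR$ has the unique decomposition $\tilde{\Rb} = \phi\Eb + \etab\otimes\normal - \normal\otimes\etab$, reflecting the splitting of a three-dimensional skew tensor into the single tangential-tangential mode carried by $\Eb$ and the mixed modes $\etab\otimes\normal - \normal\otimes\etab$; I may therefore assume this form throughout.

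Next I would prove the second identity by substituting the explicit decomposition $\Abcal[\Wb] = \Ab[\Wb] + \normal\otimes\bb[\Wb] - \bb[\Wb]\otimes\normal$ from \eqref{eq:Abcal}, with $\Ab[\Wb] = -\tfrac{\rot\wb}{2}\Eb$ from \eqref{eq:Ab} and $\bb[\Wb] = \nabla\wnor + \shop\wb$, and contracting it term by term against $\tilde{\Rb}$. The orthogonality $\normal\bot\tangentS$ kills all the mixed tangential-normal cross terms; together with $\Eb\dbdot\shop = 0$ (skew against symmetric), $\Eb\dbdot\nabla\wb = -\rot\wb$ from the definition of $\rot$, and $\Eb\dbdot\Eb = 2$, only the pairings $\phi\Eb$ against $\Ab[\Wb]$, $\etab\otimes\normal$ against $-\bb[\Wb]\otimes\normal$, and $-\normal\otimes\etab$ against $\normal\otimes\bb[\Wb]$ survive. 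These contribute $-\phi\rot\wb$ and $-2\etab\bb[\Wb]$ respectively. Rewriting $\etab\bb[\Wb] = \nabla_{\etab}\wnor + \shop(\etab,\wb)$ then gives precisely $-\left(\phi\rot\wb + 2\nabla_{\etab}\wnor + 2\shop(\etab,\wb)\right)$, which is the second identity.

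For the first identity I would evaluate its right-hand side directly and check that it reproduces the same value. Using $\normal\IdS = \nullb$ and $\normal\Eb = \nullb$, the argument simplifies to $(\Id + \normal\otimes\normal)\tilde{\Rb}\IdS = \phi\Eb - 2\normal\otimes\etab$, which is a genuine element of $\tangentR\otimes\tangentS$ so that the contraction on the right is well posed. Contracting this against the splitting $\nablaC\Wb = \Gb[\Wb] + \normal\otimes\bb[\Wb]$ (the relation recorded just below \eqref{eq:Gbcal}), and again discarding cross terms by $\normal\bot\tangentS$ and $\Gb[\Wb]\in\tangentS[^2]$, leaves only $\phi\Eb\dbdot\Gb[\Wb] = -\phi\rot\wb$ and $-2(\normal\otimes\etab)\dbdot(\normal\otimes\bb[\Wb]) = -2\etab\bb[\Wb]$. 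This equals the value already obtained for the left-hand side, establishing the first identity.

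The individual steps are elementary, so the main obstacle is purely organizational: correctly enumerating the many mixed tangential-normal contractions and tracking which slot is tangential versus normal in each of the building blocks $\phi\Eb$, $\etab\otimes\normal$, $\normal\otimes\bb[\Wb]$ and $\bb[\Wb]\otimes\normal$, while keeping the sign and factor conventions $\rot\wb = -\Eb\dbdot\nabla\wb$ and $\Eb\dbdot\Eb = 2$ consistent, so that the surviving terms on the two sides of the first identity assemble into the identical expression.
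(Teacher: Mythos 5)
Your proposal is correct and follows essentially the same route as the paper's proof: both reduce to the skew-symmetric part $\tilde{\Rb}=\proj_{\tangentAR}\Rb$, split $\Abcal[\Wb]$ into its tangential part $\Ab[\Wb]$ and the mixed normal part, use the unique representation $\tilde{\Rb}=\phi\Eb+\etab\otimes\normal-\normal\otimes\etab$ together with $(\Id+\normal\otimes\normal)\tilde{\Rb}\IdS=\phi\Eb-2\normal\otimes\etab$, and contract against $\nablaC\Wb=\Gb[\Wb]+\normal\otimes(\nabla\wnor+\shop\wb)$. The only (harmless) difference is organizational: the paper establishes the first identity abstractly via projection/adjointness manipulations before specializing to the explicit $(\phi,\etab)$ form, whereas you evaluate both sides explicitly and match them.
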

\begin{proof}
    Let $\tilde{\Rb}:=\proj_{\tangentAR}\Rb$. 
    For the tangential skew-symmetric surface deformation  gradient $ \Ab[\Wb] = \proj_{\tangentAS}\nablaC\Wb \in\tangentAS $ \eqref{eq:Ab}, this yields
    \begin{align*}
        \inner{\tangentR[^2]}{\Rb, \Ab[\Wb]}
            &= \inner{\tangentAS}{\projS[^2]\tilde{\Rb}, \Ab[\Wb]}
             = \inner{\tangentR\otimes\tangentS}{\projS[^2]\tilde{\Rb}, \nablaC\Wb} \formPeriod
    \end{align*}
    For the remaining part $ \Abcal[\Wb]-\Ab[\Wb] = 2\proj_{\tangentAR}(\normal\otimes\normal\nablaC\Wb) $, we get
    \begin{align*}
        2\inner{\tangentR[^2]}{\Rb, \proj_{\tangentAR}(\normal\otimes\normal\nablaC\Wb)}
            &= 2\inner{\tangentAR}{\tilde{\Rb}\IdS, \normal\otimes\normal\nablaC\Wb}
             = 2\inner{\tangentR\otimes\tangentS}{\normal\otimes\normal\tilde{\Rb}\IdS, \nablaC\Wb} \formPeriod
    \end{align*}
    It holds $ \projS[^2]\tilde{\Rb} + \normal\otimes\normal\tilde{\Rb}\IdS = \tilde{\Rb}\IdS $, which gives the first assertion.
    Moreover, substituting $ \nablaC\Wb = \nabla\wb - \wnor\shop + \normal\otimes(\nabla\wnor+\shop\wb) $ and the uniquely given decomposition
    $ \tilde{\Rb} = \phi\Eb + \etab\otimes\normal - \normal\otimes\etab $ into this yields
    \begin{align*}
        \inner{\tangentAR}{\tilde{\Rb}, \Abcal[\Wb]}
            &=  \inner{\tangentR\otimes\tangentS}{ \phi\Eb - 2\normal\otimes\etab, \nabla\wb - \wnor\shop + \normal\otimes(\nabla\wnor+\shop\wb) }\\
            &= \phi\inner{\tangentS[^2]}{\Eb,\nabla\wb} - 2 \inner{\tangentS}{\etab, \nabla\wnor+\shop\wb} \formPeriod
    \end{align*}
    The curl $  \rot{\wb} = -\inner{\tangentS[^2]}{\Eb,\nabla\wb} \in\tangentS[^0] $ gives the second assertion.
\end{proof}

\begin{corollary}\label{cor:jaumann_force_weak}
    Assuming the Jaumann gauge of surface independence for all $ \Qb\in\tangentQR $, \ie\ it is $ \gauge{\Wb}\Qb - \Abcal[\Wb]\Qb + \Qb\Abcal[\Wb] = 0 $ valid 
    for all deformation directions $ \Wb\in\tangentR $, it holds
    \begin{align*}
        \inner{\tangentR[^2]}{\Rb,  \gauge{\Wb}\Qb }
                &= \inner{\tangentR\otimes\tangentS}{\left( \Id + \normal\otimes\normal \right) ((\proj_{\tangentQR}\Rb)\Qb - \Qb(\proj_{\tangentQR}\Rb)) \IdS , \nablaC\Wb } 
    \end{align*}
    for all $ \Rb\in\tangentR[^2] $.
\end{corollary}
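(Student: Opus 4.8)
The plan is to reduce Corollary~\ref{cor:jaumann_force_weak} to the already-established first identity of Lemma~\ref{lem:ttensor_weak_skewsymmetric_deformation}. Under the Jaumann gauge $\gauge{\Wb}\Qb = \Abcal[\Wb]\Qb - \Qb\Abcal[\Wb]$, so for any $\Rb\in\tangentR[^2]$ I would start by substituting this directly:
\begin{align*}
    \inner{\tangentR[^2]}{\Rb, \gauge{\Wb}\Qb}
        &= \inner{\tangentR[^2]}{\Rb, \Abcal[\Wb]\Qb - \Qb\Abcal[\Wb]} \formPeriod
\end{align*}
The next step is to shuffle the factors $\Qb$ off the $\Abcal[\Wb]$ slot and onto $\Rb$ by the cyclic/transpose properties of the double contraction. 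Using $\Rb\dbdot(\Abcal[\Wb]\Qb) = (\Qb\Rb)\dbdot\Abcal[\Wb]$ (contracting the shared index pair appropriately) and $\Rb\dbdot(\Qb\Abcal[\Wb]) = (\Rb\Qb)\dbdot\Abcal[\Wb]$, the right-hand side becomes $\inner{\tangentAR}{\Rb\Qb - \Qb\Rb,\ \Abcal[\Wb]}$, where the argument is automatically skew-symmetric because $\Abcal[\Wb]$ is skew and $\Qb$ is symmetric. This is exactly the form to which the first identity of Lemma~\ref{lem:ttensor_weak_skewsymmetric_deformation} applies, with the role of $\Rb$ there played by $\Qb\Rb - \Rb\Qb$ (up to sign).

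Applying that lemma, only the skew-symmetric part $\proj_{\tangentAR}(\Rb\Qb - \Qb\Rb)$ survives, and it gets sandwiched as $(\Id + \normal\otimes\normal)\proj_{\tangentAR}(\Rb\Qb - \Qb\Rb)\IdS$ against $\nablaC\Wb$. The remaining task is to rewrite $\proj_{\tangentAR}(\Rb\Qb - \Qb\Rb)$ in terms of $\proj_{\tangentQR}\Rb$. Here I would use that $\Qb\in\tangentQR$ is symmetric and trace-free, so the commutator $\Rb\Qb - \Qb\Rb$ is sensitive only to the symmetric part of $\Rb$; and since commuting with a symmetric $\Qb$ produces a skew tensor, the relevant piece of $\Rb$ reduces further to its $\tangentQR$-component, giving $(\proj_{\tangentQR}\Rb)\Qb - \Qb(\proj_{\tangentQR}\Rb)$. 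The isotropic part $f\Id$ of the symmetric part of $\Rb$ commutes with $\Qb$ and drops out, and the skew part of $\Rb$ contributes an anticommutator that, combined with the antisymmetrizing projection $\proj_{\tangentAR}$, also vanishes against the symmetric $\Qb$-structure after the sandwich; these are the facts I would verify carefully.

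The main obstacle is the bookkeeping in that final reduction: showing that $\proj_{\tangentAR}(\Rb\Qb-\Qb\Rb)$ collapses cleanly to the commutator of $\proj_{\tangentQR}\Rb$ with $\Qb$, without stray contributions from the isotropic, skew, or non-tangential parts of $\Rb$. The subtlety is that $\Qb$ is a full $\R^3$-valued Q-tensor (not merely tangential), so its $\normal$-components interact with the $(\Id+\normal\otimes\normal)(\,\cdot\,)\IdS$ sandwich; I would need to check that the sandwich together with $\proj_{\tangentQR}$ exactly reproduces the claimed right-hand side rather than leaving residual normal-direction terms. Once the commutator identity $\proj_{\tangentAR}(\Rb\Qb-\Qb\Rb) = \proj_{\tangentAR}((\proj_{\tangentQR}\Rb)\Qb-\Qb(\proj_{\tangentQR}\Rb))$ is in hand, the statement follows immediately from Lemma~\ref{lem:ttensor_weak_skewsymmetric_deformation}.
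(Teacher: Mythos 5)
Your proposal is correct and is essentially the paper's own argument: the paper's proof is the one-liner that $\inner{\tangentR[^2]}{\Rb,\gauge{\Wb}\Qb}=\inner{\tangentQR}{\proj_{\tangentQR}\Rb,\gauge{\Wb}\Qb}$ (since the Jaumann gauge makes $\gauge{\Wb}\Qb=\Abcal[\Wb]\Qb-\Qb\Abcal[\Wb]$ symmetric and trace-free) followed by moving the commutator onto $\Rb$ and invoking Lemma~\ref{lem:ttensor_weak_skewsymmetric_deformation}, which is exactly your reduction with the projection step deferred to the end. Note only that your two intermediate contraction identities are swapped (correctly, $\Rb\dbdot(\Abcal[\Wb]\Qb)=(\Rb\Qb)\dbdot\Abcal[\Wb]$ and $\Rb\dbdot(\Qb\Abcal[\Wb])=(\Qb\Rb)\dbdot\Abcal[\Wb]$), but the signs cancel in the difference, and the final bookkeeping you flag does go through: the isotropic part of $\Rb$ commutes with $\Qb$, and the skew part of $\Rb$ yields a \emph{symmetric} commutator that is annihilated by $\proj_{\tangentAR}$.
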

\begin{proof}
    Follows from $ \inner{\tangentR[^2]}{\Rb,  \gauge{\Wb}\Qb } = \inner{\tangentQR}{ \proj_{\tangentQR} \Rb,  \gauge{\Wb}\Qb } $  and lemma \ref{lem:ttensor_weak_skewsymmetric_deformation}.
\end{proof}

\begin{lemma}\label{lem:ttensor_weak_deformation}
    For all  2-tensor fields $ \Rb\in\tangentR[^2] $ and deformation directions $ \Wb\in\tangentR $, with deformation gradient $ \Gbcal[\Wb] \in \tangentR[^2] $ \eqref{eq:Gbcal}, holds
    \begin{align*}
        \innerH{\tangentR[^2]}{\Rb,\Gbcal[\Wb]}
            &=\innerH{\tangentR\otimes\tangentS}{ \Rb\IdS - \normal\otimes\IdS\Rb\normal , \nablaC\Wb}\\
            %&\rednote{=  \innerH{\tangentR\otimes\tangentS}{\projS[^2]\Rb + 2\normal\otimes\normal(\proj_{\tangentAR}\Rb)\IdS , \nablaC\Wb}}\\
            &= \innerH{\tangentR\otimes\tangentS}{\projS[^2]\Rb + \normal\otimes\left( \normal\Rb\IdS - \IdS\Rb\normal\right) , \nablaC\Wb}\formPeriod
    \end{align*}
\end{lemma}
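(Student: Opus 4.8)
The plan is to unwind the definition $\Gbcal[\Wb] = \nablaC\Wb - (\normal\nablaC\Wb)\otimes\normal \in \tangentR[^2]$ and to exploit the one structural fact that makes everything collapse: the rear index of $\nablaC\Wb$ is tangential, \ie\ $\nablaC\Wb = (\nablaC\Wb)\IdS \in \tangentR\otimes\tangentS$. Splitting the inner product by linearity, I would treat the two summands $\innerH{\tangentR[^2]}{\Rb, \nablaC\Wb}$ and $\innerH{\tangentR[^2]}{\Rb, (\normal\nablaC\Wb)\otimes\normal}$ separately.

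For the first summand, since the contracted rear slot of $\nablaC\Wb$ is tangential, the tangential projector can be transferred onto $\Rb$, giving $\innerH{\tangentR[^2]}{\Rb, \nablaC\Wb} = \innerH{\tangentR\otimes\tangentS}{\Rb\IdS, \nablaC\Wb}$. For the second summand, writing $\bb = \normal\nablaC\Wb \in \tangentS$, a direct frame computation gives $\Rb\dbdot(\bb\otimes\normal) = \bb(\IdS\Rb\normal)$, and reading this same scalar back as a contraction against $\nablaC\Wb$ yields $\innerH{\tangentR[^2]}{\Rb, \bb\otimes\normal} = \innerH{\tangentR\otimes\tangentS}{\normal\otimes(\IdS\Rb\normal), \nablaC\Wb}$, where I use $(\normal\otimes\cb)\dbdot\nablaC\Wb = \cb(\normal\nablaC\Wb)$ for tangential $\cb$. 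Subtracting the two identities delivers the first claimed form $\Rb\IdS - \normal\otimes\IdS\Rb\normal$.

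The second identity is then purely algebraic: I would decompose the front slot of $\Rb\IdS$ by $\Id = \IdS + \normal\otimes\normal$, so that $\Rb\IdS = \projS[^2]\Rb + \normal\otimes(\normal\Rb\IdS)$, and substitute this into $\Rb\IdS - \normal\otimes\IdS\Rb\normal$ to collect the two normal-tensor terms into $\normal\otimes(\normal\Rb\IdS - \IdS\Rb\normal)$.

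I expect the only delicate point to be the consistent bookkeeping of the frame contractions, in particular keeping track of which index of $\Rb$ is contracted with $\normal$ versus with $\IdS$ when moving the factor $\otimes\normal$ from one argument of the inner product to the other, and making sure the tangentiality of $\bb$ and of $\IdS\Rb\normal$ is used so that no normal components are silently dropped. Once these contractions are set up in a Cartesian proxy (or equivalently by repeated use of $\Id = \IdS + \normal\otimes\normal$), both identities follow without further input.
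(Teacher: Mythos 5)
Your proposal is correct and follows essentially the same route as the paper: unwind $\Gbcal[\Wb]=\nablaC\Wb-(\normal\nablaC\Wb)\otimes\normal$, use the tangentiality of the rear slot of $\nablaC\Wb$ to move the projectors and the factor $\otimes\normal$ across the inner product, and then rearrange algebraically. The only (harmless) difference is in the final step: you split the front slot of $\Rb\IdS$ via $\Id=\IdS+\normal\otimes\normal$ and land directly on the stated form, whereas the paper decomposes $\Rb$ into its symmetric and skew-symmetric parts and arrives at the equivalent expression $\IdS\Rb\IdS+2\,\normal\otimes\normal(\proj_{\tangentAR}\Rb)\IdS$.
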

\begin{proof}
    The deformation gradient \eqref{eq:Gbcal} yields
    \begin{align*}
        \innerH{\tangentR[^2]}{\Rb,\Gbcal[\Wb]}
            &= \innerH{\tangentR[^2]}{\Rb , \nablaC\Wb - \normal\nablaC\Wb\otimes\normal}
             =  \innerH{\tangentR\otimes\tangentS}{ \Rb\IdS - \normal\otimes\IdS\Rb\normal , \nablaC\Wb}\formPeriod
    \end{align*}
    Using the orthogonal decomposition $ \tangentR[^2]=\tangentSymR\oplus\tangentAR $ results in
    \begin{align*}
        \Rb\IdS - \normal\otimes\IdS\Rb\normal
            &= \Rb\IdS - \normal\otimes\IdS\left( \proj_{\tangentSymR}\Rb + \proj_{\tangentAR}\Rb \right)\normal\\
            &= \Rb\IdS - \normal\otimes\normal\left( \proj_{\tangentSymR}\Rb - \proj_{\tangentAR}\Rb \right)\IdS\\
            &= \left( \Id-\normal\otimes\normal \right)\Rb\IdS + 2 \normal\otimes\normal(\proj_{\tangentAR}\Rb)\IdS \formComma
    \end{align*}
    which gives with $ \IdS=\Id-\normal\otimes\normal $ the assertion.
\end{proof}

\begin{lemma}\label{lem:qtensorpart_of_tangentialtensor_is_conforming}
    The Q-tensor part of all tangential 2-tensor fields $ \rb\in\tangentS[^2] $ is surface conforming and yields
    \begin{align*}
        \proj_{\tangentQR}\rb 
            &= \CQdepl\left(\projQS\rb,-\frac{\Tr\rb}{3}\right)
             = \projQS\rb - \frac{\Tr\rb}{3}\left( \normal\otimes\normal - \frac{1}{2}\IdS \right)
             \in\tangentCQR \formComma %Mprove
    \end{align*}
    where $ \CQdepl: \tangentQS \times \tangentS[^0] \rightarrow \tangentCQR $ is given in \eqref{eq:surface_conforming_ansatz}.
\end{lemma}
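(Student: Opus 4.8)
The plan is to compute $\proj_{\tangentQR}\rb$ directly from the orthogonal decomposition $\tangentR[^2] = \tangentAR \oplus \tangentQR \oplus \tangentIdR$, which identifies $\proj_{\tangentQR}$ as ``symmetrize and remove the full trace'', i.e.\ $\proj_{\tangentQR}\Rb = \frac{1}{2}(\Rb+\Rb^T) - \frac{\Tr\Rb}{3}\Id$ for every $\Rb\in\tangentR[^2]$. Since the skew-symmetric part of $\rb$ lies in $\tangentAR\bot\tangentQR$, it is annihilated; equivalently all symmetrizations below are absorbed into the orthogonal projection $\projQS=\proj_{\tangentQS}$ onto tangential Q-tensors, for which $\projQS\rb = \frac{1}{2}(\rb+\rb^T) - \frac{\Tr\rb}{2}\IdS$.

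The key algebraic step is to split the ambient identity as $\Id = \IdS + \normal\otimes\normal$ and to exploit tangentiality of $\rb$. First I would observe that because $\normal\rb = \rb\normal = \nullb$, the $\R^3$-trace reduces to the tangential trace, $\Tr\rb = \rb\dbdot\Id = \rb\dbdot\IdS$, so no separate normal trace contribution survives. Next I would record the elementary rearrangement $\frac{1}{2}(\rb+\rb^T) = \projQS\rb + \frac{\Tr\rb}{2}\IdS$, which is immediate from the definition of $\projQS$ in \eqref{eq:Qtensorforce_decomposition}.

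Substituting these two facts into the projection formula gives
\begin{align*}
    \proj_{\tangentQR}\rb
        &= \projQS\rb + \frac{\Tr\rb}{2}\IdS - \frac{\Tr\rb}{3}\left( \IdS + \normal\otimes\normal \right)
         = \projQS\rb + \frac{\Tr\rb}{6}\IdS - \frac{\Tr\rb}{3}\normal\otimes\normal \formComma
\end{align*}
and collecting the isotropic and normal contributions yields exactly $\projQS\rb - \frac{\Tr\rb}{3}\left(\normal\otimes\normal - \frac{1}{2}\IdS\right) = \CQdepl\!\left(\projQS\rb, -\frac{\Tr\rb}{3}\right)$ by \eqref{eq:surface_conforming_ansatz}. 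To confirm the result is surface conforming, I would invoke that $\Qdepl(\qb,\etab,\beta)\in\tangentCQR$ precisely when $\etab=\nullb$ (Sec.~\ref{sec:surface_conforming_constrain}); since the computed tensor carries no mixed tangential--normal summand, it lies in $\tangentCQR$. Alternatively one checks the eigenvalue criterion \eqref{eq:surface_conforming_space} directly: applying the tensor to $\normal$ gives $-\frac{\Tr\rb}{3}\normal$, so $\lambda = -\frac{\Tr\rb}{3}$ works.

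The computation is entirely routine; the only point demanding care is the bookkeeping of the symmetrization convention for possibly non-symmetric $\rb$ together with the reduction of the $\R^3$-trace to the tangential trace. Once those are settled, the identity falls out by comparing the coefficients of $\IdS$ and $\normal\otimes\normal$, so I expect no genuine obstacle beyond this conceptual clarification.
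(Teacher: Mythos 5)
Your proposal is correct and follows essentially the same route as the paper's proof: both arguments reduce to the explicit formulas $\proj_{\tangentQR}\Rb = \frac{1}{2}(\Rb+\Rb^T)-\frac{\Tr\Rb}{3}\Id$ and $\projQS\rb = \frac{1}{2}(\rb+\rb^T)-\frac{\Tr\rb}{2}\IdS$ together with the splitting $\Id=\IdS+\normal\otimes\normal$, the only difference being that you expand $\proj_{\tangentQR}\rb$ and regroup while the paper expands $\CQdepl(\projQS\rb,-\frac{\Tr\rb}{3})$ and collapses it. Your extra verification of surface conformity is harmless but redundant, since $\CQdepl$ maps into $\tangentCQR$ by construction.
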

\begin{proof}
    Since $ \IdS+\normal\otimes\normal = \Id $, we calculate
    \begin{align*}
        \CQdepl\left(\projQS\rb,-\frac{\Tr\rb}{3}\right)
            &= \frac{1}{2}\left( \rb + \rb^T - (\Tr\rb)\IdS\right) - \frac{\Tr\rb}{3}\left( \normal\otimes\normal - \frac{1}{2}\IdS \right)\\
            &= \frac{1}{2}\left( \rb + \rb^T\right) - \frac{\Tr\rb}{3} \Id
             = \proj_{\tangentQR}\rb \formPeriod
    \end{align*}
\end{proof}

\begin{lemma}\label{lem:trace_of_some_Q_powers}
    For all Q-tensor fields $ \Qb\in\tangentQR $ holds
    \begin{align*}
        \Tr\Qb^4 
            &= \frac{1}{2}(\Tr\Qb^2)^2 \formComma\\
        \Tr\Qb^6
            &=  \frac{1}{4}(\Tr\Qb^2)^3 + \frac{1}{3}(\Tr\Qb^3)^2 \formComma\\
        \Tr\Qb^8
            &= \frac{1}{8}(\Tr\Qb^2)^4 + \frac{4}{9}(\Tr\Qb^2)(\Tr\Qb^3)^2 \formPeriod
    \end{align*}
\end{lemma}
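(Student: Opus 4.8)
The plan is to treat these as pointwise algebraic identities: at each event $\Qb$ is a symmetric, trace-free $3\times 3$ matrix with three real eigenvalues, so nothing about the surface geometry enters, and it suffices to establish the three identities for a single such matrix. Since every quantity appearing is polynomial in the invariants $\Tr\Qb^2$ and $\Tr\Qb^3$, the resulting relations then hold verbatim for the matrix field $\Qb$ on $\surf$. The key tool is the Cayley--Hamilton theorem together with the vanishing trace.

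First I would record the characteristic polynomial. For a $3\times 3$ matrix its coefficients are the elementary symmetric functions of the eigenvalues, and for $\Qb\in\tangentQR$ the condition $\Tr\Qb=0$ kills the quadratic coefficient while the linear coefficient reduces to $-\tfrac12\Tr\Qb^2$. Using the Newton identity $\Tr\Qb^3=3\det\Qb$ (again a consequence of $\Tr\Qb=0$) to rewrite the constant term, Cayley--Hamilton takes the form of the cubic reduction
\begin{align*}
    \Qb^3 = \tfrac{1}{2}(\Tr\Qb^2)\,\Qb + \tfrac{1}{3}(\Tr\Qb^3)\,\Id \formPeriod
\end{align*}
This single relation expresses every power $\Qb^k$ with $k\ge 3$ as a combination of $\Qb^2$, $\Qb$ and $\Id$, and is the engine of the whole proof.

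With this in hand the three identities follow by multiplication and tracing, using $\Tr\Id=3$, $\Tr\Qb=0$, and treating $\Tr\Qb^2,\Tr\Qb^3$ as the two independent invariants. Writing $\Qb^4=\Qb\cdot\Qb^3=\tfrac12(\Tr\Qb^2)\Qb^2+\tfrac13(\Tr\Qb^3)\Qb$ and taking the trace yields $\Tr\Qb^4=\tfrac12(\Tr\Qb^2)^2$ immediately. Squaring the reduction formula gives $\Qb^6=(\Qb^3)^2$, whose trace collects to $\tfrac14(\Tr\Qb^2)^3+\tfrac13(\Tr\Qb^3)^2$; and forming $\Qb^8=(\Qb^4)^2$, then tracing with the already-derived value of $\Tr\Qb^4$, produces the mixed terms $\tfrac13+\tfrac19=\tfrac49$ multiplying $(\Tr\Qb^2)(\Tr\Qb^3)^2$, as claimed.

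I expect no genuine obstacle beyond bookkeeping. The only point requiring care is the correct identification of the constant term of the characteristic polynomial as $\det\Qb=\tfrac13\Tr\Qb^3$; it is precisely this step that makes the final coefficients rational in the two invariants (with denominators $2$, $4$, $3$, $8$ and the factor $\tfrac49$) rather than involving $\det\Qb$ explicitly. Everything else is a short symmetric-function manipulation, so the proof is essentially the three trace computations organised around the cubic reduction.
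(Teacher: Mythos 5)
Your proof is correct, but it takes a genuinely different route from the one in the paper. The paper works directly with the eigenvalues, substituting $\lambda_3=-(\lambda_1+\lambda_2)$ and expanding $\Tr\Qb^n$, $(\Tr\Qb^2)^k$ and $(\Tr\Qb^3)^2$ into the symmetric monomials $a_{n,m}=\lambda_1^n\lambda_2^m+\lambda_2^n\lambda_1^m$, then matching coefficients; this is elementary and self-contained but is pure bookkeeping. You instead derive the Cayley--Hamilton reduction $\Qb^3=\tfrac12(\Tr\Qb^2)\Qb+\tfrac13(\Tr\Qb^3)\Id$ (using $\Tr\Qb=0$ and the Newton identity $\Tr\Qb^3=3\det\Qb$) and obtain all three trace identities by multiplying and tracing, with $\Tr\Id=3$ and $\Tr\Qb=0$ killing the cross terms; I checked the coefficient arithmetic ($\tfrac12\cdot\tfrac12=\tfrac14$ for $\Qb^6$, and $\tfrac13+\tfrac19=\tfrac49$ for $\Qb^8$) and it is right. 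Your approach is more structural: it yields the full matrix-valued reductions of $\Qb^4$, $\Qb^6$, $\Qb^8$ as combinations of $\Id$, $\Qb$, $\Qb^2$, not merely their traces, which would extend painlessly to higher powers and would even be reusable elsewhere in the paper (e.g.\ for the biaxiality polynomial $\mathfrak{B}(\Qb)$, which involves $\Qb^4$ directly). The paper's eigenvalue computation buys only the stated traces, at the cost of longer expansions, but avoids invoking Cayley--Hamilton and Newton's identities. The one step in your argument that deserves the explicit justification you flagged is indeed $\det\Qb=\tfrac13\Tr\Qb^3$; with $\Tr\Qb=0$ this follows at once from $p_3=e_1p_2-e_2p_1+3e_3$, so there is no gap.
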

\begin{proof}
    Let be $ \lambda_1,\lambda_2,\lambda_3\in\tangentS[^0] $ the eigenvalue fields of $ \Qb\in\tangentQR<\tangentSymR $, 
    where, \oeda, $\lambda_3 = -(\lambda_1+\lambda_2)$ holds, since $ \Tr\Qb=\sum_{\alpha=1}^{3} \lambda_\alpha =0 $ is valid.
    For a better clarity we collect eigenvalue monomials symmetrically by $ a_{n,m} := \lambda_1^n\lambda_2^m + \lambda_2^n\lambda_1^m $.
    The trace of $ n $th power $ \Tr\Qb^n = \lambda_1^n + \lambda_2^n + (-1)^n(\lambda_1+\lambda_2)^n $ results in
    \begin{align*}
        \Tr\Qb^2 &= 2a_{2,0} + a_{1,1} \formComma
            & (\Tr\Qb^2)^2 &= 4 a_{4,0} + 8a_{3,1} + 6a_{2,2} \formComma \\
        (\Tr\Qb^2)^3 &=  8a_{6,0} + 24a_{5,1}+48a_{4,2}+28a_{3,3} \formComma
            & (\Tr\Qb^2)^4 &= 16a_{8,0} + 64a_{7,1} + 160a_{6,2} + 256a_{5,3} + 152a_{4,4} \formComma\\
         \Tr\Qb^3 &= -3a_{2,1} \formComma
            & (\Tr\Qb^3)^2 &= 9a_{4,2} + 9a_{3,3} \formComma \\
         (\Tr\Qb^2)(\Tr\Qb^3)^2 &= 18a_{6,2} + 54a_{5,3} + 36a_{4,4} \formPeriod
    \end{align*}
    Moreover, we obtain
    \begin{align*}
        \Tr\Qb^4 
            &=  2 a_{4,0} + 4a_{3,1} + 3a_{2,2} 
             =  \frac{1}{2}\left(4 a_{4,0} + 8a_{3,1} + 6a_{2,2}\right) \formComma \\
        \Tr\Qb^6
            &= 2a_{6,0} + 6a_{5,1} + 15a_{4,2} + 10a_{3,3}
             = \frac{1}{4} \left( 8a_{6,0} + 24a_{5,1}+48a_{4,2}+28a_{3,3} \right) + \frac{1}{3}\left( 9a_{4,2} + 9a_{3,3} \right) \formComma\\
        \Tr\Qb^8
            &= 2a_{8,0} + 8a_{7,1} + 28a_{6,2} + 56a_{5,3} + 35a_{4,4}\\
            &= \frac{1}{8}\left( 16a_{8,0} + 64a_{7,1} + 160a_{6,2} + 256a_{5,3} + 152a_{4,4} \right) + \frac{4}{9}\left( 18a_{6,2} + 54a_{5,3} + 36a_{4,4} \right) \formComma
    \end{align*}
    which gives the assertions.
\end{proof}

\begin{lemma} \label{lem:tensor_divC_to_DivC}
    The relation between \mbox{$ \hil $-adjoint} and trace componentwise divergence is
    \begin{align*}
        \divC\Rb 
            &= \DivC\Rb + \meanc\Rb\normal \in\tangentR[^{(n-1)}]
    \end{align*}
    for all $ n $-tensor fields $ \Rb\in\tangentR[^n] $, where $ \divC=-\nablaC^{*} $ is the \mbox{$ \hil $-adjoint} 
    and $ \DivC = \Tr\circ\nablaC $ the trace componentwise divergence.  
\end{lemma}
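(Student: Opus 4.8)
The plan is to reduce the statement to the already known vector-field case by freezing all but the rearmost tensor slot and working in Cartesian proxies. First I would fix a Cartesian multi-index $A_1\cdots A_{n-1}$ and introduce the rearmost proxy vector field $\rb := R^{A_1\cdots A_{n-1}B}\eb_B \in \tangentR$, with orthogonal decomposition $\rb = \wb + \wnor\normal$, where $\wb = \IdS\rb \in \tangentS$ and $\wnor = \normal\rb = R^{A_1\cdots A_{n-1}B}\normalC_B \in \tangentS[^0]$. Since $\nablaC$ differentiates Cartesian components as scalar fields and the leading indices $A_1\cdots A_{n-1}$ are inert labels, the trace divergence factorizes slotwise: the contraction $\Tr\circ\nablaC$ couples the rearmost original slot to the derivative slot exactly as for a single vector field, so that $[\DivC\Rb]^{A_1\cdots A_{n-1}} = \DivC\rb$.

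Next I would handle $\divC = -\nablaC^{*}$ through its defining pairing. For a test field $\Sb \in \tangentR[^{n-1}]$ I would expand $\innerH{\tangentR[^n]}{\Rb, \nablaC\Sb}$ in Cartesian proxies. Because each component $S^{A_1\cdots A_{n-1}}$ is a scalar field, the derivative slot of $\nablaC\Sb$ carries $\nabla S^{A_1\cdots A_{n-1}} \in \tangentS$, which is tangential; hence only the tangential part $\wb$ of $\rb$ survives the contraction and one obtains $\int_{\surf} \wb\,\nabla S_{A_1\cdots A_{n-1}} \dS$ summed over the frozen indices. Intrinsic integration by parts on the boundaryless (or sufficiently periodic) surface, where the covariant divergence of a tangential field integrates to zero, then yields $-\int_{\surf} S_{A_1\cdots A_{n-1}}\,\div\wb \dS$, so that $[\divC\Rb]^{A_1\cdots A_{n-1}} = \div\wb = \divC\rb$.

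Finally I would invoke the vector-field relations $\divC\rb = \div\wb$ and $\DivC\rb = \div\wb - \wnor\meanc$ recorded in Section~\ref{sec:notation}. Subtracting gives $\divC\rb = \DivC\rb + \meanc\wnor$; since $\wnor = [\Rb\normal]^{A_1\cdots A_{n-1}}$, reassembling over all Cartesian multi-indices produces $\divC\Rb = \DivC\Rb + \meanc\,\Rb\normal \in \tangentR[^{(n-1)}]$, which is the assertion.

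The main obstacle I anticipate is the integration-by-parts step, where two observations must be made precise: that $\nabla S$ is tangential, so the normal part $\wnor\normal$ of the rearmost vector is invisible to the adjoint divergence, and that one should use the intrinsic covariant divergence theorem, which carries no curvature correction on a closed surface. The mean-curvature term is therefore produced entirely by the discrepancy between $\divC$, which only perceives the tangential part, and $\DivC$, which additionally records the normal bending encoded in $\nablaC$ — that is, by the vector-field identity itself rather than by the integration by parts.
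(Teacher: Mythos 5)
Your proof is correct, and it takes a genuinely different route from the paper's. The paper works directly on the full $n$-tensor: it pairs $\Rb$ against $\nablaC\Psib$ in mixed proxy notation, applies the product rule, and extracts the curvature term from the auxiliary identity $\partial_j(g^{ij}\partial_i\para)=\meanc\normal-\Gamma^j_{jk}g^{ik}\partial_i\para$ (Lemma \ref{lem:geo_qwertz01}, essentially $\Delta\para=\meanc\normal$) before invoking Gauss's theorem. You instead freeze the leading Cartesian multi-index, reduce both divergences slotwise to the vector-field case, and import the identities $\divC\Wb=\div\wb$ and $\DivC\Wb=\div\wb-\wnor\meanc$ stated in Section \ref{sec:notation}; the mean-curvature term then falls out of the difference of these two, with the adjoint computation only needing the elementary integration by parts $\int_\surf\nabla_{\wb}f\,\dS=-\int_\surf f\div\wb\,\dS$. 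Your reduction is shorter and makes transparent \emph{why} the correction is $\meanc\Rb\normal$ (the adjoint divergence is blind to the normal part of the rear slot, while the trace divergence records it through $\Tr(-\wnor\shop)=-\wnor\meanc$), but it is only as self-contained as the vector-field identities it cites, which the paper states without proof in the notation section; the paper's version re-derives that geometric content from scratch via Lemma \ref{lem:geo_qwertz01}. Both arguments require the same boundarylessness or periodicity hypothesis for the surface divergence theorem, which you correctly flag.
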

\begin{proof}
    Let be $ \Psib\in\tangentR[^{(n-1)}] $ and $ \Psib\vdotsn[n-1]\Rb := \Psi_{A_{1}\ldots A_{n-1}} R^{A_{1} \ldots A_{n}}\eb_{A_{n}}\in\tangentR $ 
    the left-sided contraction of $ \Psib $ with $ \Rb $.
    Lemma \ref{lem:geo_qwertz01} and the (covariant) divergence 
    $ \div\wb = -\nabla^{*}\wb = \Tr\nabla\wb = \partial_i w^i + \Gamma_{ik}^{i}w^k\in\tangentS[^0] $ for all tangential vector fields $ \wb\in\tangentS $
    yield
    \begin{align*}
        \inner{\tangentR[^n]}{ \Rb, \nablaC\Psib}
            &= R_{A_{1}\ldots A_{n}} \left( \partial_j \Psi^{A_{1}\ldots A_{n-1}} \right) g^{ij} \partial_i \paraC^{A_{n}} \\
            &=   \partial_j  \left( R_{A_{1}\ldots A_{n}}   \Psi^{A_{1}\ldots A_{n-1}} g^{ij} \partial_i \paraC^{A_{n}}  \right)
                            - R_{A_{1}\ldots A_{n}}   \Psi^{A_{1}\ldots A_{n-1}} \partial_j  \left( g^{ij} \partial_i \paraC^{A_{n}}  \right)\\
            &\quad   - \left( \partial_j R_{A_{1}\ldots A_{n}} \right) \Psi^{A_{1}\ldots A_{n-1}} g^{ij} \partial_i \paraC^{A_{n}}\\
            &= \partial_j \left[ \projS( \Psib\vdotsn[n-1]\Rb ) \right]^j + \Gamma_{jk}^j\left[ \projS( \Psib\vdotsn[n-1]\Rb ) \right]^k
               - \meanc \Psib\vdotsn[n-1]\Rb\normal - \inner{\tangentR[^{(n-1)}]}{\DivC\Rb, \Psib }\\
            &=  \div\projS( \Psib\vdotsn[n-1]\Rb )  - \inner{\tangentR[^{(n-1)}]}{\DivC\Rb + \meanc\Rb\normal, \Psib } \formPeriod
    \end{align*} 
    Eventually, Gauss's theorem and convenient boundary conditions reveal
    \begin{align*}
        \innerH{\tangentR[^n]}{ \Rb, \nablaC\Psib}
            &= - \innerH{\tangentR[^{(n-1)}]}{\DivC\Rb + \meanc\Rb\normal, \Psib }
             = \innerH{\tangentR[^{(n-1)}]}{\nablaC^{*}\Rb , \Psib }
    \end{align*}
    for all $ \Psib $.
\end{proof}

\begin{lemma}\label{lem:thinfilmgradient}
    Let be $ \hat{\Rb}\in\bulktangentR[^n]{\surf_h} $ a smooth extension of the $ n $-tensor field $ \Rb\in\tangentR[^n] $, 
    defined in a thin film $ \surf_h $ of thickness $ h>0 $ around a surface $ \surf $ without boundaries,
    \ie\ it holds $ \hat{\Rb}\vert_{\surf}=\Rb $ and $ \partial\surf=\emptyset $.
    If $\hat{\Rb}$ follows the homogeneous Neumann boundary condition, \ie\ it is $\nablahat_{\hat{\normal}}\hat{\Rb}\vert_{\partial\surf_h} = 0$,
    then the usual $ \R^3 $--derivative $ \nablahat $ yields
    \begin{align*}
        (\nablahat\hat{\Rb})\vert_{\surf} 
            &= \nablaC\Rb +  \landau(h^2) \formPeriod
    \end{align*} 
\end{lemma}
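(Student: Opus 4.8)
The plan is to split the Euclidean gradient $\nablahat$ into its tangential and normal parts relative to $\surf$ and to observe that only the normal part can deviate from $\nablaC\Rb$. Using $\Id=\IdS+\normal\otimes\normal$ on the gradient slot of $\nablahat\hat{\Rb}$, I would write, on $\surf$,
\begin{align*}
    (\nablahat\hat{\Rb})\vert_{\surf}
        = (\IdS\nablahat)\hat{\Rb}\vert_{\surf} + (\nablahat_{\normal}\hat{\Rb})\vert_{\surf}\otimes\normal \formPeriod
\end{align*}
By the very definition of $\nablaC$ as the tangentially projected embedding-space derivative, together with its stated independence of the chosen normal extension, the first summand equals $\nablaC\Rb$ exactly. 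Hence the whole assertion collapses to the single estimate $(\nablahat_{\normal}\hat{\Rb})\vert_{\surf}=\landau(h^2)$ for the normal directional derivative.

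To control this normal part I would foliate $\surf_h$ by the normal lines $\xi\mapsto\para+\xi\normal$, with $\xi\in[-\tfrac{h}{2},\tfrac{h}{2}]$, which is admissible for $h$ small relative to the curvature of $\surf$. Along the fiber through a fixed point of $\surf$, set $g(\xi):=\hat{\Rb}(\para+\xi\normal)$, so that $g'(\xi)$ is precisely the normal derivative $\nablahat_{\normal}\hat{\Rb}$ at height $\xi$ and $g'(0)=(\nablahat_{\normal}\hat{\Rb})\vert_{\surf}$. Because $\partial\surf=\emptyset$, the boundary $\partial\surf_h$ consists only of the two offset surfaces $\{\xi=\pm\tfrac{h}{2}\}$, whose outward unit normals are $\hat{\normal}=\pm\normal$ respectively. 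The homogeneous Neumann condition $\nablahat_{\hat{\normal}}\hat{\Rb}\vert_{\partial\surf_h}=0$ therefore becomes the pair of tensor-valued endpoint conditions $g'(\tfrac{h}{2})=0$ and $g'(-\tfrac{h}{2})=0$.

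The conclusion then follows from the symmetric placement of $\surf$ as the midsurface of the film. Taylor expanding $g'$ about $\xi=0$ and adding the two endpoint conditions,
\begin{align*}
    0 = g'(\tfrac{h}{2}) + g'(-\tfrac{h}{2}) = 2g'(0) + \landau(h^2) \formComma
\end{align*}
since the contributions odd in $\xi$ cancel; hence $g'(0)=\landau(h^2)$, which is exactly the required bound on the normal part and completes the proof.

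The main obstacle is geometric bookkeeping rather than analysis: one must justify the normal-line foliation for small $h$, verify that the level sets $\{\xi=\pm\tfrac{h}{2}\}$ carry outward unit normals exactly $\pm\normal$ so that the Neumann data reduce to $g'(\pm\tfrac{h}{2})=0$, and crucially exploit that $\surf$ sits \emph{symmetrically} inside $\surf_h$. This centredness is what makes the $\landau(h)$ term cancel upon adding the two endpoint conditions; for an off-centre surface only an $\landau(h)$ estimate would survive and the claimed $\landau(h^2)$ order would fail.
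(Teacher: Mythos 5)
Your proof is correct, and it is more self-contained than the paper's. Both arguments share the same skeleton: decompose $(\nablahat\hat{\Rb})\vert_{\surf}$ via $\Id=\IdS+\normal\otimes\normal$ in the gradient slot, identify the tangential part with $\nablaC\Rb$ exactly (this is immediate from the definition of $\nablaC$ and its independence of the normal extension), and reduce everything to the estimate $(\nablahat_{\normal}\hat{\Rb})\vert_{\surf}=\landau(h^2)$. The difference lies in how that estimate is obtained: the paper simply imports it from Lemma A.7 of \cite{Nitschke_2018}, whereas you derive it from first principles by foliating $\surf_h$ with the normal fibers $\xi\mapsto\para+\xi\normal$, translating the Neumann data into the endpoint conditions $g'(\pm\tfrac{h}{2})=\nullb$, and cancelling the odd-order term in the Taylor expansion of $g'$ about the midsurface. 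Your route makes transparent \emph{why} the order is $h^2$ rather than $h$ — it is precisely the symmetric placement of $\surf$ inside the film, a point the paper's citation hides — at the cost of the geometric bookkeeping you already flag (validity of the normal foliation for $h$ small compared to curvature, the fact that the offset surfaces $\{\xi=\pm\tfrac{h}{2}\}$ have unit normal $\pm\normal(\para)$ along each fiber, and uniform boundedness of the higher $\xi$-derivatives of $\hat{\Rb}$ so that the Taylor remainder is genuinely $\landau(h^2)$). The only caveat worth keeping in mind is that last uniformity assumption: if $\hat{\Rb}$ were allowed to vary with $h$, the $\landau(h^2)$ remainder would require second $\xi$-derivatives bounded independently of $h$; under the lemma's hypothesis of a fixed smooth extension this is automatic, so there is no gap.
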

\begin{proof}
 Following \cite[Lemma A.7.]{Nitschke_2018} the Neumann boundary condition yields $\nablahat_{\hat{\normal}}\hat{\Rb}\vert_{\surf} = \landau(h^2)$.
 Therefore, decomposing the $ \R^3 $--derivative into a tangential and normal part gives
 \begin{align*}
    (\nablahat\hat{\Rb})\vert_{\surf}
        &=  (\nablahat\hat{\Rb})\vert_{\surf}\left( \IdS + \normal\otimes\normal \right)
        = \nablaC\Rb + ( \nablahat_{\hat{\normal}}\hat{\Rb})\vert_{\surf} \otimes\normal
        = \nablaC\Rb +  \landau(h^2) \formPeriod
 \end{align*}
\end{proof}

\noindent
{\textbf{Funding:}} A.V. was supported by DFG through FOR3013. \\
 
\noindent
{\textbf{Competing interests:}} There are no competing interests. \\
 
\noindent
{\textbf{Authors' contributions:}} This project was conceived by I.N. and A.V.; I.N. derived the theory, the results were analysed by I.N. and A.V. and the main text was written by I.N. and A.V..

\bibliography{bib}

%% BioMed_Central_Bib_Style_v1.01

\begin{thebibliography}{85}
% BibTex style file: bmc-mathphys.bst (version 2.1), 2014-07-24
\ifx \bisbn   \undefined \def \bisbn  #1{ISBN #1}\fi
\ifx \binits  \undefined \def \binits#1{#1}\fi
\ifx \bauthor  \undefined \def \bauthor#1{#1}\fi
\ifx \batitle  \undefined \def \batitle#1{#1}\fi
\ifx \bjtitle  \undefined \def \bjtitle#1{#1}\fi
\ifx \bvolume  \undefined \def \bvolume#1{\textbf{#1}}\fi
\ifx \byear  \undefined \def \byear#1{#1}\fi
\ifx \bissue  \undefined \def \bissue#1{#1}\fi
\ifx \bfpage  \undefined \def \bfpage#1{#1}\fi
\ifx \blpage  \undefined \def \blpage #1{#1}\fi
\ifx \burl  \undefined \def \burl#1{\textsf{#1}}\fi
\ifx \doiurl  \undefined \def \doiurl#1{\url{https://doi.org/#1}}\fi
\ifx \betal  \undefined \def \betal{\textit{et al.}}\fi
\ifx \binstitute  \undefined \def \binstitute#1{#1}\fi
\ifx \binstitutionaled  \undefined \def \binstitutionaled#1{#1}\fi
\ifx \bctitle  \undefined \def \bctitle#1{#1}\fi
\ifx \beditor  \undefined \def \beditor#1{#1}\fi
\ifx \bpublisher  \undefined \def \bpublisher#1{#1}\fi
\ifx \bbtitle  \undefined \def \bbtitle#1{#1}\fi
\ifx \bedition  \undefined \def \bedition#1{#1}\fi
\ifx \bseriesno  \undefined \def \bseriesno#1{#1}\fi
\ifx \blocation  \undefined \def \blocation#1{#1}\fi
\ifx \bsertitle  \undefined \def \bsertitle#1{#1}\fi
\ifx \bsnm \undefined \def \bsnm#1{#1}\fi
\ifx \bsuffix \undefined \def \bsuffix#1{#1}\fi
\ifx \bparticle \undefined \def \bparticle#1{#1}\fi
\ifx \barticle \undefined \def \barticle#1{#1}\fi
\bibcommenthead
\ifx \bconfdate \undefined \def \bconfdate #1{#1}\fi
\ifx \botherref \undefined \def \botherref #1{#1}\fi
\ifx \url \undefined \def \url#1{\textsf{#1}}\fi
\ifx \bchapter \undefined \def \bchapter#1{#1}\fi
\ifx \bbook \undefined \def \bbook#1{#1}\fi
\ifx \bcomment \undefined \def \bcomment#1{#1}\fi
\ifx \oauthor \undefined \def \oauthor#1{#1}\fi
\ifx \citeauthoryear \undefined \def \citeauthoryear#1{#1}\fi
\ifx \endbibitem  \undefined \def \endbibitem {}\fi
\ifx \bconflocation  \undefined \def \bconflocation#1{#1}\fi
\ifx \arxivurl  \undefined \def \arxivurl#1{\textsf{#1}}\fi
\csname PreBibitemsHook\endcsname

%%% 1
\bibitem{de_Gennes_book}
\begin{bbook}
\bauthor{\bparticle{de} \bsnm{Gennes}, \binits{P.G.}},
\bauthor{\bsnm{Prost}, \binits{J.}}:
\bbtitle{The Physics of Liquid Crystals}.
\bpublisher{Second Edition, Clarendon Press},
\blocation{Oxford}
(\byear{1993})
\end{bbook}
\endbibitem

%%% 2
\bibitem{sonnet2004continuum}
\begin{barticle}
\bauthor{\bsnm{Sonnet}, \binits{A.M.}},
\bauthor{\bsnm{Maffettone}, \binits{P.L.}},
\bauthor{\bsnm{Virga}, \binits{E.G.}}:
\batitle{Continuum theory for nematic liquid crystals with tensorial order}.
\bjtitle{Journal of Non-Newtonian Fluid Mechanics}
\bvolume{119}(\bissue{1-3}),
\bfpage{51}--\blpage{59}
(\byear{2004}).
\doiurl{10.1016/j.jnnfm.2003.02.001}
\end{barticle}
\endbibitem

%%% 3
\bibitem{majumdar2010landau}
\begin{barticle}
\bauthor{\bsnm{Majumdar}, \binits{A.}},
\bauthor{\bsnm{Zarnescu}, \binits{A.}}:
\batitle{{L}andau{\textendash}de {G}ennes theory of nematic liquid crystals:
  the {O}seen{\textendash}{F}rank limit and beyond}.
\bjtitle{Archive for Rational Mechanics and Analysis}
\bvolume{196}(\bissue{1}),
\bfpage{227}--\blpage{280}
(\byear{2009}).
\doiurl{10.1007/s00205-009-0249-2}
\end{barticle}
\endbibitem

%%% 4
\bibitem{ball2017mathematics}
\begin{barticle}
\bauthor{\bsnm{Ball}, \binits{J.M.}}:
\batitle{Mathematics and liquid crystals}.
\bjtitle{Molecular Crystals and Liquid Crystals}
\bvolume{647}(\bissue{1}),
\bfpage{1}--\blpage{27}
(\byear{2017}).
\doiurl{10.1080/15421406.2017.1289425}
\end{barticle}
\endbibitem

%%% 5
\bibitem{BORTHAGARAY2021313}
\begin{bchapter}
\bauthor{\bsnm{Borthagaray}, \binits{J.P.}},
\bauthor{\bsnm{Walker}, \binits{S.W.}}:
\bctitle{Chapter 5 - {T}he {Q}-tensor model with uniaxial constraint}.
In: \beditor{\bsnm{Bonito}, \binits{A.}},
\beditor{\bsnm{Nochetto}, \binits{R.H.}} (eds.)
\bbtitle{Geometric Partial Differential Equations - Part {II}}.
\bsertitle{Handbook of Numerical Analysis},
vol. \bseriesno{22},
pp. \bfpage{313}--\blpage{382}.
\bpublisher{Elsevier},
\blocation{Amsterdam}
(\byear{2021}).
\doiurl{10.1016/bs.hna.2020.09.001}
\end{bchapter}
\endbibitem

%%% 6
\bibitem{Beris_1994}
\begin{bbook}
\bauthor{\bsnm{Beris}, \binits{A.N.}},
\bauthor{\bsnm{Edwards}, \binits{B.J.}}:
\bbtitle{Thermodynamics of Flowing Systems: with Internal Microstructure}.
\bpublisher{Oxford University Press},
\blocation{Oxford}
(\byear{1994}).
\doiurl{10.1093/oso/9780195076943.001.0001}
\end{bbook}
\endbibitem

%%% 7
\bibitem{paicu2011energy}
\begin{barticle}
\bauthor{\bsnm{Paicu}, \binits{M.}},
\bauthor{\bsnm{Zarnescu}, \binits{A.}}:
\batitle{Global existence and regularity for the full coupled {Navier–Stokes
  and Q-Tensor} system}.
\bjtitle{{SIAM} Journal on Mathematical Analysis}
\bvolume{43}(\bissue{5}),
\bfpage{2009}--\blpage{2049}
(\byear{2011}).
\doiurl{10.1137/10079224x}
\end{barticle}
\endbibitem

%%% 8
\bibitem{paicu2012energy}
\begin{barticle}
\bauthor{\bsnm{Paicu}, \binits{M.}},
\bauthor{\bsnm{Zarnescu}, \binits{A.}}:
\batitle{Energy dissipation and regularity for a coupled {Navier--Stokes and
  Q-tensor} system}.
\bjtitle{Archive for Rational Mechanics and Analysis}
\bvolume{203}(\bissue{1}),
\bfpage{45}--\blpage{67}
(\byear{2011}).
\doiurl{10.1007/s00205-011-0443-x}
\end{barticle}
\endbibitem

%%% 9
\bibitem{Abels_SIAMJMA_2014}
\begin{barticle}
\bauthor{\bsnm{Abels}, \binits{H.}},
\bauthor{\bsnm{Dolzmann}, \binits{G.}},
\bauthor{\bsnm{Liu}, \binits{Y.}}:
\batitle{Well-posedness of a fully coupled {N}avier--{S}tokes/{Q}-tensor system
  with inhomogeneous boundary data}.
\bjtitle{{SIAM} Journal on Mathematical Analysis}
\bvolume{46}(\bissue{4}),
\bfpage{3050}--\blpage{3077}
(\byear{2014}).
\doiurl{10.1137/130945405}
\end{barticle}
\endbibitem

%%% 10
\bibitem{Abels_2016}
\begin{botherref}
\oauthor{\bsnm{Abels}, \binits{H.}},
\oauthor{\bsnm{Dolzmann}, \binits{G.}},
\oauthor{\bsnm{Liu}, \binits{Y.}}:
Strong solutions for the {B}eris-{E}dwards model for nematic liquid crystals
  with homogeneous {D}irichlet boundary conditions.
Advances in Differential Equations
\textbf{21}(1/2)
(2016).
\doiurl{10.57262/ade/1448323166}
\end{botherref}
\endbibitem

%%% 11
\bibitem{Shin_PRL_2008}
\begin{barticle}
\bauthor{\bsnm{Shin}, \binits{H.}},
\bauthor{\bsnm{Bowick}, \binits{M.J.}},
\bauthor{\bsnm{Xing}, \binits{X.}}:
\batitle{Topological defects in spherical nematics}.
\bjtitle{Physical Review Letters}
\bvolume{101}(\bissue{3}),
\bfpage{037802}
(\byear{2008}).
\doiurl{10.1103/physrevlett.101.037802}
\end{barticle}
\endbibitem

%%% 12
\bibitem{Golovaty_JNS_2015}
\begin{barticle}
\bauthor{\bsnm{Golovaty}, \binits{D.}},
\bauthor{\bsnm{Montero}, \binits{J.A.}},
\bauthor{\bsnm{Sternberg}, \binits{P.}}:
\batitle{Dimension reduction for the {Landau-de Gennes} model in planar nematic
  thin films}.
\bjtitle{Journal of Nonlinear Science}
\bvolume{25}(\bissue{6}),
\bfpage{1431}--\blpage{1451}
(\byear{2015}).
\doiurl{10.1007/s00332-015-9264-7}
\end{barticle}
\endbibitem

%%% 13
\bibitem{Golovaty_JNS_2017}
\begin{barticle}
\bauthor{\bsnm{Golovaty}, \binits{D.}},
\bauthor{\bsnm{Montero}, \binits{J.A.}},
\bauthor{\bsnm{Sternberg}, \binits{P.}}:
\batitle{Dimension reduction for the {Landau-de Gennes} model on curved nematic
  thin films}.
\bjtitle{Journal of Nonlinear Science}
\bvolume{27}(\bissue{6}),
\bfpage{1905}--\blpage{1932}
(\byear{2017}).
\doiurl{10.1007/s00332-017-9390-5}
\end{barticle}
\endbibitem

%%% 14
\bibitem{Nitschke_2018}
\begin{barticle}
\bauthor{\bsnm{Nitschke}, \binits{I.}},
\bauthor{\bsnm{Nestler}, \binits{M.}},
\bauthor{\bsnm{Praetorius}, \binits{S.}},
\bauthor{\bsnm{Löwen}, \binits{H.}},
\bauthor{\bsnm{Voigt}, \binits{A.}}:
\batitle{Nematic liquid crystals on curved surfaces: a thin film limit}.
\bjtitle{Proceedings of the Royal Society A: Mathematical, Physical and
  Engineering Sciences}
\bvolume{474}(\bissue{2214}),
\bfpage{20170686}
(\byear{2018}).
\doiurl{10.1098/rspa.2017.0686}
\end{barticle}
\endbibitem

%%% 15
\bibitem{Nestler_2020}
\begin{barticle}
\bauthor{\bsnm{Nestler}, \binits{M.}},
\bauthor{\bsnm{Nitschke}, \binits{I.}},
\bauthor{\bsnm{Löwen}, \binits{H.}},
\bauthor{\bsnm{Voigt}, \binits{A.}}:
\batitle{Properties of surface {L}andau-de {G}ennes {Q}-tensor models}.
\bjtitle{Soft Matter}
\bvolume{16}(\bissue{16}),
\bfpage{4032}--\blpage{4042}
(\byear{2020}).
\doiurl{10.1039/c9sm02475a}
\end{barticle}
\endbibitem

%%% 16
\bibitem{Napoli_PRE_2012}
\begin{barticle}
\bauthor{\bsnm{Napoli}, \binits{G.}},
\bauthor{\bsnm{Vergori}, \binits{L.}}:
\batitle{Surface free energies for nematic shells}.
\bjtitle{Physical Review E}
\bvolume{85}(\bissue{6}),
\bfpage{061701}
(\byear{2012}).
\doiurl{10.1103/physreve.85.061701}
\end{barticle}
\endbibitem

%%% 17
\bibitem{Napoli_PRL_2012}
\begin{barticle}
\bauthor{\bsnm{Napoli}, \binits{G.}},
\bauthor{\bsnm{Vergori}, \binits{L.}}:
\batitle{Extrinsic curvature effects on nematic shells}.
\bjtitle{Physical Review Letters}
\bvolume{108}(\bissue{20}),
\bfpage{207803}
(\byear{2012}).
\doiurl{10.1103/physrevlett.108.207803}
\end{barticle}
\endbibitem

%%% 18
\bibitem{Novack_SIAMJMA_2018}
\begin{barticle}
\bauthor{\bsnm{Novack}, \binits{M.R.}}:
\batitle{Dimension reduction for the {Landau--de Gennes} model: {T}he vanishing
  nematic correlation length limit}.
\bjtitle{{SIAM} Journal on Mathematical Analysis}
\bvolume{50}(\bissue{6}),
\bfpage{6007}--\blpage{6048}
(\byear{2018}).
\doiurl{10.1137/18m1165189}
\end{barticle}
\endbibitem

%%% 19
\bibitem{BouckNochettoYushutin_2022}
\begin{barticle}
\bauthor{\bsnm{Bouck}, \binits{L.}},
\bauthor{\bsnm{Nochetto}, \binits{R.H.}},
\bauthor{\bsnm{Yushutin}, \binits{V.}}:
\batitle{A hydrodynamical model of nematic liquid crystal films with a general
  state of orientational order}.
\bjtitle{Journal of Nonlinear Science}
\bvolume{34}(\bissue{1}),
\bfpage{5}
(\byear{2023}).
\doiurl{10.1007/s00332-023-09970-6}
\end{barticle}
\endbibitem

%%% 20
\bibitem{Pearce_PRL_2019}
\begin{barticle}
\bauthor{\bsnm{Pearce}, \binits{D.J.G.}},
\bauthor{\bsnm{Ellis}, \binits{P.W.}},
\bauthor{\bsnm{Fernandez-Nieves}, \binits{A.}},
\bauthor{\bsnm{Giomi}, \binits{L.}}:
\batitle{Geometrical control of active turbulence in curved topographies}.
\bjtitle{Physical Review Letters}
\bvolume{122}(\bissue{16}),
\bfpage{168002}
(\byear{2019}).
\doiurl{10.1103/physrevlett.122.168002}
\end{barticle}
\endbibitem

%%% 21
\bibitem{Nestler_2022}
\begin{barticle}
\bauthor{\bsnm{Nestler}, \binits{M.}},
\bauthor{\bsnm{Voigt}, \binits{A.}}:
\batitle{Active nematodynamics on curved surfaces {\textendash} the influence
  of geometric forces on motion patterns of topological defects}.
\bjtitle{Communications in Computational Physics}
\bvolume{31}(\bissue{3}),
\bfpage{947}--\blpage{965}
(\byear{2022}).
\doiurl{10.4208/cicp.oa-2021-0206}
\end{barticle}
\endbibitem

%%% 22
\bibitem{Reuther_2015}
\begin{barticle}
\bauthor{\bsnm{Reuther}, \binits{S.}},
\bauthor{\bsnm{Voigt}, \binits{A.}}:
\batitle{The interplay of curvature and vortices in flow on curved surfaces}.
\bjtitle{Multiscale Modeling {\&} Simulation}
\bvolume{13}(\bissue{2}),
\bfpage{632}--\blpage{643}
(\byear{2015}).
\doiurl{10.1137/140971798}
\end{barticle}
\endbibitem

%%% 23
\bibitem{Koba_2017}
\begin{barticle}
\bauthor{\bsnm{Koba}, \binits{H.}},
\bauthor{\bsnm{Liu}, \binits{C.}},
\bauthor{\bsnm{Giga}, \binits{Y.}}:
\batitle{Energetic variational approaches for incompressible fluid systems on
  an evolving surface}.
\bjtitle{Quarterly of Applied Mathematics}
\bvolume{75}(\bissue{2}),
\bfpage{359}--\blpage{389}
(\byear{2016}).
\doiurl{10.1090/qam/1452}
\end{barticle}
\endbibitem

%%% 24
\bibitem{Reuther_MMS_2018}
\begin{barticle}
\bauthor{\bsnm{Reuther}, \binits{S.}},
\bauthor{\bsnm{Voigt}, \binits{A.}}:
\batitle{Erratum: The interplay of curvature and vortices in flow on curved
  surfaces}.
\bjtitle{Multiscale Modeling {\&} Simulation}
\bvolume{16}(\bissue{3}),
\bfpage{1448}--\blpage{1453}
(\byear{2018}).
\doiurl{10.1137/18m1176464}
\end{barticle}
\endbibitem

%%% 25
\bibitem{Koba_2018}
\begin{barticle}
\bauthor{\bsnm{Koba}, \binits{H.}},
\bauthor{\bsnm{Liu}, \binits{C.}},
\bauthor{\bsnm{Giga}, \binits{Y.}}:
\batitle{Errata to {\textquotedblleft}energetic variational approaches for
  incompressible fluid systems on an evolving surface{\textquotedblright}}.
\bjtitle{Quarterly of Applied Mathematics}
\bvolume{76}(\bissue{1}),
\bfpage{147}--\blpage{152}
(\byear{2017}).
\doiurl{10.1090/qam/1482}
\end{barticle}
\endbibitem

%%% 26
\bibitem{Miura_2018}
\begin{barticle}
\bauthor{\bsnm{Miura}, \binits{T.-H.}}:
\batitle{On singular limit equations for incompressible fluids in moving thin
  domains}.
\bjtitle{Quarterly of Applied Mathematics}
\bvolume{76}(\bissue{2}),
\bfpage{215}--\blpage{251}
(\byear{2017}).
\doiurl{10.1090/qam/1495}
\end{barticle}
\endbibitem

%%% 27
\bibitem{Jankuhn_2018}
\begin{barticle}
\bauthor{\bsnm{Jankuhn}, \binits{T.}},
\bauthor{\bsnm{Olshanskii}, \binits{M.}},
\bauthor{\bsnm{Reusken}, \binits{A.}}:
\batitle{Incompressible fluid problems on embedded surfaces: Modeling and
  variational formulations}.
\bjtitle{Interfaces and Free Boundaries}
\bvolume{20}(\bissue{3}),
\bfpage{353}--\blpage{377}
(\byear{2018}).
\doiurl{10.4171/ifb/405}
\end{barticle}
\endbibitem

%%% 28
\bibitem{Reuther_PF_2018}
\begin{barticle}
\bauthor{\bsnm{Reuther}, \binits{S.}},
\bauthor{\bsnm{Voigt}, \binits{A.}}:
\batitle{Solving the incompressible surface {N}avier-{S}tokes equation by
  surface finite elements}.
\bjtitle{Physics of Fluids}
\bvolume{30}(\bissue{1}),
\bfpage{012107}
(\byear{2018}).
\doiurl{10.1063/1.5005142}
\end{barticle}
\endbibitem

%%% 29
\bibitem{Arroyo_2009}
\begin{botherref}
\oauthor{\bsnm{Arroyo}, \binits{M.}},
\oauthor{\bsnm{DeSimone}, \binits{A.}}:
Relaxation dynamics of fluid membranes.
Physical Review E
\textbf{79}(3)
(2009).
\doiurl{10.1103/physreve.79.031915}
\end{botherref}
\endbibitem

%%% 30
\bibitem{Torres-Sanchez_2019}
\begin{barticle}
\bauthor{\bsnm{Torres-S{\'{a}}nchez}, \binits{A.}},
\bauthor{\bsnm{Mill{\'{a}}n}, \binits{D.}},
\bauthor{\bsnm{Arroyo}, \binits{M.}}:
\batitle{Modelling fluid deformable surfaces with an emphasis on biological
  interfaces}.
\bjtitle{Journal of Fluid Mechanics}
\bvolume{872},
\bfpage{218}--\blpage{271}
(\byear{2019}).
\doiurl{10.1017/jfm.2019.341}
\end{barticle}
\endbibitem

%%% 31
\bibitem{Reuther_2020}
\begin{botherref}
\oauthor{\bsnm{Reuther}, \binits{S.}},
\oauthor{\bsnm{Nitschke}, \binits{I.}},
\oauthor{\bsnm{Voigt}, \binits{A.}}:
A numerical approach for fluid deformable surfaces.
Journal of Fluid Mechanics
\textbf{900}
(2020).
\doiurl{10.1017/jfm.2020.564}
\end{botherref}
\endbibitem

%%% 32
\bibitem{Krause_2023}
\begin{barticle}
\bauthor{\bsnm{Krause}, \binits{V.}},
\bauthor{\bsnm{Voigt}, \binits{A.}}:
\batitle{A numerical approach for fluid deformable surfaces with conserved
  enclosed volume}.
\bjtitle{Journal of Computational Physics}
\bvolume{486},
\bfpage{112097}
(\byear{2023}).
\doiurl{10.1016/j.jcp.2023.112097}
\end{barticle}
\endbibitem

%%% 33
\bibitem{Helfrich}
\begin{barticle}
\bauthor{\bsnm{Helfrich}, \binits{W.}}:
\batitle{Elastic properties of lipid bilayers: Theory and possible
  experiments}.
\bjtitle{Zeitschrift für Naturforschung C}
\bvolume{28}(\bissue{11-12}),
\bfpage{693}--\blpage{703}
(\byear{1973}).
\doiurl{10.1515/znc-1973-11-1209}
\end{barticle}
\endbibitem

%%% 34
\bibitem{Nitschke_2020}
\begin{botherref}
\oauthor{\bsnm{Nitschke}, \binits{I.}},
\oauthor{\bsnm{Reuther}, \binits{S.}},
\oauthor{\bsnm{Voigt}, \binits{A.}}:
Liquid crystals on deformable surfaces.
Proceedings of the Royal Society A: Mathematical, Physical and Engineering
  Sciences
\textbf{476}(2241)
(2020).
\doiurl{10.1098/rspa.2020.0313}
\end{botherref}
\endbibitem

%%% 35
\bibitem{Park_EPL_1992}
\begin{barticle}
\bauthor{\bsnm{Park}, \binits{J.}},
\bauthor{\bsnm{Lubensky}, \binits{T.C.}},
\bauthor{\bsnm{MacKintosh}, \binits{F.C.}}:
\batitle{n-atic order and continuous shape changes of deformable surfaces of
  genus zero}.
\bjtitle{Europhysics Letters ({EPL})}
\bvolume{20}(\bissue{3}),
\bfpage{279}--\blpage{284}
(\byear{1992}).
\doiurl{10.1209/0295-5075/20/3/015}
\end{barticle}
\endbibitem

%%% 36
\bibitem{NitschkeVoigt_JoGaP_2022}
\begin{barticle}
\bauthor{\bsnm{Nitschke}, \binits{I.}},
\bauthor{\bsnm{Voigt}, \binits{A.}}:
\batitle{Observer-invariant time derivatives on moving surfaces}.
\bjtitle{Journal of Geometry and Physics}
\bvolume{173},
\bfpage{104428}
(\byear{2022}).
\doiurl{10.1016/j.geomphys.2021.104428}
\end{barticle}
\endbibitem

%%% 37
\bibitem{NitschkeVoigt_2023}
\begin{barticle}
\bauthor{\bsnm{Nitschke}, \binits{I.}},
\bauthor{\bsnm{Voigt}, \binits{A.}}:
\batitle{Tensorial time derivatives on moving surfaces: {G}eneral concepts and
  a specific application for surface {Landau-de Gennes} models}.
\bjtitle{Journal of Geometry and Physics}
\bvolume{194},
\bfpage{105002}
(\byear{2023}).
\doiurl{10.1016/j.geomphys.2023.105002}
\end{barticle}
\endbibitem

%%% 38
\bibitem{NitschkeSadikVoigt_A_2022}
\begin{botherref}
\oauthor{\bsnm{Nitschke}, \binits{I.}},
\oauthor{\bsnm{Sadik}, \binits{S.}},
\oauthor{\bsnm{Voigt}, \binits{A.}}:
Tangential Tensor Fields on Deformable Surfaces -- {H}ow to Derive Consistent
  {$L^2$}-Gradient Flows.
arXiv
(2022).
\doiurl{10.48550/arXiv.2209.13272}
\end{botherref}
\endbibitem

%%% 39
\bibitem{maroudas2021topological}
\begin{barticle}
\bauthor{\bsnm{Maroudas-Sacks}, \binits{Y.}},
\bauthor{\bsnm{Garion}, \binits{L.}},
\bauthor{\bsnm{Shani-Zerbib}, \binits{L.}},
\bauthor{\bsnm{Livshits}, \binits{A.}},
\bauthor{\bsnm{Braun}, \binits{E.}},
\bauthor{\bsnm{Keren}, \binits{K.}}:
\batitle{Topological defects in the nematic order of actin fibres as
  organization centres of hydra morphogenesis}.
\bjtitle{Nature Physics}
\bvolume{17}(\bissue{2}),
\bfpage{251}--\blpage{259}
(\byear{2020}).
\doiurl{10.1038/s41567-020-01083-1}
\end{barticle}
\endbibitem

%%% 40
\bibitem{hoffmann2022theory}
\begin{barticle}
\bauthor{\bsnm{Hoffmann}, \binits{L.A.}},
\bauthor{\bsnm{Carenza}, \binits{L.N.}},
\bauthor{\bsnm{Eckert}, \binits{J.}},
\bauthor{\bsnm{Giomi}, \binits{L.}}:
\batitle{Theory of defect-mediated morphogenesis}.
\bjtitle{Science Advances}
\bvolume{8}(\bissue{15}),
\bfpage{2712}
(\byear{2022}).
\doiurl{10.1126/sciadv.abk2712}
\end{barticle}
\endbibitem

%%% 41
\bibitem{vafa2022active}
\begin{barticle}
\bauthor{\bsnm{Vafa}, \binits{F.}},
\bauthor{\bsnm{Mahadevan}, \binits{L.}}:
\batitle{Active nematic defects and epithelial morphogenesis}.
\bjtitle{Physical Review Letters}
\bvolume{129}(\bissue{9}),
\bfpage{098102}
(\byear{2022}).
\doiurl{10.1103/physrevlett.129.098102}
\end{barticle}
\endbibitem

%%% 42
\bibitem{wang2023patterning}
\begin{barticle}
\bauthor{\bsnm{Wang}, \binits{Z.}},
\bauthor{\bsnm{Marchetti}, \binits{M.C.}},
\bauthor{\bsnm{Brauns}, \binits{F.}}:
\batitle{Patterning of morphogenetic anisotropy fields}.
\bjtitle{Proceedings of the National Academy of Sciences}
\bvolume{120}(\bissue{13}),
\bfpage{2220167120}
(\byear{2023}).
\doiurl{10.1073/pnas.2220167120}
\end{barticle}
\endbibitem

%%% 43
\bibitem{mietke2019self}
\begin{barticle}
\bauthor{\bsnm{Mietke}, \binits{A.}},
\bauthor{\bsnm{Jülicher}, \binits{F.}},
\bauthor{\bsnm{Sbalzarini}, \binits{I.F.}}:
\batitle{Self-organized shape dynamics of active surfaces}.
\bjtitle{Proceedings of the National Academy of Sciences}
\bvolume{116}(\bissue{1}),
\bfpage{29}--\blpage{34}
(\byear{2018}).
\doiurl{10.1073/pnas.1810896115}
\end{barticle}
\endbibitem

%%% 44
\bibitem{Juelicher_2018}
\begin{barticle}
\bauthor{\bsnm{Jülicher}, \binits{F.}},
\bauthor{\bsnm{Grill}, \binits{S.W.}},
\bauthor{\bsnm{Salbreux}, \binits{G.}}:
\batitle{Hydrodynamic theory of active matter}.
\bjtitle{Reports on Progress in Physics}
\bvolume{81}(\bissue{7}),
\bfpage{076601}
(\byear{2018}).
\doiurl{10.1088/1361-6633/aab6bb}
\end{barticle}
\endbibitem

%%% 45
\bibitem{salbreux2022theory}
\begin{barticle}
\bauthor{\bsnm{Salbreux}, \binits{G.}},
\bauthor{\bsnm{Jülicher}, \binits{F.}},
\bauthor{\bsnm{Prost}, \binits{J.}},
\bauthor{\bsnm{Callan-Jones}, \binits{A.}}:
\batitle{Theory of nematic and polar active fluid surfaces}.
\bjtitle{Physical Review Research}
\bvolume{4}(\bissue{3}),
\bfpage{033158}
(\byear{2022}).
\doiurl{10.1103/physrevresearch.4.033158}
\end{barticle}
\endbibitem

%%% 46
\bibitem{al2023morphodynamics}
\begin{barticle}
\bauthor{\bsnm{Al-Izzi}, \binits{S.C.}},
\bauthor{\bsnm{Morris}, \binits{R.G.}}:
\batitle{Morphodynamics of active nematic fluid surfaces}.
\bjtitle{Journal of Fluid Mechanics}
\bvolume{957},
\bfpage{4}
(\byear{2023}).
\doiurl{10.1017/jfm.2023.18}
\end{barticle}
\endbibitem

%%% 47
\bibitem{Krause_PAMM_2023}
\begin{barticle}
\bauthor{\bsnm{Krause}, \binits{V.}},
\bauthor{\bsnm{Kunze}, \binits{E.}},
\bauthor{\bsnm{Voigt}, \binits{A.}}:
\batitle{A surface finite element method for the {Navier–Stokes} equations on
  evolving surfaces}.
\bjtitle{{PAMM}}
\bvolume{23}(\bissue{3}),
\bfpage{202300014}
(\byear{2023}).
\doiurl{10.1002/pamm.202300014}
\end{barticle}
\endbibitem

%%% 48
\bibitem{BachiniKrauseNitschkeVoigt_2023}
\begin{botherref}
\oauthor{\bsnm{Bachini}, \binits{E.}},
\oauthor{\bsnm{Krause}, \binits{V.}},
\oauthor{\bsnm{Nitschke}, \binits{I.}},
\oauthor{\bsnm{Voigt}, \binits{A.}}:
Derivation and simulation of a two-phase fluid deformable surface model.
Journal of Fluid Mechanics
(2023)
\end{botherref}
\endbibitem

%%% 49
\bibitem{AbrahamMarsdenRatiu_2012}
\begin{bbook}
\bauthor{\bsnm{Abraham}, \binits{R.}},
\bauthor{\bsnm{Marsden}, \binits{J.E.}},
\bauthor{\bsnm{Ratiu}, \binits{T.}}:
\bbtitle{Manifolds, Tensor Analysis, and Applications}.
\bpublisher{Springer},
\blocation{New York}
(\byear{1988}).
\doiurl{10.1007/978-1-4612-1029-0}
\end{bbook}
\endbibitem

%%% 50
\bibitem{MottramNewton_2014}
\begin{botherref}
\oauthor{\bsnm{Mottram}, \binits{N.J.}},
\oauthor{\bsnm{Newton}, \binits{C.J.P.}}:
Introduction to {Q}-tensor theory.
arXiv
(2014).
\doiurl{10.48550/ARXIV.1409.3542}
\end{botherref}
\endbibitem

%%% 51
\bibitem{Yavari_2016}
\begin{barticle}
\bauthor{\bsnm{Yavari}, \binits{A.}},
\bauthor{\bsnm{Ozakin}, \binits{A.}},
\bauthor{\bsnm{Sadik}, \binits{S.}}:
\batitle{Nonlinear elasticity in a deforming ambient space}.
\bjtitle{Journal of Nonlinear Science}
\bvolume{26}(\bissue{6}),
\bfpage{1651}--\blpage{1692}
(\byear{2016}).
\doiurl{10.1007/s00332-016-9315-8}
\end{barticle}
\endbibitem

%%% 52
\bibitem{Jankuhn_JNM_2021}
\begin{barticle}
\bauthor{\bsnm{Jankuhn}, \binits{T.}},
\bauthor{\bsnm{Olshanskii}, \binits{M.A.}},
\bauthor{\bsnm{Reusken}, \binits{A.}},
\bauthor{\bsnm{Zhiliakov}, \binits{A.}}:
\batitle{Error analysis of higher order trace finite element methods for the
  surface {S}tokes equation}.
\bjtitle{Journal of Numerical Mathematics}
\bvolume{29}(\bissue{3}),
\bfpage{245}--\blpage{267}
(\byear{2021}).
\doiurl{10.1515/jnma-2020-0017}
\end{barticle}
\endbibitem

%%% 53
\bibitem{Brandner_SIAMJSC_2022}
\begin{barticle}
\bauthor{\bsnm{Brandner}, \binits{P.}},
\bauthor{\bsnm{Jankuhn}, \binits{T.}},
\bauthor{\bsnm{Praetorius}, \binits{S.}},
\bauthor{\bsnm{Reusken}, \binits{A.}},
\bauthor{\bsnm{Voigt}, \binits{A.}}:
\batitle{Finite element discretization methods for velocity-pressure and stream
  function formulations of surface {S}tokes equations}.
\bjtitle{{SIAM} Journal on Scientific Computing}
\bvolume{44}(\bissue{4}),
\bfpage{1807}--\blpage{1832}
(\byear{2022}).
\doiurl{10.1137/21m1403126}
\end{barticle}
\endbibitem

%%% 54
\bibitem{Hardering_2023}
\begin{botherref}
\oauthor{\bsnm{Hardering}, \binits{H.}},
\oauthor{\bsnm{Praetorius}, \binits{S.}}:
A Parametric Finite-Element Discretization of the Surface {S}tokes Equations.
arXiv
(2023).
\doiurl{10.48550/arXiv.2309.00931}
\end{botherref}
\endbibitem

%%% 55
\bibitem{Virga_2018}
\begin{bbook}
\bauthor{\bsnm{Virga}, \binits{E.G.}}:
\bbtitle{Variational Theories for Liquid Crystals}.
\bpublisher{Chapman and Hall/{CRC}},
\blocation{New York}
(\byear{2018}).
\doiurl{10.1201/9780203734421}
\end{bbook}
\endbibitem

%%% 56
\bibitem{Chen_2015}
\begin{barticle}
\bauthor{\bsnm{Chen}, \binits{X.}},
\bauthor{\bsnm{Xu}, \binits{X.}}:
\batitle{Existence and uniqueness of global classical solutions of a gradient
  flow of the {Landau-de Gennes} energy}.
\bjtitle{Proceedings of the American Mathematical Society}
\bvolume{144}(\bissue{3}),
\bfpage{1251}--\blpage{1263}
(\byear{2015}).
\doiurl{10.1090/proc/12803}
\end{barticle}
\endbibitem

%%% 57
\bibitem{Majumdar_2010}
\begin{barticle}
\bauthor{\bsnm{Majumdar}, \binits{A.}}:
\batitle{Equilibrium order parameters of nematic liquid crystals in the
  {Landau-de Gennes} theory}.
\bjtitle{European Journal of Applied Mathematics}
\bvolume{21}(\bissue{2}),
\bfpage{181}--\blpage{203}
(\byear{2010}).
\doiurl{10.1017/s0956792509990210}
\end{barticle}
\endbibitem

%%% 58
\bibitem{Kirr_2014}
\begin{barticle}
\bauthor{\bsnm{Kirr}, \binits{E.}},
\bauthor{\bsnm{Wilkinson}, \binits{M.}},
\bauthor{\bsnm{Zarnescu}, \binits{A.}}:
\batitle{Dynamic statistical scaling in the {Landau-de Gennes} theory of
  nematic liquid crystals}.
\bjtitle{Journal of Statistical Physics}
\bvolume{155}(\bissue{4}),
\bfpage{625}--\blpage{657}
(\byear{2014}).
\doiurl{10.1007/s10955-014-0970-6}
\end{barticle}
\endbibitem

%%% 59
\bibitem{Kralj_SM_2011}
\begin{barticle}
\bauthor{\bsnm{Kralj}, \binits{S.}},
\bauthor{\bsnm{Rosso}, \binits{R.}},
\bauthor{\bsnm{Virga}, \binits{E.G.}}:
\batitle{Curvature control of valence on nematic shells}.
\bjtitle{Soft Matter}
\bvolume{7}(\bissue{2}),
\bfpage{670}--\blpage{683}
(\byear{2011}).
\doiurl{10.1039/c0sm00378f}
\end{barticle}
\endbibitem

%%% 60
\bibitem{Jesenek_SM_2015}
\begin{barticle}
\bauthor{\bsnm{Jesenek}, \binits{D.}},
\bauthor{\bsnm{Kralj}, \binits{S.}},
\bauthor{\bsnm{Rosso}, \binits{R.}},
\bauthor{\bsnm{Virga}, \binits{E.G.}}:
\batitle{Defect unbinding on a toroidal nematic shell}.
\bjtitle{Soft Matter}
\bvolume{11}(\bissue{12}),
\bfpage{2434}--\blpage{2444}
(\byear{2015}).
\doiurl{10.1039/c4sm02540g}
\end{barticle}
\endbibitem

%%% 61
\bibitem{PhysRevE.63.056702}
\begin{barticle}
\bauthor{\bsnm{Denniston}, \binits{C.}},
\bauthor{\bsnm{Orlandini}, \binits{E.}},
\bauthor{\bsnm{Yeomans}, \binits{J.M.}}:
\batitle{Lattice {B}oltzmann simulations of liquid crystal hydrodynamics}.
\bjtitle{Physical Review E}
\bvolume{63}(\bissue{5}),
\bfpage{056702}
(\byear{2001}).
\doiurl{10.1103/physreve.63.056702}
\end{barticle}
\endbibitem

%%% 62
\bibitem{PhysRevE.67.051705}
\begin{barticle}
\bauthor{\bsnm{T{\'{o}}th}, \binits{G.}},
\bauthor{\bsnm{Denniston}, \binits{C.}},
\bauthor{\bsnm{Yeomans}, \binits{J.M.}}:
\batitle{Hydrodynamics of domain growth in nematic liquid crystals}.
\bjtitle{Physical Review E}
\bvolume{67}(\bissue{5}),
\bfpage{051705}
(\byear{2003}).
\doiurl{10.1103/physreve.67.051705}
\end{barticle}
\endbibitem

%%% 63
\bibitem{Wang_SIAMJMA_2015}
\begin{barticle}
\bauthor{\bsnm{Wang}, \binits{W.}},
\bauthor{\bsnm{Zhang}, \binits{P.}},
\bauthor{\bsnm{Zhang}, \binits{Z.}}:
\batitle{Rigorous derivation from {Landau-de Gennes} theory to
  {Ericksen--Leslie} theory}.
\bjtitle{{SIAM} Journal on Mathematical Analysis}
\bvolume{47}(\bissue{1}),
\bfpage{127}--\blpage{158}
(\byear{2015}).
\doiurl{10.1137/13093529x}
\end{barticle}
\endbibitem

%%% 64
\bibitem{GUILLENGONZALEZ201584}
\begin{barticle}
\bauthor{\bsnm{Guill{\'{e}}n-Gonz{\'{a}}lez}, \binits{F.}},
\bauthor{\bsnm{Rodr{\'{\i}}guez-Bellido}, \binits{M.{\'{A}}.}}:
\batitle{Weak solutions for an initial{\textendash}boundary {Q}-tensor problem
  related to liquid crystals}.
\bjtitle{Nonlinear Analysis: Theory, Methods \& Applications}
\bvolume{112},
\bfpage{84}--\blpage{104}
(\byear{2015}).
\doiurl{10.1016/j.na.2014.09.011}
\end{barticle}
\endbibitem

%%% 65
\bibitem{Doi_2011}
\begin{barticle}
\bauthor{\bsnm{Doi}, \binits{M.}}:
\batitle{Onsager's variational principle in soft matter}.
\bjtitle{Journal of Physics: Condensed Matter}
\bvolume{23}(\bissue{28}),
\bfpage{284118}
(\byear{2011}).
\doiurl{10.1088/0953-8984/23/28/284118}
\end{barticle}
\endbibitem

%%% 66
\bibitem{Marsden_2010}
\begin{bbook}
\bauthor{\bsnm{Marsden}, \binits{J.E.}},
\bauthor{\bsnm{Ratiu}, \binits{T.S.}}:
\bbtitle{Introduction to Mechanics and Symmetry},
\bedition{2}nd edn.
\bpublisher{Springer},
\blocation{New York}
(\byear{2010}).
\doiurl{10.1007/978-0-387-21792-5}
\end{bbook}
\endbibitem

%%% 67
\bibitem{Schiele_1983}
\begin{barticle}
\bauthor{\bsnm{Schiele}, \binits{K.}},
\bauthor{\bsnm{Trimper}, \binits{S.}}:
\batitle{On the elastic constants of a nematic liquid crystal}.
\bjtitle{physica status solidi (b)}
\bvolume{118}(\bissue{1}),
\bfpage{267}--\blpage{274}
(\byear{1983}).
\doiurl{10.1002/pssb.2221180132}
\end{barticle}
\endbibitem

%%% 68
\bibitem{Berreman_1984}
\begin{barticle}
\bauthor{\bsnm{Berreman}, \binits{D.W.}},
\bauthor{\bsnm{Meiboom}, \binits{S.}}:
\batitle{Tensor representation of {Oseen-Frank} strain energy in uniaxial
  cholesterics}.
\bjtitle{Physical Review A}
\bvolume{30}(\bissue{4}),
\bfpage{1955}--\blpage{1959}
(\byear{1984}).
\doiurl{10.1103/physreva.30.1955}
\end{barticle}
\endbibitem

%%% 69
\bibitem{Mori_1999}
\begin{barticle}
\bauthor{\bsnm{Mori}, \binits{H.}},
\bauthor{\bsnm{Eugene C.~Gartland}, \binits{J.}},
\bauthor{\bsnm{Kelly}, \binits{J.R.}},
\bauthor{\bsnm{Bos}, \binits{P.J.}}:
\batitle{Multidimensional director modeling using the {Q} tensor representation
  in a liquid crystal cell and its application to the $\pi$ cell with patterned
  electrodes}.
\bjtitle{Japanese Journal of Applied Physics}
\bvolume{38}(\bissue{1R}),
\bfpage{135}
(\byear{1999}).
\doiurl{10.1143/jjap.38.135}
\end{barticle}
\endbibitem

%%% 70
\bibitem{Zhong_can_1989}
\begin{barticle}
\bauthor{\bsnm{Zhong-can}, \binits{O.-Y.}},
\bauthor{\bsnm{Helfrich}, \binits{W.}}:
\batitle{Bending energy of vesicle membranes: General expressions for the
  first, second, and third variation of the shape energy and applications to
  spheres and cylinders}.
\bjtitle{Physical Review A}
\bvolume{39}(\bissue{10}),
\bfpage{5280}--\blpage{5288}
(\byear{1989}).
\doiurl{10.1103/physreva.39.5280}
\end{barticle}
\endbibitem

%%% 71
\bibitem{Barrett_SIAMJSC_2008}
\begin{barticle}
\bauthor{\bsnm{Barrett}, \binits{J.W.}},
\bauthor{\bsnm{Garcke}, \binits{H.}},
\bauthor{\bsnm{Nürnberg}, \binits{R.}}:
\batitle{Parametric approximation of willmore flow and related geometric
  evolution equations}.
\bjtitle{{SIAM} Journal on Scientific Computing}
\bvolume{31}(\bissue{1}),
\bfpage{225}--\blpage{253}
(\byear{2008}).
\doiurl{10.1137/070700231}
\end{barticle}
\endbibitem

%%% 72
\bibitem{Dougan_ESAIM_2012}
\begin{barticle}
\bauthor{\bsnm{Do{\u{g}}an}, \binits{G.}},
\bauthor{\bsnm{Nochetto}, \binits{R.H.}}:
\batitle{First variation of the general curvature-dependent surface energy}.
\bjtitle{{ESAIM}: Mathematical Modelling and Numerical Analysis}
\bvolume{46}(\bissue{1}),
\bfpage{59}--\blpage{79}
(\byear{2011}).
\doiurl{10.1051/m2an/2011019}
\end{barticle}
\endbibitem

%%% 73
\bibitem{Hauser2013}
\begin{barticle}
\bauthor{\bsnm{Hau{\ss}er}, \binits{F.}},
\bauthor{\bsnm{Marth}, \binits{W.}},
\bauthor{\bsnm{Li}, \binits{S.}},
\bauthor{\bsnm{Lowengrub}, \binits{J.}},
\bauthor{\bsnm{R{\"a}tz}, \binits{A.}},
\bauthor{\bsnm{Voigt}, \binits{A.}}:
\batitle{Thermodynamically consistent models for two-component vesicles}.
\bjtitle{Int. J. Biomath. Biostat.}
\bvolume{2},
\bfpage{19}--\blpage{48}
(\byear{2013})
\end{barticle}
\endbibitem

%%% 74
\bibitem{Tu_2014}
\begin{barticle}
\bauthor{\bsnm{Tu}, \binits{Z.C.}},
\bauthor{\bsnm{Ou-Yang}, \binits{Z.C.}}:
\batitle{Recent theoretical advances in elasticity of membranes following
  helfrich{\textquotesingle}s spontaneous curvature model}.
\bjtitle{Advances in Colloid and Interface Science}
\bvolume{208},
\bfpage{66}--\blpage{75}
(\byear{2014}).
\doiurl{10.1016/j.cis.2014.01.008}
\end{barticle}
\endbibitem

%%% 75
\bibitem{Guckenberger_2017}
\begin{barticle}
\bauthor{\bsnm{Guckenberger}, \binits{A.}},
\bauthor{\bsnm{Gekle}, \binits{S.}}:
\batitle{Theory and algorithms to compute helfrich bending forces: a review}.
\bjtitle{Journal of Physics: Condensed Matter}
\bvolume{29}(\bissue{20}),
\bfpage{203001}
(\byear{2017}).
\doiurl{10.1088/1361-648x/aa6313}
\end{barticle}
\endbibitem

%%% 76
\bibitem{Kaiser_1992}
\begin{botherref}
\oauthor{\bsnm{Kaiser}, \binits{P.}},
\oauthor{\bsnm{Wiese}, \binits{W.}},
\oauthor{\bsnm{Hess}, \binits{S.}}:
Stability and instability of an uniaxial alignment against biaxial distortions
  in the isotropic and nematic phases of liquid crystals.
Journal of Non-Equilibrium Thermodynamics
\textbf{17}(2)
(1992).
\doiurl{10.1515/jnet.1992.17.2.153}
\end{botherref}
\endbibitem

%%% 77
\bibitem{Morris_2022}
\begin{botherref}
\oauthor{\bsnm{Al-Izzi}, \binits{S.C.}},
\oauthor{\bsnm{Morris}, \binits{R.G.}}:
Morphodynamics of active nematic fluid surfaces.
Journal of Fluid Mechanics
\textbf{957}
(2023).
\doiurl{10.1017/jfm.2023.18}
\end{botherref}
\endbibitem

%%% 78
\bibitem{Kruse_2005}
\begin{barticle}
\bauthor{\bsnm{Kruse}, \binits{K.}},
\bauthor{\bsnm{Joanny}, \binits{J.F.}},
\bauthor{\bsnm{Jülicher}, \binits{F.}},
\bauthor{\bsnm{Prost}, \binits{J.}},
\bauthor{\bsnm{Sekimoto}, \binits{K.}}:
\batitle{Generic theory of active polar gels: a paradigm for cytoskeletal
  dynamics}.
\bjtitle{The European Physical Journal E}
\bvolume{16}(\bissue{1}),
\bfpage{5}--\blpage{16}
(\byear{2005}).
\doiurl{10.1140/epje/e2005-00002-5}
\end{barticle}
\endbibitem

%%% 79
\bibitem{ericksen1961conservation}
\begin{barticle}
\bauthor{\bsnm{Ericksen}, \binits{J.L.}}:
\batitle{Conservation laws for liquid crystals}.
\bjtitle{Transactions of the Society of Rheology}
\bvolume{5}(\bissue{1}),
\bfpage{23}--\blpage{34}
(\byear{1961}).
\doiurl{10.1122/1.548883}
\end{barticle}
\endbibitem

%%% 80
\bibitem{Leslie_1968}
\begin{barticle}
\bauthor{\bsnm{Leslie}, \binits{F.M.}}:
\batitle{Some constitutive equations for liquid crystals}.
\bjtitle{Archive for Rational Mechanics and Analysis}
\bvolume{28}(\bissue{4}),
\bfpage{265}--\blpage{283}
(\byear{1968}).
\doiurl{10.1007/bf00251810}
\end{barticle}
\endbibitem

%%% 81
\bibitem{nitschke2019hydrodynamic}
\begin{barticle}
\bauthor{\bsnm{Nitschke}, \binits{I.}},
\bauthor{\bsnm{Reuther}, \binits{S.}},
\bauthor{\bsnm{Voigt}, \binits{A.}}:
\batitle{Hydrodynamic interactions in polar liquid crystals on evolving
  surfaces}.
\bjtitle{Physical Review Fluids}
\bvolume{4}(\bissue{4}),
\bfpage{044002}
(\byear{2019}).
\doiurl{10.1103/physrevfluids.4.044002}
\end{barticle}
\endbibitem

%%% 82
\bibitem{Leslie_1966}
\begin{barticle}
\bauthor{\bsnm{Leslie}, \binits{F.M.}}:
\batitle{Some constitutive equations for anisotropic fluids}.
\bjtitle{The Quarterly Journal of Mechanics and Applied Mathematics}
\bvolume{19}(\bissue{3}),
\bfpage{357}--\blpage{370}
(\byear{1966}).
\doiurl{10.1093/qjmam/19.3.357}
\end{barticle}
\endbibitem

%%% 83
\bibitem{Parodi_1970}
\begin{barticle}
\bauthor{\bsnm{Parodi}, \binits{O.}}:
\batitle{Stress tensor for a nematic liquid crystal}.
\bjtitle{Journal de Physique}
\bvolume{31}(\bissue{7}),
\bfpage{581}--\blpage{584}
(\byear{1970}).
\doiurl{10.1051/jphys:01970003107058100}
\end{barticle}
\endbibitem

%%% 84
\bibitem{Currie_1974}
\begin{barticle}
\bauthor{\bsnm{Currie}, \binits{P.K.}}:
\batitle{Parodi{\textquotesingle}s relation as a stability condition for
  nematics}.
\bjtitle{Molecular Crystals and Liquid Crystals}
\bvolume{28}(\bissue{3-4}),
\bfpage{335}--\blpage{338}
(\byear{1974}).
\doiurl{10.1080/15421407408082828}
\end{barticle}
\endbibitem

%%% 85
\bibitem{Nestler_2019}
\begin{barticle}
\bauthor{\bsnm{Nestler}, \binits{M.}},
\bauthor{\bsnm{Nitschke}, \binits{I.}},
\bauthor{\bsnm{Voigt}, \binits{A.}}:
\batitle{A finite element approach for vector- and tensor-valued surface
  {PDEs}}.
\bjtitle{Journal of Computational Physics}
\bvolume{389},
\bfpage{48}--\blpage{61}
(\byear{2019}).
\doiurl{10.1016/j.jcp.2019.03.006}
\end{barticle}
\endbibitem

\end{thebibliography}

\end{document}